\numberwithin{equation}{section}
\newtheorem{Theorem}{Theorem}[section]
\newtheorem*{Theorem*}{Theorem}
\newtheorem{Corollary}[Theorem]{Corollary}
\newtheorem{Lemma}[Theorem]{Lemma}
\theoremstyle{definition}
	\newtheorem{Definition}[Theorem]{Definition}
	\newtheorem{Example}[Theorem]{Example}
\newcommand{\II}{{\boldsymbol{1}}}
\newcommand{\CC}{{\mathbb C}}
\newcommand{\RR}{{\mathbb R}}
\newcommand{\NN}{{\mathbb N}}
\newcommand{\CoinX}[1]{C_0^\infty\big({#1}\big)}
\newcommand{\GoinX}[1]{\Gamma_0^\infty({#1})}
\newcommand{\fc}{\mathrm{fc}}
\newcommand{\pc}{\mathrm{pc}}
\newcommand{\sfc}{\mathrm{sfc}}
\newcommand{\spc}{\mathrm{spc}}
\newcommand{\DD}{{\mathscr D}}
\newcommand{\EE}{{\mathscr E}}
\newcommand{\LL}{{\mathcal L}}
\newcommand{\Ran}{\operatorname{Ran}}
\DeclareMathOperator{\carr}{carr}
\DeclareMathOperator{\supp}{supp}
\renewcommand{\Re}{\operatorname{Re}}
\newcommand{\ip}[2]{{\big\langle #1 \, \big|\, #2\big\rangle}}
\newcommand{\ket}[1]{{\vert #1\rangle}}
\newcommand{\bra}[1]{{\langle #1 \vert}}
\newcommand{\Sf}{{\mathscr S}}
\newcommand{\Uf}{{\mathscr U}}
\DeclareMathOperator{\Sol}{Sol}
\newcommand{\id}{{\rm id}}
\DeclareMathOperator{\inte}{int}
\newcommand{\loc}{{\rm loc}}
\newcommand{\raisemath}[1]{\mathpalette{\raisem@th{#1}}}
\newcommand{\raisem@th}[3]{\raisebox{#1}{$#2#3$}}
\newcommand{\adj}[1]{\leftidx{^{t}}{#1}{}}
\renewcommand{\theenumi}{(\alph{enumi})}
\newcommand{\dd}{\mathrm{d}}
\DeclareMathOperator{\Ind}{Index}
\newcommand{\Ps}{\mathcal{P}}
\begin{document}
\allowdisplaybreaks

\renewcommand{\thefootnote}{}

\newcommand{\arXivNumber}{2303.02993}

\renewcommand{\PaperNumber}{057}

\FirstPageHeading
	
\ShortArticleName{Modified Green-Hyperbolic Operators}
	
\ArticleName{Modified Green-Hyperbolic Operators\footnote{This paper is a~contribution to the Special Issue on Global Analysis on Manifolds in honor of Christian B\"ar for his 60th birthday. The~full collection is available at \href{https://www.emis.de/journals/SIGMA/Baer.html}{https://www.emis.de/journals/SIGMA/Baer.html}}}
	
\Author{Christopher J.~FEWSTER}
	
\AuthorNameForHeading{C.J.~Fewster}
	
\Address{Department of Mathematics,
		University of York, Heslington, York YO10 5DD, UK}
	\Email{\href{mailto:chris.fewster@york.ac.uk}{chris.fewster@york.ac.uk}}
	\URLaddress{\url{https://www.york.ac.uk/maths/people/chris-fewster/}}
	
\ArticleDates{Received April 20, 2023, in final form July 29, 2023; Published online August 08, 2023}
		
\Abstract{Green-hyperbolic operators -- partial differential operators on globally hyperbolic spacetimes that (together with their formal duals) possess advanced and retarded Green operators -- play an important role in many areas of mathematical physics. Here, we study modifications of Green-hyperbolic operators by the addition of a possibly nonlocal operator acting within a compact subset $K$ of spacetime, and seek corresponding `$K$-nonlocal' generalised Green operators. Assuming the modification depends holomorphically on a parameter, conditions are given under which $K$-nonlocal Green operators exist for all parameter values, with the possible exception of a discrete set. The exceptional points occur precisely where the modified operator admits nontrivial smooth homogeneous solutions that have past- or future-compact support. Fredholm theory is used to relate the dimensions of these spaces to those corresponding to the formal dual operator, switching the roles of future and past. The $K$-nonlocal Green operators are shown to depend holomorphically on the parameter in the topology of bounded convergence on maps between suitable Sobolev spaces, or between suitable spaces of smooth functions. An application to the LU factorisation of systems of equations is described.}
		
\Keywords{Green-hyperbolic operators; nonlocal equations; Fredholm theory}
		
\Classification{35R01; 35R09; 47B93}

\begin{flushright}
\begin{minipage}{65mm}
\em Dedicated to Christian B\"ar\\ on the occasion of his sixtieth birthday
\end{minipage}
\end{flushright}

\renewcommand{\thefootnote}{\arabic{footnote}}
\setcounter{footnote}{0}

 \section{Introduction}

 Linear partial differential operators are the workhorses of mathematical physics, providing the
 simplest models of classical and quantum field theories from which more complicated interacting models may be built. In general relativity or other nonlinear theories, linear operators appear whenever the theory is linearised, for example, to study the stability of solutions, or the propagation of gravitational waves.

 A particularly useful general class of operators acting between spaces of smooth sections of vector bundles over globally hyperbolic Lorentzian manifolds has been introduced by B\"ar~\cite{Baer:2015} under the name `Green-hyperbolic operators'. Green-hyperbolicity is a generalisation, rather than a specialisation, of hyperbolicity: a Green-hyperbolic operator need not be hyperbolic, and there are examples that are elliptic, or of indefinite type. The defining property of a Green-hyperbolic operator is that it should possess advanced and retarded Green operators, along with its formal dual, and from this simple algebraic requirement many other properties flow, as described elegantly by B\"ar. In particular, the Green operators are unique, continuous, and have extensions that are continuous inverses to
 (extensions of) the Green-hyperbolic operator on various spaces of smooth or distributional sections.

 Examples of Green-hyperbolic operators include (a) normally hyperbolic operators, such as the
 wave operator $\Box$ and its variants allowing for the inclusion of potentials and external vector potentials, (b) first order symmetric hyperbolic systems on
 globally hyperbolic spacetimes~\cite{Baer:2015}; (c)~any operator whose square is Green-hyperbolic, thus incorporating Dirac type operators; (d)~operators whose solution theory may be related to one of the above, such as the (non-hyperbolic) Proca operator $-\delta\dd+m^2$, whose Green operators are obtained from those of ${-(\delta\dd+\dd\delta)+m^2}$ on smooth $1$-form fields. See~\cite{Baer:2015} for these and other examples.

 The purpose of this paper is to study modifications of Green-hyperbolic operators, that can lead outside the class of partial differential operators. For simplicity, we mainly study the case of scalar operators but there is no hindrance (beyond those of notation!) to extending our results to the general bundle case. The operators we consider are of the form $P+A$, where $P$ is Green-hyperbolic and $A$ is a continuous linear self-map of $C^\infty(M)$ (not necessarily a differential operator) whose range is contained in $C^\infty_K(M)$, the smooth functions supported in a compact subset $K\subset M$. Without loss of generality, we may always assume that $K$ is \emph{topologically regular}, that is, equal to the closure of its interior.
 An operator of this type is potentially nonlocal, though the nonlocality is, as it were, localised within $K$. By the kernel theorem, any such operator can be represented as
 \begin{equation*}
 	A\phi = \int_M T(x,y)\phi(y)\mu_y,
 \end{equation*}
 where $\mu$ is a smooth density, and $T\in \DD'(M\times M)$ has support in $K\times M$ and is semi-regular in its first slot
 \big(i.e., $T$ belongs to the nuclear tensor product $C^\infty(M)\widehat{\otimes} \DD'(M)$\big). Operators of this type include, but go beyond, differential and pseudodifferential operators supported in $K$ (for which the singular support of $T$ is confined
 to the diagonal).

 There are several applications for operators of this type. In perturbative algebraic quantum field theory (pAQFT) they arise from the class of regular interactions, and the results proved here establish the existence of suitable Green operators needed in~\cite{HawRej:2020}, for example. Another application is to noncommutative potential scattering~\cite{LechnerVerch:2015}, where equations such as
 \begin{equation*}
 	P \phi + w\star \phi = 0
 \end{equation*}
 appear as toy models for the dynamics of classical and quantum fields on a noncommutative spacetime. Here $P$ is a Green-hyperbolic operator, $w$ is a fixed smooth function, and $\star$ is a noncommutative deformation
 of multiplication, differing only from pointwise multiplication inside a compact set $K$~\cite{LechnerWaldmann:2016}. The application of the results obtained here to such models will be discussed elsewhere~\cite{FewVer:20xxb}. See also~\cite{DappiaggiFinster:2020} for a slightly different context in which
 nonlocal equations have appeared recently.

 One may in fact develop an entire theory of nonlocal Green-hyperbolic operators and this is done in the companion paper~\cite{FewVer:20xx}. The purpose of this paper is to investigate the technical issue of the existence and properties of (a suitably generalised concept of) Green operators $E^\pm_{P+A(\lambda)}$ for $P+ A(\lambda)$ under suitable conditions on $P$ and $A(\lambda)$ for $\lambda\in\CC$. A particular focus will be on the analytic dependence of the resulting Green operators on $\lambda$ within suitable locally convex spaces and on a suitable domain in $\CC$. These results are useful even in situations where the operators $P+A(\lambda)$ are local Green-hyperbolic operators. For example, they have been applied in~\cite{FewsterJubbRuep:2022} in the context of measurement schemes for observables in QFT. The Green operators we study have properties similar to those of Green-hyperbolic operators, with the following generalised support property:
 \begin{equation*} \supp E_{P+A(\lambda)}^{\pm} f \subset
 	\begin{cases} J^{\pm}(\supp f), & J^\pm(\supp f)\cap K=\varnothing,\\
 		J^{\pm}(\supp f\cup K), & \text{otherwise},
 	\end{cases}
 \end{equation*}
 for compactly supported $f$, where $J^{+/-}(S)$ are the causal future/past of a set $S$. This support property characterises
 what we call \emph{$K$-nonlocal Green operators}, set out precisely in Definition~\ref{def:KnonlocalGops} below.

 The main result of this paper is Theorem~\ref{thm:GHpert}, which sets out conditions
 on $P$ and $A(\lambda)$ under which suitable Green operators for $P+A(\lambda)$ exist.
 The principal hypothesis on $P$ is that it is a Green-hyperbolic operator whose Green operators
 have extensions to continuous maps between the Sobolev spaces $H_0^s(M)$ and $H^{s+\beta}_\loc(M)$ for all sufficiently large $s$ and some fixed $\beta$. This hypothesis is valid for second order normally hyperbolic operators with $\beta=1$ (see~\cite[Theorem 6.5.3]{DuiHoer_FIOii:1972}). The main hypothesis on $A(\lambda)$ is that each $A(\lambda)$ is a continuous linear self-map of $C^\infty(M)$ with continuous extensions mapping between Sobolev spaces $H^{s}_\loc(M)$ to~$H^{s+\gamma}_K(M)$ for all sufficiently large $s$ and some fixed $\gamma>-\beta$. This implies that the compositions $A(\lambda)E^\pm$
 are compact maps between $H^s_0(M)$ and $H^s_K(M)$ and it is required that they depend holomorphically on $\lambda$ in the topology of bounded convergence on linear maps between these spaces. A brief summary of the various topological spaces and topologies used in this work is provided in Appendix~\ref{appx:topspaces}.

 Given the above assumptions, the analytic Fredholm theorem can be used to
 find inverses $\big(I+A(\lambda)E^\pm\big)^{-1}$ on $H^s_K(M)$ for all sufficiently large $s\in\RR$ and all complex $\lambda$ in an open neighbourhood of zero,
 whose (possibly empty) complement is a discrete subset $S$ of $\CC$. Exceptional values $\lambda\in S$ occur precisely when there exist nontrivial \emph{smooth} solutions to
 $(P+A(\lambda))\phi = 0$,
 whose support is either past- or future-compact, i.e., nontrivial solutions that vanish identically
 at early or late times, representing spontaneously appearing or disappearing disturbances; they are clearly excluded if any sort of energy estimate is available. For $\lambda\in\CC\setminus S$, one may use the inverses to construct $K$-nonlocal Green operators for $P+A(\lambda)$. It is also proved that the resulting $K$-nonlocal Green operators are holomorphic on $\CC\setminus S$, with respect to the topology of bounded convergence on linear maps between $C_0^\infty(M)$ and $C^\infty(M)$.
 The power series expansion of the Green operators about $\lambda=0$ corresponds to a Born expansion of the Green operators. We also show how classical M{\o}ller operators~\cite{DuetschFredenhagen:2003,HawRej:2020} can be constructed to relate the free and interacting dynamics.

 In the situation where our main hypotheses are also satisfied for the formal duals of $P$ and~$A(\lambda)$, we apply Fredholm index theory to show that the dimension of the space of spontaneously appearing (resp., disappearing) solutions for $P+A(\lambda)$ is equal to the dimension of the space of spontaneously disappearing (resp., appearing) solutions for its formal dual,
 \begin{equation}\label{eq:Fred}
 	\dim\ker (P+A(\lambda))|_{C^\infty_{\pc/\fc}} = \dim\ker \big(\adj{P}+\adj{A}(\lambda)\big)|_{C^\infty_{\fc/\pc}},
 \end{equation}
 for all $\lambda\in \CC$. Here, the subscripts pc and fc denote past-compact and future-compact support, respectively. Consequently, the spaces of appearing and disappearing solutions of a formally self-dual operator have equal dimension.

 We mention that Dappiaggi and Finster~\cite{DappiaggiFinster:2020} have recently studied
 nonlocal equations arising from `causal variational principles'. These differ from the equations we consider because the nonlocality need not be confined to a compact spacetime region. Their existence proofs are also different: they assert conditions under which a variant of the classical energy estimates can be formulated and then follow essentially the classical arguments used to establish existence for Cauchy problems of normally hyperbolic partial differential equations.

 The present paper is structured as follows: in Section~\ref{sec:GHOs}, we recall the definition and main properties of Green-hyperbolic operators and develop the appropriate notion of $K$-nonlocal Green operators. Section~\ref{sec:main} contains the statement of our main result, Theorem~\ref{thm:GHpert} and illustrates it with examples of how it may be used and of the necessity of some of its hypotheses. Section~\ref{sec:LU} provides an application of our result to the LU factorisation and solution of certain systems of nonlocal equations. The main result is proved in Section~\ref{sec:proof}, by a sequence of results initially in Sobolev spaces and then for smooth functions, while~\eqref{eq:Fred} is proved in Section~\ref{sec:Fredholm} using Fredholm theory and the results of Section~\ref{sec:proof}. Appendix~\ref{appx:topspaces} collects some necessary background on the topological spaces and topologies appearing in the text.

 \section{Green-hyperbolic operators}\label{sec:GHOs}

{\bf Preliminaries.}
 We begin by recalling the general setting of Green-hyperbolic operators~\cite{Baer:2015}. Let~$M$ be a smooth finite-dimensional manifold, allowing the possibility that $M$ has finitely many connected components with possibly different dimensions, and let $g$ be a smooth Lorentzian metric on $M$ of signature $+--\cdots$. With these structures, $M$ is automatically Hausdorff and paracompact~\cite{Geroch_spinorI:1968}.
 We assume that $(M,g)$ is time-orientable and that a time-orientation has been chosen. To minimise notation, we denote the Lorentzian spacetime formed by the manifold, metric and time orientation with the single symbol $M$. The volume measure induced by the metric will be denoted $\mu$. On other points of notation,
 the symbol $\subset$ will always allow for the possibility of equality, while $\NN_0$ and $\NN$ denote the natural numbers with or without zero, respectively.

 As usual, the causal future/past of a point $x\in M$ is denoted $J^\pm(x)$ and comprises all points (including $x$) that may be reached from $x$ along smooth future/past-directed curves. (Throughout this paper, we tacitly order alternatives labelled by $\pm$ or $\mp$ so that the alternative labelled by the upper symbol comes first.)
 If $S\subset M$ then one writes $J^\pm(S)=\cup_{x\in S} J^\pm(x)$, and $J(S)=J^+(S)\cup J^-(S)$. The spacetime is globally hyperbolic if it contains no causal curves and $J^+(K)\cap J^-(K)$ is compact for all compact sets $K\subset M$~\cite{Bernal:2006xf}. Globally hyperbolic spacetimes can be foliated into smooth spacelike Cauchy surfaces~\cite{Bernal:2004gm}; it is also the case that $J^\pm(K)$ are closed whenever $K$ is compact. We adopt the following terminology: $S\subset M$ is
 \emph{spacelike compact} if $S$ is closed and $S\subset J(K)$ for some compact $K$; $S$ is \emph{future/past-compact} if $S\cap J^{\pm}(x)$ is compact for all $x\in M$; $S$ is \emph{strictly future/past-compact} if
 $S\subset J^{\mp}(K)$ for some compact $K$.

 If $B$ is a vector bundle with finite-dimensional fibres, then $\Gamma^\infty(B)$ will denote the corresponding space of smooth sections and $\Gamma_{0/\pc/\fc/\spc/\sfc/\mathrm{sc}}^\infty(B)$ will be the spaces of smooth sections with com\-pact/past-compact/future-compact/strictly past-compact/strictly future-compact/space\-li\-ke-compact support. The space of smooth sections with support contained in some a closed subset $A\subset M$ is denoted $\Gamma^\infty_A(M)$. Further details on the topologies of these spaces are summarised in Appendix~\ref{appx:topspaces}. We will only consider bundles with finite-dimensional fibres that are (without any real loss) vector spaces over $\CC$. The bilinear (not sesquilinear!) pairing between sections of dual bundle $B^*$ and sections of $B$ is denoted with angle brackets $\langle \cdot,\cdot\rangle \colon \Gamma^\infty(B^*)\times \Gamma^\infty(B)\to C^\infty(M)$.

{\bf Green-hyperbolicity.} Suppose that $B_1$ and $B_2$ are bundles over $M$ and that $P\colon \Gamma^\infty(B_1)\to \Gamma^\infty(B_2)$ is a linear partial differential operator. Then there is a formal dual $\adj{P}\colon \Gamma^\infty(B_2^*)\to \Gamma^\infty(B_1^*)$ given by
 \begin{equation*}
 	\int_M \mu\, \big\langle \adj{P}f, \phi\big\rangle = \int_M\mu\, \langle f, P\phi\rangle
 \end{equation*}
 for all $\phi\in \Gamma^\infty(B_1)$ and $f\in \Gamma^\infty(B_2^*)$ whose supports intersect compactly; $\adj{P}$ is also a linear partial differential operator.
 If they exist, linear maps $E^\pm\colon \GoinX{B_2}\to \Gamma^\infty(B_1)$ obeying
 \begin{itemize}\itemsep=0pt
 	\item[(G1)] $E^\pm P f=f$ for all $f\in\GoinX{B_1}$,
 	\item[(G2)] $P E^\pm f = f$ for all $f\in\GoinX{B_2}$,
 	\item[(G3)] $\supp E^\pm f \subset J^\pm (\supp f)$ for all $f\in\GoinX{B_2}$
 \end{itemize}
 are called advanced ($-$) and retarded ($+$) Green operators for $P$.\footnote{B\"ar reverses the usage of `advanced' and `retarded'; we adopt the more standard convention.}
 \begin{Definition}
 	$P$ is said to be \emph{Green-hyperbolic} if both $P$ and $\adj{P}$ admit advanced and retarded Green operators.
 \end{Definition}

 It is a remarkable fact that this purely algebraic definition -- only requiring linearity of $E^\pm$ and with no presumption of uniqueness or continuity -- has the following consequence, which is a summary of results in~\cite[Sections 3 and 4]{Baer:2015} and deserves to be called the \emph{first main theorem of Green-hyperbolicity}.
 \begin{Theorem}\label{thm:GHbasics}
 	Let $P\colon\Gamma^\infty(B_1)\to \Gamma^\infty(B_2)$ be a Green-hyperbolic operator and consider:
 	\begin{enumerate}\renewcommand{\theenumi}{(\roman{enumi})}\itemsep=0pt
 		\item[$(i)$] the restriction $P\colon\Gamma^\infty_{\spc/\sfc}(B_1)\to\Gamma^\infty_{\spc/\sfc}(B_2)$,
 		\item[$(ii)$] the restriction $P\colon\Gamma^\infty_{\pc/\fc}(B_1)\to\Gamma^\infty_{\pc/\fc}(B_2)$, or
 		\item[$(iii)$] the extension $P\colon\DD'_{\pc/fc}(B_1)\to\DD'_{\pc/fc}(B_2)$ $($see below$)$.
 	\end{enumerate}
 	Then, in each case, $P$ has continuous inverses, denoted
 	$\tilde{E}^\pm$, $\overline{E}^\pm$ and $\widehat{E}^\pm$ in cases
 	$(i)$, $(ii)$, $(iii)$ respectively, and which are successive extensions of $E^\pm$. Each of these inverses has the support property $(G3)$, replacing
 	$\Gamma_0^\infty(B_2)$ by the appropriate domain. In particular,
 	$E^\pm$ are uniquely determined by $P$ and are continuous.
 \end{Theorem}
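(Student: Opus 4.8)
The plan is to reconstruct B\"ar's development~\cite{Baer:2015}: define the successive extensions by an explicit partition-of-unity formula, check in each case the support property and the two-sided inverse property, and then read off uniqueness and continuity of $E^\pm$. Two ingredients are needed first. The causal-geometric facts for globally hyperbolic $M$: $J^\pm(K)$ is closed for compact $K$ (and more generally $J^\pm$ of a closed past/future-compact set is closed), $J^\pm$ of a past/future-compact set is again past/future-compact, and $J^{\mp}(K)$ meets any past/future-compact set in a compact set. And, crucially, the rigidity lemma: \emph{if $Pu=0$ with $\supp u$ past-compact (resp.\ future-compact), then $u=0$}. This is the one point where Green-hyperbolicity of $P$ is genuinely used through its dual: for $h\in\GoinX{B_1^*}$ one computes
\[
\int_M\mu\,\langle h,u\rangle=\int_M\mu\,\big\langle \adj{P}\adj{E}^{-}h,u\big\rangle=\int_M\mu\,\big\langle\adj{E}^{-}h,Pu\big\rangle=0,
\]
the formal-dual identity being legitimate because $\supp\adj{E}^{-}h\subset J^-(\supp h)$ is strictly future-compact and hence meets the past-compact set $\supp u$ compactly; letting $h$ range over $\GoinX{B_1^*}$ forces $u=0$. (Use the retarded Green operator $\adj{E}^{+}$ of $\adj{P}$ when $u$ is future-compact.) Uniqueness of $E^\pm$ is an immediate corollary: if $E_1^\pm$ and $E_2^\pm$ both obey (G1)--(G3) then, for $f\in\GoinX{B_2}$, the section $(E_1^+-E_2^+)f$ solves $P(\cdot)=0$ with support in the strictly past-compact set $J^+(\supp f)$, hence vanishes.

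For the extensions, fix a locally finite partition of unity $\{\chi_n\}$ on $M$ subordinate to a cover by relatively compact open sets and set $\overline{E}^+f:=\sum_n E^+(\chi_n f)$ for $f\in\Gamma^\infty_{\pc}(B_2)$, and analogously in the remaining cases. For fixed $x\in M$ only finitely many $\supp\chi_n$ meet the compact set $J^-(x)\cap\supp f$, so the sum is locally finite and $\overline{E}^+f\in\Gamma^\infty(B_1)$; refining two partitions of unity against one another shows the result does not depend on the choice; $\supp\overline{E}^+f\subset\overline{\bigcup_n J^+(\supp\chi_n f)}=J^+(\supp f)$, which is past-compact and closed; $P\overline{E}^+f=\sum_n PE^+(\chi_n f)=\sum_n\chi_n f=f$ by (G2); and $\overline{E}^+f=E^+f$ for $f\in\GoinX{B_2}$. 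The identity $\overline{E}^+P\phi=\phi$ for $\phi\in\Gamma^\infty_{\pc}(B_1)$ is where the rigidity lemma re-enters: $P(\overline{E}^+P\phi-\phi)=0$, while $\overline{E}^+P\phi-\phi$ has support inside the past-compact set $J^+(\supp\phi)$. Restricting this construction to $\Gamma^\infty_{\spc/\sfc}$ yields $\tilde{E}^\pm$ with the sharper (strictly past/future-compact) support, and dualizing gives the distributional inverses: put $\langle\widehat{E}^+u,\psi\rangle:=\langle u,\adj{E}^{-}\psi\rangle$ for $\psi\in\GoinX{B_1^*}$, which is well defined because $\supp\adj{E}^{-}\psi$ is strictly future-compact and $\supp u$ is past-compact; then $P\widehat{E}^+u=u$ and $\widehat{E}^+Pu=u$ follow from the Green-operator identities for $\adj{P}$, the support property from the implication $\supp\psi\cap J^+(\supp u)=\varnothing\Rightarrow\supp\adj{E}^{-}\psi\cap\supp u=\varnothing$, and agreement with $\overline{E}^+$ on $\Gamma^\infty_{\pc}(B_2)$ from the same formal-dual identity together with $P\overline{E}^+=\id$. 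That the four operators are successive extensions of one another is built into the formulas.

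Continuity of $\tilde{E}^\pm$, $\overline{E}^\pm$ and $\widehat{E}^\pm$ is what remains, and I expect it to be the main obstacle, the algebraic and support statements being comparatively routine once the rigidity lemma and the causal calculus are at hand. Here one must bring in the topological structure of the section spaces recalled in Appendix~\ref{appx:topspaces}: $P$ is continuous and, by the above, bijective on each of $\Gamma^\infty_{\spc/\sfc}$, $\Gamma^\infty_{\pc/\fc}$ and $\DD'_{\pc/\fc}$, whereupon an open mapping theorem for the relevant class of locally convex spaces (strict inductive limits of Fr\'echet spaces and their strong duals) forces the inverses $\tilde{E}^\pm$, $\overline{E}^\pm$, $\widehat{E}^\pm$ to be continuous; the delicate point is verifying that these spaces genuinely satisfy the hypotheses of an applicable open mapping theorem. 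Finally $E^\pm$ is the restriction of $\tilde{E}^\pm$ to $\GoinX{B_2}$, so its continuity as a map $\GoinX{B_2}\to\Gamma^\infty(B_1)$ follows from continuity of the inclusions $\GoinX{B_2}\hookrightarrow\Gamma^\infty_{\spc}(B_2)$ and $\Gamma^\infty_{\spc}(B_1)\hookrightarrow\Gamma^\infty(B_1)$.
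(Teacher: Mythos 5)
The paper does not actually prove Theorem~\ref{thm:GHbasics}: it states it as a summary of results from Sections~3 and~4 of~\cite{Baer:2015} and leaves the proof to that reference. Judged on its own terms, your reconstruction follows the spirit of B\"ar's development -- the rigidity lemma proved by pairing against $\adj{P}\adj{E}^{\mp}h$, uniqueness as its immediate corollary, extensions of $E^{\pm}$ via locally finite partitions of unity, and the distributional inverses via transposition -- and the algebraic and support-property parts of the argument are sound (granting the nontrivial causal lemma that a strictly future-compact closed set meets a past-compact closed set in a compact set, which itself requires a proof along the lines of writing past-compact closed sets as subsets of $J^+(\Sigma)$ for suitable Cauchy surfaces $\Sigma$).

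The genuine gap is in the continuity argument for case (iii). For cases (i) and (ii) the spaces $\Gamma^\infty_{\spc/\sfc}$ and $\Gamma^\infty_{\pc/\fc}$ can be realised as countable strict inductive limits of Fr\'echet spaces by exhausting $M$ along a Cauchy time function, so the open mapping theorem for LF spaces does apply and your argument goes through. But $\DD'_{\pc/\fc}(B)$ is not an LF space: it is a subspace of $\DD'(B)$, which the paper equips with the weak-$*$ topology, and neither the weak-$*$ dual nor the strong dual of an LF space falls under the hypotheses of the standard open mapping theorem (such spaces are in general not metrisable, not Baire, and not of the ultrabornological-and-webbed type needed for De Wilde's theorem). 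You flag this as ``the delicate point'' but then leave it unresolved, whereas it is precisely the point at which the open-mapping route breaks down. The fix is already contained in the formula you wrote: $\widehat{E}^{\pm}$ is (up to density factors) the transpose of $\adj{\tilde E}{}^{\mp}\colon\Gamma^\infty_{\sfc/\spc}(B_1^*\otimes\Omega)\to\Gamma^\infty_{\sfc/\spc}(B_2^*\otimes\Omega)$, whose continuity you have from case (i); the transpose of a continuous map between locally convex spaces is automatically continuous between the (weak-$*$, hence also strong) duals, and this gives the continuity of $\widehat{E}^{\pm}$ directly, without any open mapping theorem. So the formula is right but the justification offered for its continuity should be replaced by the duality argument you have already set up. A smaller, related point: well-definedness of $\langle\widehat{E}^+u,\psi\rangle:=\langle u,\adj{E}^-\psi\rangle$ as a distribution requires showing that a distribution with past-compact support extends continuously to a functional on $\Gamma^\infty_{\sfc}$, which you implicitly use but do not state.
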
	
 In $(iii)$ the space of distributional sections $\DD'(B)$ of a bundle $B$ over $M$ is defined
 as the topological dual of $\Gamma_0^\infty(B^*\otimes\Omega)$, where $\Omega$ is the bundle of weight-$1$ densities over $M$. As sections of $B$ and $B^*\otimes\Omega$ can be paired to give a density, there is an obvious embedding of $\Gamma^\infty(B)$ in $\DD'(B)$, with $\phi\in\Gamma^\infty(B)$ corresponding to the distribution
 $f\mapsto \int_M \langle \phi,f\rangle$ acting on $f\in \Gamma_0^\infty(B^*\otimes\Omega)$.
 The map $P$ in $(iii)$ is the restriction of the dual map
 $\big(\mu(\adj{P})\mu^{-1}\big)'$ to elements of $\DD'(B_1)$ with past- or future-compact support.
 (Recall that the formal dual is defined relative to the specific density~$\mu$.)
 In~\cite{Baer:2015}, B\"ar defines distributional sections of $B$ using
 the topological dual of $\Gamma_0^\infty(B^*)$, which would be the space of distribution densities in our terminology. The metric density
 provides an isomorphism between the spaces and operators that he considers and those that we do. As we have in mind potential applications where more than one metric might be in use, we have elected not to make a fixed identification between distributions and distribution densities.

 The \emph{second main theorem of Green-hyperbolicity} provides two exact sequences that are highly useful in applications, combining (and very mildly extending)\footnote{The extension is the right-most arrow, asserting surjectivity of $P$ onto spacelike compact (distributional) sections. But any such smooth section can be split as $f=f^++f^-$ where $f^\pm$ has past/future-compact support, whereupon $f=P\big( \overline{E}^+f^++\overline{E}^-f^-\big)$; the argument is identical for distributions, replacing $\overline{E}^\pm$ by $\widehat{E}^\pm$.} Theorems~3.22 and~4.3 of~\cite{Baer:2015}.
 \begin{Theorem} \label{thm:exact}
 	Let $P\colon\Gamma^\infty(B_1)\to \Gamma^\infty(B_2)$ be Green-hyperbolic and define the advanced-minus-retarded operators
 	$E=E^--E^+\colon\Gamma^\infty_0(B_2)\to\Gamma^\infty_{\mathrm{sc}}(B_1)$ and $\widehat{E}=\widehat{E}^--\widehat{E}^+$, using the notation of Theorem~$\ref{thm:GHbasics}$. Then there are two exact sequences
 	\begin{equation*}%\label{eq:exact}
 		\begin{tikzcd}
 			0 \arrow[r] & \GoinX{B_1} \arrow[r, "P"]\arrow[d] &\GoinX{B_2} \arrow[r, "E"]\arrow[d] & \Gamma^\infty_{\mathrm{sc}}(B_1) \arrow[r,"P"]\arrow[d] & \Gamma^\infty_{\mathrm{sc}}(B_2) \arrow[r]\arrow[d] & 0 \\
 			0 \arrow[r] & \DD'_0(B_1) \arrow[r, "P"] & \DD'_0(B_2) \arrow[r, "\widehat{E}"] & \DD'_{\mathrm{sc}}(B_1) \arrow[r,"P"] & \DD'_{\mathrm{sc}}(B_2) \arrow[r] & 0,
 		\end{tikzcd}
 	\end{equation*}
 	in which the downward arrows are the natural embeddings of smooth into distributional sections.
 \end{Theorem}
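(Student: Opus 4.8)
The plan is to verify exactness at each of the four internal nodes of each row, running the smooth and distributional sequences in parallel, and to separate off the single genuinely new ingredient — surjectivity of $P$ at the right-hand end — from the rest, which is a repackaging of Theorem~\ref{thm:GHbasics} together with Theorems~3.22 and~4.3 of~\cite{Baer:2015}. Commutativity of the squares needs no work: the distributional $P$ is by construction a map whose restriction to smooth sections reproduces $P$, and $\widehat{E}^\pm$ extends $E^\pm$ by Theorem~\ref{thm:GHbasics}(iii), so the vertical embeddings intertwine the horizontal maps throughout.

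First the three ``formal'' nodes. Injectivity of $P$ on $\Gamma^\infty_0(B_1)$ is immediate from $E^\pm P=\id$ (property~(G1)), and on $\DD'_0(B_1)$ it is part of Theorem~\ref{thm:GHbasics}(iii). At the second node: if $f=Pg$ with $g\in\Gamma^\infty_0(B_1)$ then $Ef=(E^--E^+)Pg=g-g=0$; conversely, $Ef=0$ forces $E^+f=E^-f$, a section whose support lies in $J^+(\supp f)\cap J^-(\supp f)$, which is compact by global hyperbolicity, so $E^+f\in\Gamma^\infty_0(B_1)$ and $PE^+f=f$ by~(G2). At the third node, the inclusion $\Ran E\subset\ker P|_{\mathrm{sc}}$ follows from $PEf=PE^-f-PE^+f=f-f=0$ together with $\supp Ef\subset J(\supp f)$. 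In each case the distributional statement is word-for-word the same with $\widehat{E}^\pm$ in place of $E^\pm$.

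The two remaining tasks — the reverse inclusion at the third node and surjectivity at the fourth — are handled by the same device. Fix a Cauchy temporal function $t$ with level sets $\Sigma_c=t^{-1}(c)$ and a cutoff $\chi\in C^\infty(M)$ with $\chi\equiv0$ on $\{t\le0\}$ and $\chi\equiv1$ on $\{t\ge1\}$. Given spacelike compact $u$ with $Pu=0$, write $\supp u\subset J(K_0)$ with $K_0$ compact, and set $u^+=\chi u$, $u^-=(1-\chi)u$; a short causal-geometry argument shows $\supp u^+\subset J^+(K')$ and $\supp u^-\subset J^-(K'')$ for compact $K',K''$, so $u^\pm$ are strictly past/future-compact, while $h:=Pu^+=-Pu^-$ has support in the compact ``slab'' $J(K_0)\cap t^{-1}([0,1])$, whence $h\in\Gamma^\infty_0(B_2)$. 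The continuous inverses $\overline{E}^\pm$ of Theorem~\ref{thm:GHbasics}(ii) then give $\overline{E}^+h=u^+$ and $\overline{E}^-h=-u^-$, so $Eh=\overline{E}^-h-\overline{E}^+h=-u$ and $u=E(-h)\in\Ran E$. For surjectivity, given $f\in\Gamma^\infty_{\mathrm{sc}}(B_2)$ the same splitting $f=f^++f^-$ has $f^\pm$ strictly past/future-compact; then $\overline{E}^+f^+$ and $\overline{E}^-f^-$ exist, are spacelike compact by the support property in Theorem~\ref{thm:GHbasics}, and $P(\overline{E}^+f^++\overline{E}^-f^-)=f^++f^-=f$. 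Both arguments carry over verbatim to distributions, using that multiplication by $\chi$ preserves $\DD'_{\mathrm{sc}}$ with the stated support behaviour and replacing $\overline{E}^\pm$ by $\widehat{E}^\pm$.

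The main obstacle I anticipate is not analytic but causal-geometric: one must check carefully that $\chi u$ (and its distributional analogue) really has strictly past-compact support, and that the slab $J(K_0)\cap t^{-1}([a,b])$ is compact — this rests on the standard facts that a Cauchy surface meets the causal future (or past) of a compact set in a compact set and that causal diamonds $J^+(C_1)\cap J^-(C_2)$ are compact. These are the only points at which global hyperbolicity is used in an essential way; everything else reduces to the defining identities~(G1)--(G3) and Theorem~\ref{thm:GHbasics}, or is a direct quotation of~\cite{Baer:2015}.
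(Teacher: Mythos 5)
Your proof is correct, and for the one part the paper actually argues explicitly — surjectivity of $P$ onto $\Gamma^\infty_{\mathrm{sc}}(B_2)$ (and $\DD'_{\mathrm{sc}}(B_2)$) at the right-hand end — your splitting $f=f^++f^-$ followed by $f=P\bigl(\overline{E}^+f^++\overline{E}^-f^-\bigr)$ is precisely the device given in the paper's footnote. The remaining nodes are handled in the paper by citation of Theorems~3.22 and~4.3 of~\cite{Baer:2015}; your re-derivation of those facts via (G1)--(G3), Theorem~\ref{thm:GHbasics}, and the cutoff/slab argument is the standard proof and matches that source in method.
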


 A direct consequence is that the solution space $\Sol(P)=\{\phi\in\Gamma^\infty_{sc}(B)\colon P\phi=0\}$ is given by
 $\Sol(P)=E\GoinX{B}$.
 The special properties of Green-hyperbolic operators are not confined to the statement of Theorems~\ref{thm:GHbasics} and~\ref{thm:exact}. As shown in~\cite{Baer:2015}, products and direct sums of Green-hyperbolic operators are Green-hyperbolic, and indeed any operator whose square is Green-hyperbolic is Green-hyperbolic.
 Turning to physical applications, suppose that a bundle $B$ admits a nondegenerate bilinear form, or equivalently a base-point preserving vector bundle isomorphism $\mathcal{I}\colon B\to B^*$, and an antilinear base-point preserving involution $\mathcal{C}\colon B\to B$. Then $P\colon \Gamma(B)\to\Gamma(B)$ is said to be \emph{formally self-adjoint} if $P= \mathcal{I}^{-1}\adj{P}\mathcal{I}$, and real if $P=\mathcal{C} P \mathcal{C}$.
 If a~formally self-adjoint operator $P$ admits advanced and retarded Green operators then $P$ is Green-hyperbolic and moreover $\Sol(P)$ admits a symplectic form
 \begin{equation*}
 	\sigma(Ef_1,Ef_2) = \langle \mathcal{I} f_1,Ef_2\rangle, \qquad f_i\in\GoinX{B} .
 \end{equation*}
 If $P$ is also real then there is an associated bosonic QFT described by a unital $*$-algebra of observables, generated by symbols $\Phi(f)$, $f\in\Gamma_0^\infty(B)$, and subject to the relations:
 \begin{itemize}\itemsep=0pt
 	\item $f\mapsto\Phi(f)$ is complex linear (linearity),
 	\item $\Phi(f)^* = \Phi(\mathcal{C} f)$ (Hermiticity),
 	\item $\Phi(Pf)=0$ (field equation),
 	\item $[\Phi(f),\Phi(h)]={\rm i}\sigma(Ef,E h) \II$ (canonical commutation relations)
 \end{itemize}
 for all $f,h\in \Gamma_0^\infty(B)$. This essentially functorial quantisation is one of the main applications for the theory of Green-hyperbolic operators. Under more restrictive circumstances first order Green-hyperbolic operators can also admit fermionic quantisation~\cite{BaerGinoux:2012}.

 {\bf Modified Green-hyperbolic operators.}
 Turning to the subject of this paper, suppose $P\colon \Gamma^\infty(B_1)\to \Gamma^\infty(B_2)$ is Green-hyperbolic, with Green operators $E^\pm$. Let $A\colon\Gamma^\infty(B_1)\to \Gamma^\infty(B_2)$ be linear, with range contained in $\Gamma^\infty_K(B_2)$ for some compact, topologically regular $K\subset M$.

 Although $P+A$ is not necessarily a differential operator, we wish to state conditions analogous to (G1)--(G3) that can characterise suitable Green operators.
 To gain some insight, let us assume for the moment that
 for each $h\in \GoinX{B_2}$, the equation $(P+A)\phi=h$ has unique solutions with past/future-compact support given by $\phi=E_{P+A}^\pm h$ where $E_{P+A}^{\pm}\colon \GoinX{B_2}\to \Gamma^\infty(B_1)$ are linear maps with ranges
 necessarily contained in $\Gamma^\infty_{pc/fc}(B_1)$. Then we may write
 \begin{equation*}
 	PE_{P+A}^\pm f = f- AE_{P+A}^\pm f = PE_P^\pm\big(f- AE_{P+A}^\pm f\big),\qquad f\in \GoinX{B_2},
 \end{equation*}
 using our assumptions on $P$ and $A$. As $P$ is invertible on $\Gamma^\infty_{\pc/\fc}(B_1)$ by Theorem~\ref{thm:GHbasics}\,$(ii)$, we have
 \begin{equation*}
 	E_{P+A}^\pm f = E_P^\pm \big(f-AE_{P+A}^\pm f\big)
 \end{equation*}
 with support determined using condition (G3) for $P$,
 \begin{equation*}
 	\supp E_{P+A}^\pm f \subset J^\pm \big(\supp \big(f-AE_{P+A}^\pm f\big)\big) \subset J^\pm(\supp f\cup K).
 \end{equation*}
 By imposing further conditions on $A$ we may refine the information available. Specifically, suppose that $\phi|_K\equiv 0$ implies that $A\phi\equiv 0$.
 Observe that
 \begin{equation*}
 	(P+A)E_P^\pm f = f+ AE_P^\pm f \in \GoinX{B_1}
 \end{equation*}
 by (G2) for $E_P^\pm$ and the definition of $A$; by assumption on $E^\pm_{P+A}$ we now have
 \begin{equation*}
 	E_P^\pm f = E_{P+A}^\pm \big(f+ AE_P^\pm f\big)
 \end{equation*}
 and deduce that $E_{P+A}^\pm f=E_P^\pm f$ for all $f\in \GoinX{B_2}$ such that
 $J^\pm(\supp f)\cap K$ is empty \big(because~$AE_P^\pm f$ vanishes\big). Together with our earlier observation, $E^\pm_{P+A}$ satisfy the modified support property
 \begin{itemize}\setlength{\leftskip}{0.10cm}
\item[(G3${}'$)] for all $f\in \GoinX{B_2}$,
 	\begin{equation*}\displaystyle \supp E_{P+A}^{\pm} f \subset
 		\begin{cases} J^{\pm}(\supp f), & J^\pm(\supp f)\cap K=\varnothing,\\
 			J^{\pm}(\supp f\cup K), & \text{otherwise.}
 	\end{cases}\end{equation*}
 \end{itemize}
 With this intuition established, we can drop the assumption that $(P+A)\phi=h$ has unique solutions with past/future-compact support. The standing assumptions are now
 \begin{itemize}
 	\item[(A1)] $P\colon\Gamma^\infty(B_1)\to \Gamma^\infty(B_2)$ is Green-hyperbolic,
 	\item[(A2)] $A\colon\Gamma^\infty(B_1)\to \Gamma^\infty(B_2)$ is linear, with range contained in $\Gamma^\infty_K(B_2)$,
 	\item[(A3)] for $\phi\in \Gamma^\infty(B_1)$, $\phi|_K\equiv 0$ implies $A\phi\equiv 0$.
 \end{itemize}
 We now make the following definition.
 \begin{Definition}\label{def:KnonlocalGops}
 	Subject to assumptions (A1)--(A3), linear maps $E_{P+A}^{\pm}\colon\GoinX{B_2}\to \Gamma^\infty(B_1)$ are said to be \emph{retarded/advanced $K$-nonlocal Green operators} for $P+A$ if they satisfy~(G1) and~(G2) \big(with $P+A$ replacing $P$ and $E_{P+A}^\pm$ replacing $E^\pm$\big) and (G3${}')$. If both $P+A$ and $\adj{P}+\adj{A}$ admit retarded and advanced $K$-nonlocal Green operators then $P+A$ is called \emph{$K$-nonlocally Green-hyperbolic}.
 \end{Definition}

 As with Green-hyperbolic operators, the above definition implies considerably more and indeed the analogue of Theorem~\ref{thm:GHbasics} holds with (G3) replaced by (G3${}')$. These results will be proved in~\cite{FewVer:20xx}. Our main goal here will be to give sufficient
 conditions for the existence of $K$-nonlocal Green operators. For simplicity of presentation, we restrict to operators acting on spaces of smooth functions, rather than bundle sections. We will also establish some continuity results for the $K$-nonlocal Green operators, both from $C_0^\infty(M)$ to $C^\infty(M)$ and between various Sobolev spaces, and address the
 holomorphicity of the $K$-nonlocal Green operators with respect to a~parameter.

 At a formal level it is straightforward to see what the Green operators of $P+A$ should be, if they exist: if
 $\phi\in C^\infty_{\pc/\fc}(M)$ solves $(P+A)\phi=f\in \CoinX{M}$, then
 \begin{equation*}
 	\phi = E_P^\pm (f- A\phi) = E^\pm_{P} g,
 \end{equation*}
 where $g=f - A \phi= f- AE_P^\pm g$, so \smash{$g=\big(I+AE_P^\pm\big)^{-1}f$}. Formally, therefore, the Green operators for $P+A$ are
 \begin{equation}\label{eq:formalGF}
 	E_{P+A}^\pm = E_P^\pm \big(I+AE^\pm_{P}\big)^{-1},
 \end{equation}
 and the technical task is to make this formula rigorous, where possible, and to establish that the resulting operators $E_{P+A}^\pm$ are indeed $K$-nonlocal Green operators for $P+A$. Note that the inversion of $I+AE^\pm$ must be performed in $\CoinX{M}$. We will accomplish this by first inverting in Sobolev spaces, where Hilbert space techniques can be used, and then boosting the result up to $\CoinX{M}$.

 As can be seen from~\eqref{eq:formalGF}, the operator $\big(I+AE_P^\pm\big)^{-1}$ acts as a bridge between the free dynamics represented by $P$ and the interacting dynamics of $P+A$. This link can be expressed precisely in terms of classical M{\o}ller operators -- see Corollary~\ref{cor:Moller}.

 \section{Main result and remarks}\label{sec:main}

 Let $M$ be a globally hyperbolic spacetime with at most finitely many connected components,
 and let $K$ be a fixed, topologically regular, compact subset of $M$. For brevity, we write $C^\infty$ for~$C^\infty(M)$ and so on; for a closed set $A$, $C^\infty_A$ denotes the space of smooth functions with support in $A$, and the same convention is used for other spaces of distributions and Sobolev spaces. If $X$ and $Y$ are locally convex topological spaces, $\LL_b(X,Y)$ denotes the space of linear maps between them, equipped with the topology of bounded convergence (see Appendix~\ref{appx:topspaces}); for linear maps between normed spaces, this coincides with the usual operator norm topology. As usual, we write $\LL_b(X)$ for $\LL_b(X,X)$.

 The main result of this paper is the following.
 \begin{Theorem} \label{thm:GHpert}
 	Let $P$ be a Green-hyperbolic operator and
 	suppose $A(\lambda)$, $\lambda\in\CC$, is a family of linear maps $A(\lambda)\colon C^\infty\to C^\infty$ whose ranges are contained in $C^\infty_K$ and with $A(0)=0$. Suppose that,
 	for some $\beta,\gamma\in \RR$ with $\delta=\beta+\gamma>0$, and some $s_*\in \RR$,
 	\begin{enumerate}\itemsep=0pt
 		\item[$(a)$] the Green operators $E^\pm$ of $P$ extend to linear maps
 		$\EE'\to \DD'$ with continuous restrictions mapping $H^s_0\to H^{s+\beta}_\loc$ for all $s\ge s_*$;
 		\item[$(b)$] each map $A(\lambda)$ extends to a linear map $\DD'\to\DD'_K$ with continuous restrictions mapping $H^s_\loc\to H^{s+\gamma}_K$ for all $s\ge s_*+\beta$
 		$($consequently, the continuous maps $A(\lambda)E^\pm\colon H_0^s\to H_K^{s+\delta}$ induce
 		compact maps $A(\lambda)E^\pm\colon H_0^s\to H_K^{s}$ due to the Sobolev embedding theorems and $\delta>0)$;
 		\item[$(c)$] for all $s\ge s_*$, the compact map $H^s_0\to H^s_K$ induced by the compositions $A(\lambda)E^\pm$ is holomorphic in $\lambda\in\CC$ with respect to the topology of $\LL_b(H^s_0,H^s_K)$;
 		\item[$(d)$] if $f\in C^\infty$ vanishes identically on $K$ then $Af=0$.
 	\end{enumerate}
 	Then
 	\begin{enumerate}\renewcommand{\labelenumi}{(\Alph{enumi}).}
 		\item[$(A)$] $\ker (P+ A(\lambda))|_{C^\infty_{\pc/\fc}}$ has finite dimension
 		for each $\lambda\in\CC$;
 		\item[$(B)$]%\label{it:discset}
 the sets
 		\begin{equation*}
 			S^\pm = \big\{\lambda\in \CC\colon \ker (P+ A(\lambda))|_{C^\infty_{\pc/\fc}}\neq 0\big\} \qquad \text{and}\qquad
 			S = S^+\cup S^-
 		\end{equation*}
 		are discrete subsets of $\CC$, whose complements are open neighbourhoods of zero;
 		\item[$(C)$]%\label{it:Gops}
 $P+A(\lambda)$ has advanced and retarded $K$-nonlocal Green operators for all $\lambda\in\CC\setminus S$, that are holomorphic in $\lambda$ on this domain with respect to the topology of $\LL_b(C_0^\infty, C^\infty)$; these Green operators are given explicitly as
 		\begin{equation*}
 			\tilde{E}_{\lambda}^\pm = E^\pm \big(I+A(\lambda)E^\pm\big)^{-1},
 		\end{equation*}
 		where the inverse on the right-hand side exists and is holomorphic in $\lambda\in \CC\setminus S$
 		with respect to the topology of $\LL_b(C_0^\infty)$;
 		\item[$(D)$]%\label{it:Gopsext}
 the Green operators
 		possess continuous extensions mapping $H^s_0\to H^{s+\beta}_\loc$ for all $s\ge s_*$ and which are holomorphic in $\lambda\in\CC\setminus S$ with respect to the topology of $\LL_b\big(H^s_0,H^{s+\beta}_\loc\big)$;
 		
 		\item[$(E)$]%\label{it:locGops}
 if $(d)$ is replaced by
 		\begin{enumerate}
 			\item[$(d')$] $\supp A(\lambda)f\subset \supp f$ for all $f\in C^\infty$, and $\lambda\in\CC$
 		\end{enumerate}
 		then $P+ A(\lambda)$ has advanced and retarded Green operators in the usual sense
 		for $\lambda\in\CC\setminus S$.
 	\end{enumerate}
 \end{Theorem}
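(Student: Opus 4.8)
The plan is to make the formal formula~\eqref{eq:formalGF} rigorous by inverting $I+A(\lambda)E^\pm$ first on Sobolev spaces, where the analytic Fredholm theorem applies, and then transporting the inverse to $C_0^\infty$; parts (A)--(D) then come out essentially together, and (E) is a short analytic-continuation argument built on (C)--(D). First I would fix $s\ge s_*$ and observe, using (a)--(b), that $A(\lambda)E^\pm$ is continuous from $H^s_0$ to $H^{s+\delta}_K$, so that composing with the Rellich-compact inclusion $H^{s+\delta}_K\hookrightarrow H^s_K$ and with $H^s_K\hookrightarrow H^s_0$ gives a compact self-map of $H^s_K$, holomorphic in $\lambda$ in $\LL_b(H^s_K)$ by (c) and vanishing at $\lambda=0$. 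The analytic Fredholm theorem then yields a closed discrete set $S^\pm_s\subset\CC$ with $0\notin S^\pm_s$, off which $(I+A(\lambda)E^\pm)^{-1}$ exists on $H^s_K$ and is holomorphic in $\LL_b(H^s_K)$, while $\ker(I+A(\lambda)E^\pm)|_{H^s_K}$ is finite-dimensional for every $\lambda$. A regularity bootstrap --- if $g\in H^s_K$ solves $g=-A(\lambda)E^\pm g$ then $g\in H^{s+\delta}_K$ by (a),(b), and iterating, $g\in\bigcap_t H^t_K=C^\infty_K$ since $\delta>0$ --- shows these kernels lie in $C^\infty_K$, so that $S^\pm_s$ and the kernels do not depend on $s$; write $S^\pm:=S^\pm_s$ and $S:=S^+\cup S^-$.

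Next I would tie this back to $P+A(\lambda)$. Since $J^+(K)$ is past-compact, $E^+$ maps $C^\infty_K$ into $C^\infty_{\pc}$, and by Theorem~\ref{thm:GHbasics}(ii) together with the distributional extension in (a), $P$ is injective on $C^\infty_{\pc}$ with $PE^+=E^+P=\id$ there. Hence $\phi\mapsto A(\lambda)\phi$ is a linear bijection from $\ker(P+A(\lambda))|_{C^\infty_{\pc}}$ onto $\ker(I+A(\lambda)E^+)|_{C^\infty_K}$, with inverse $g\mapsto-E^+g$; the analogous statement holds with $E^-$, $C^\infty_{\fc}$ and $J^-(K)$. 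Combined with the previous paragraph this gives finite-dimensionality of $\ker(P+A(\lambda))|_{C^\infty_{\pc/\fc}}$ (part (A)), identifies $S^\pm$ with the exceptional sets of the statement, and gives their discreteness together with $0$ lying in the open complement (part (B)).

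For (C) and (D) I would, for $\lambda\notin S^\pm$, first extend $(I+A(\lambda)E^\pm)^{-1}$ off $C^\infty_K$ (resp.\ $H^s_K$) to $C_0^\infty$ (resp.\ $H^s_0$) by $h\mapsto h-(I+A(\lambda)E^\pm)^{-1}(A(\lambda)E^\pm h)$, check that this is a two-sided inverse of $I+A(\lambda)E^\pm$ there, that it is continuous, that $\supp(I+A(\lambda)E^\pm)^{-1}f\subset\supp f\cup K$, and that it reduces to $f$ when $J^\pm(\supp f)\cap K=\varnothing$ (for then $E^\pm f$ vanishes on $K$, so $A(\lambda)E^\pm f=0$ by (d)). Setting $\tilde E^\pm_\lambda:=E^\pm(I+A(\lambda)E^\pm)^{-1}$, properties (G1), (G2) follow by short computations from (G1), (G2) for $E^\pm$ and the identities $(I+A(\lambda)E^\pm)(Pf)=Pf+A(\lambda)f$ and $(P+A(\lambda))E^\pm g=(I+A(\lambda)E^\pm)g$ for $f,g\in C_0^\infty$, while (G3$'$) follows from the support statements above and (G3) for $E^\pm$; this establishes existence of the $K$-nonlocal Green operators. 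Their Sobolev holomorphy (part (D)) follows by writing $\tilde E^\pm_\lambda=E^\pm\circ(I-(I+A(\lambda)E^\pm)^{-1}A(\lambda)E^\pm)$ on $H^s_0$: the bracket is holomorphic in $\LL_b(H^s_0)$ as a composition of the holomorphic families $(I+A(\lambda)E^\pm)^{-1}\in\LL_b(H^s_K)$ and $A(\lambda)E^\pm\in\LL_b(H^s_0,H^s_K)$, and post-composing with the fixed continuous map $E^\pm\colon H^s_0\to H^{s+\beta}_\loc$ preserves holomorphy. The holomorphy statement in (C), in $\LL_b(C_0^\infty,C^\infty)$ (and $\LL_b(C_0^\infty)$ for the inverse), I would then deduce from (D) using that $C^\infty$ is the projective limit of the spaces $H^s_\loc$ and that bounded subsets of $C_0^\infty$ lie, for a fixed compact $K'$, in every $H^s_{K'}$ and are bounded there (Appendix~\ref{appx:topspaces}), so that holomorphy, tested on such bounded sets against continuous seminorms, reduces to the Sobolev estimates already in hand.

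Finally, for (E): if (d) is weakened to (d$'$) it is still implied, because $f|_K\equiv0$ forces $\supp A(\lambda)f\subset\supp f\cap K$ (by (d$'$) and the range condition), and $\supp f\cap K\subset\partial K$, which has empty interior by topological regularity of $K$, so $A(\lambda)f=0$; thus (A)--(D) remain valid. Fix $g\in C_0^\infty$ and put $Y^+=J^+(\supp g)$. For small $|\lambda|$ the $H^s_K$-operator norm of $A(\lambda)E^+$ is less than $1$, so $\tilde E^+_\lambda g=\sum_{n\ge0}(-1)^nE^+(A(\lambda)E^+)^ng$ converges in $C^\infty$; by (d$'$) no application of $A(\lambda)$ enlarges supports and each application of $E^+$ only propagates inside the future set $Y^+$, so every summand, hence the sum, is supported in $Y^+$. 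Since $\CC\setminus S$ is connected and $\lambda\mapsto\tilde E^+_\lambda g$ is holomorphic there by (D), the identity theorem propagates the vanishing of $\tilde E^+_\lambda g$ on the open set $M\setminus Y^+$ from small $\lambda$ to all $\lambda\in\CC\setminus S$; the advanced case is symmetric, so $\tilde E^\pm_\lambda$ satisfy (G3) for $\lambda\in\CC\setminus S$. I expect the main obstacles to be the regularity bootstrap --- which is exactly what guarantees that the Fredholm-theoretic exceptional set coincides with the set of parameters admitting nontrivial \emph{smooth} spontaneously appearing or disappearing solutions --- and the transfer of holomorphy from the Hilbert-space setting to $\LL_b(C_0^\infty,C^\infty)$; by contrast the causal support bookkeeping and the analytic-continuation argument for (E) should be routine once (C) and (D) are in place.
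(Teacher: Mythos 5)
Your proposal is correct and follows essentially the same strategy as the paper: analytic Fredholm on $H^s_K$, a regularity bootstrap to show the kernels are independent of $s$ and lie in $C^\infty_K$, the bijection between those kernels and $\ker(P+A(\lambda))|_{C^\infty_{\pc/\fc}}$, transfer of the inverse from $C^\infty_K$ to $C_0^\infty$ by the formula $h\mapsto h-(I+A(\lambda)E^\pm)^{-1}A(\lambda)E^\pm h$ (the paper packages this as Lemma~\ref{lem:inversetransfer}, which unwinds to the same expression), and then direct verification of (G1), (G2), (G3$'$). The two places where you organize things differently are worth noting but not substantive: (i) you propose to deduce the $\LL_b(C_0^\infty,C^\infty)$ holomorphy in (C) from the Sobolev statement (D), whereas the paper establishes $\LL_b(C^\infty_K)$-holomorphy via Corollary~\ref{cor:HstoCinfty} and then pushes out through the inductive limit (Lemmas~\ref{lem:Lb_basic} and~\ref{lem:LbLF}) — both routes work, yours is perhaps slightly more economical but needs the remark that bounded sets of $C_0^\infty$ sit boundedly in a single $C^\infty_{K'}$; and (ii) for (E) you argue directly with the Neumann series plus the identity theorem, whereas the paper factors this through a reusable support-preservation lemma (Theorem~\ref{thm:Sobolevinversion}\,$(c)$ and Corollary~\ref{cor:CKinfty_inversion}) applied to the iterates $\big(Y^\pm(\lambda)\big)^r f$. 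One small imprecision to flag in your (E): the Neumann series need not converge in $C^\infty$ for a $\lambda$-neighbourhood independent of $s$; but since you only need to verify vanishing of $\tilde E^\pm_\lambda g$ against a test function supported away from $J^\pm(\supp g)$, convergence in a single $H^{s_*}_K$ suffices and the argument goes through — this is exactly how the paper phrases it.
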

 Of course, in the situation where the hypotheses also apply to the formal duals of $P$ and $A(\lambda)$ then the upshot is that $P+ A(\lambda)$ and $\adj{P}+\adj{A}(\lambda)$
 both admit advanced and retarded $K$-nonlocal Green operators for all
 $\lambda\in\CC$ in an open $0$-neighbourhood with discrete complement given by the union of $S$ with the corresponding set for $\adj{P}$. For the non-exceptional values, $P+A(\lambda)$ is $K$-nonlocally Green-hyperbolic. The general theory of such operators and their properties is developed in detail in~\cite{FewVer:20xx}. We note in particular that when
 the ranges of both $A(\lambda)$ and $\adj{A}(\lambda)$ are contained in $C^\infty_K$ then assumption $(d)$ holds automatically for both operators.

 In many situations the holomorphicity assumption $(c)$ is easily verified, e.g., where $A(\lambda)$ is a~polynomial in $\lambda$. Precomposing with the continuous embedding $H^s_K\to H^s_0$, it follows that the restrictions of $A(\lambda)E^\pm$ to $H^s_K$, for $s\ge s_*$, are holomorphic with respect to the norm topology
 of $\LL(H^s_K)$. On the other hand, postcomposing with the embedding $H^s_K\to H^s_0$, we also have holomorphicity of $A(\lambda)E^\pm$ with respect to
 $\LL_b(H^s_0)$ (e.g., see Lemma~\ref{lem:Lb_basic}).

 An immediate application of Theorem~\ref{thm:GHpert} is:
 \begin{Corollary}\label{cor:Moller}	
 	Let $P$ be a second-order normally hyperbolic operator on $M$ and let $K\subset M$ be compact.
 	Let $\rho\colon\CC\to C^\infty(K\times K)$ be a polynomial and define $A(\lambda)\colon C^\infty(M)\to C_K^\infty(M)$ by
 	\begin{equation*}
 		(A(\lambda) \phi)(x)=\int_M \rho(\lambda)(x,y)\phi(y)\mu_y, \qquad \phi\in C^\infty(M).
 	\end{equation*}
 	Then $P$ and $A(\lambda)$ satisfy the hypotheses of Theorem~$\ref{thm:GHpert}$. In particular, $P+\lambda A_\rho$ has advanced and retarded $K$-nonlocal Green operators for all $\lambda$ in an open $0$-neighbourhood with a discrete complement in $\CC$.
 \end{Corollary}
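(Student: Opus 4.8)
The plan is to verify, for the given $P$ and $A(\lambda)$, the hypotheses $(a)$--$(d)$ of Theorem~\ref{thm:GHpert}; conclusion $(C)$ then yields the corollary. We will take $\beta=1$ and $\gamma$ (hence $\delta=\beta+\gamma$) as large as convenient. The hypothesis $A(0)=0$ of Theorem~\ref{thm:GHpert} amounts here to $\rho(0)=0$, which holds for the family in question (e.g.\ $\rho(\lambda)=\lambda\sigma$ for a fixed $\sigma\in C^\infty(K\times K)$, so that $P+A(\lambda)=P+\lambda A_\rho$).

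For $(a)$: a second-order normally hyperbolic operator is Green-hyperbolic (see~\cite{Baer:2015}), with advanced/retarded Green operators the usual forward/backward fundamental solutions. Microlocal regularity for such operators -- propagation of singularities, or the parametrix construction of~\cite[Theorem~6.5.3]{DuiHoer_FIOii:1972} -- gives that $E^\pm$ restrict to continuous maps $H^s_0\to H^{s+1}_\loc$ for every $s\in\RR$, and the required extension $\EE'\to\DD'$ is the restriction to $\EE'\subset\DD'_\pc\cap\DD'_\fc$ of the distributional inverse $\widehat E^\pm$ of Theorem~\ref{thm:GHbasics}$(iii)$. So $(a)$ holds with $\beta=1$ and, say, $s_*=0$.

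For $(b)$: extending the kernel $(x,y)\mapsto\rho(\lambda)(x,y)$ by zero outside $K\times K$ gives a smooth Schwartz kernel on $M\times M$ supported in $K\times K$, so $A(\lambda)$ is smoothing: for $\phi\in\DD'(M)$, $A(\lambda)\phi(x)=\langle\phi,\rho(\lambda)(x,\cdot)\mu\rangle$ is a smooth function of $x$ supported in $K$, and $\phi\mapsto A(\lambda)\phi$ is continuous $\DD'\to C^\infty_K$; composing with the continuous embeddings $C^\infty_K\hookrightarrow H^t_K$ gives continuity $H^s_\loc\to H^{s+\gamma}_K$ for every $\gamma$ and every $s$. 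Fixing $\gamma=1$ we obtain $\delta=2>0$, and the resulting maps $A(\lambda)E^\pm\colon H^s_0\to H^s_K$ are compact by the Sobolev embedding theorems, exactly as in the parenthetical remark of Theorem~\ref{thm:GHpert}$(b)$.

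For $(c)$: writing $\rho(\lambda)=\sum_{j=0}^n\lambda^j\sigma_j$ with $\sigma_j\in C^\infty(K\times K)$, we have $A(\lambda)=\sum_{j=0}^n\lambda^j A_j$ with each $A_j$ smoothing, so for each $s\ge s_*$ the composition $A(\lambda)E^\pm=\sum_{j=0}^n\lambda^j(A_jE^\pm)$ is a polynomial in $\lambda$ with coefficients in $\LL_b(H^s_0,H^s_K)$ (by $(a)$, $(b)$ and Sobolev embedding); a polynomial is entire, and on maps between these normed spaces the $\LL_b$-topology is the operator-norm topology, so $(c)$ holds. For $(d)$: if $f|_K\equiv 0$ then $A(\lambda)f(x)=\int_K\rho(\lambda)(x,y)f(y)\mu_y=0$, since the kernel is supported in $y\in K$. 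This verifies $(a)$--$(d)$, and Theorem~\ref{thm:GHpert}$(C)$ gives $K$-nonlocal Green operators on the open $0$-neighbourhood $\CC\setminus S$. No step is a genuine obstacle -- the corollary is a direct instantiation of the main theorem -- the only point meriting care being the continuity assertion underlying $(b)$, namely that an integral operator with smooth, compactly supported Schwartz kernel maps $H^s_\loc$ continuously into $C^\infty_K$; this is standard, the operator being the pairing of a distribution against a family of test densities depending smoothly on the (compactly constrained) output variable.
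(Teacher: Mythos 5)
Your proof is correct and follows the same route as the paper's own (very terse) proof: cite \cite[Theorem~6.5.3]{DuiHoer_FIOii:1972} for hypothesis $(a)$ with $\beta=1$, and then verify $(b)$--$(d)$ from the fact that a smooth kernel supported in $K\times K$ makes $A(\lambda)$ smoothing and polynomially holomorphic, which is exactly what the paper defers to ``one easily checks''. Your additional note that $A(0)=0$ requires $\rho(0)=0$ correctly identifies an implicit normalization in the corollary's statement, reflected in the ``in particular'' clause $P+\lambda A_\rho$.
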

 \begin{proof}
 	$P$ is Green-hyperbolic, and its
 	Green operators extend to maps $E^\pm\colon H^s_0\to H^{s+1}_\loc$, i.e., they improve smoothness by one order, by~\cite[Theorem 6.5.3]{DuiHoer_FIOii:1972} (see~\cite{IslamStrohmaier:2020} for a bundle version).
 	One easily checks that the technical conditions on $A(\lambda)$ are certainly met by integral operators with kernels in $C^\infty_{K\times K}(M\times M)$.
 \end{proof}

 Another consequence is the existence of classical M{\o}ller operators~\cite{DuetschFredenhagen:2003,HawRej:2020} relating the solution spaces of $P$ and $P+A(\lambda)$.
 \begin{Corollary}
 	Under the hypotheses of Theorem~$\ref{thm:GHpert}$,
 	let $\Sf(\lambda)=\{\phi\in C^\infty\colon (P+A(\lambda))\phi=0\}$. Then the maps
 	$\mathsf{r}_\lambda:C^\infty\to C^\infty$ given by
 	\begin{equation}\label{eq:Moller}
 		\mathsf{r}_\lambda \phi = \phi - E^+\big(I+A(\lambda)E^+\big)^{-1}A(\lambda)\phi,
 		\qquad \phi\in C^\infty,
 	\end{equation}
 	are holomorphic in $\lambda$ on $\CC\setminus S$ with respect to the topology of $\LL_b(C^\infty)$. For each $\lambda\in \CC\setminus S$, $\mathsf{r}_\lambda$ is a homeomorphism of $C^\infty$ that restricts to a bijection from $\Sf(0)$ to $\Sf(\lambda)$.
 \end{Corollary}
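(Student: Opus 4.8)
The plan is to identify $\mathsf{r}_\lambda$ with $I-\tilde E^+_\lambda A(\lambda)$, where $\tilde E^+_\lambda=E^+\big(I+A(\lambda)E^+\big)^{-1}$ is the retarded $K$-nonlocal Green operator of $P+A(\lambda)$ supplied by Theorem~\ref{thm:GHpert}(C), to exhibit an explicit two-sided inverse, and to verify the intertwining of $\Sf(0)$ and $\Sf(\lambda)$ by a short computation with (G1) and (G2). Since $A(\lambda)\phi\in C^\infty_K\subset C_0^\infty$ for every $\phi\in C^\infty$, the expression $\mathsf{r}_\lambda\phi=\phi-\tilde E^+_\lambda\big(A(\lambda)\phi\big)$ is meaningful and defines a linear self-map of $C^\infty$; this is the starting point.

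For the analytic part, the preliminary step is to upgrade hypothesis~$(c)$, which only concerns the composite $A(\lambda)E^\pm$, to holomorphy of $A(\lambda)$ itself. Fix $\chi\in C_0^\infty$ with $\chi\equiv 1$ on a neighbourhood of $K$. Then (G1) gives $E^+P(\chi\phi)=\chi\phi$ and hypothesis~$(d)$ gives $A(\lambda)(\chi\phi)=A(\lambda)\phi$, for all $\phi\in C^\infty$, so $A(\lambda)=\big(A(\lambda)E^+\big)\circ\big(P(\chi\,\cdot\,)\big)$ as maps $C^\infty\to C^\infty_K$, where $\phi\mapsto P(\chi\phi)$ is a fixed continuous map $C^\infty\to C_0^\infty$. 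By hypothesis~$(c)$, transferred from the Sobolev scales to the smooth spaces exactly as in the proof of Theorem~\ref{thm:GHpert}(C) (see also Lemma~\ref{lem:Lb_basic}), $\lambda\mapsto A(\lambda)E^+$ is holomorphic in $\LL_b\big(C_0^\infty,C^\infty_K\big)$; hence $\lambda\mapsto A(\lambda)$ is continuous-valued and holomorphic in $\LL_b\big(C^\infty,C^\infty_K\big)$, a fortiori in $\LL_b\big(C^\infty,C_0^\infty\big)$. Since $\tilde E^+_\lambda$ is holomorphic in $\LL_b\big(C_0^\infty,C^\infty\big)$ on $\CC\setminus S$ by Theorem~\ref{thm:GHpert}(C), the formula $\mathsf{r}_\lambda=I-\tilde E^+_\lambda A(\lambda)$ then exhibits $\mathsf{r}_\lambda$ as continuous and holomorphic in $\LL_b\big(C^\infty\big)$ on $\CC\setminus S$, using that a composition of holomorphic operator-valued maps is holomorphic (cf.\ Appendix~\ref{appx:topspaces}).

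Next I would write down the inverse as $\mathsf{s}_\lambda:=I+E^+A(\lambda)$, which is again a continuous self-map of $C^\infty$ because $A(\lambda)$ is continuous and $E^+\colon C_0^\infty\to C^\infty$. From $\tilde E^+_\lambda\big(I+A(\lambda)E^+\big)=E^+$ one reads off the identity $\tilde E^+_\lambda A(\lambda)E^+=E^+-\tilde E^+_\lambda$ on $C_0^\infty$, and from $A(\lambda)\tilde E^+_\lambda=A(\lambda)E^+\big(I+A(\lambda)E^+\big)^{-1}=I-\big(I+A(\lambda)E^+\big)^{-1}$ one gets $E^+A(\lambda)\tilde E^+_\lambda=E^+-\tilde E^+_\lambda$ on $C_0^\infty$. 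Substituting these two identities into $\mathsf{r}_\lambda\mathsf{s}_\lambda$ and $\mathsf{s}_\lambda\mathsf{r}_\lambda$ respectively (and using that $A(\lambda)$ has range in $C^\infty_K\subset C_0^\infty$, so the identities apply to the relevant arguments) collapses both products to $I$. Hence $\mathsf{r}_\lambda$ is a linear homeomorphism of $C^\infty$ for each $\lambda\in\CC\setminus S$.

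Finally, for the solution-space claim: for $\phi\in\Sf(0)$, property (G2) for the $K$-nonlocal Green operator (Definition~\ref{def:KnonlocalGops}), i.e.\ $(P+A(\lambda))\tilde E^+_\lambda g=g$ for $g\in C_0^\infty$, applied with $g=A(\lambda)\phi$ gives $(P+A(\lambda))\mathsf{r}_\lambda\phi=(P+A(\lambda))\phi-A(\lambda)\phi=A(\lambda)\phi-A(\lambda)\phi=0$; and for $\psi\in\Sf(\lambda)$, property (G2) for $E^+$ applied with $g=A(\lambda)\psi$ gives $P\mathsf{s}_\lambda\psi=P\psi+A(\lambda)\psi=(P+A(\lambda))\psi=0$. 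Thus $\mathsf{r}_\lambda(\Sf(0))\subset\Sf(\lambda)$ and $\mathsf{s}_\lambda(\Sf(\lambda))\subset\Sf(0)$, and since $\mathsf{s}_\lambda=\mathsf{r}_\lambda^{-1}$ these inclusions are equalities, so $\mathsf{r}_\lambda$ restricts to a bijection $\Sf(0)\to\Sf(\lambda)$. I expect the only genuinely delicate point to be the holomorphy assertion — every other step is essentially forced — and within it the crux is the reduction $A(\lambda)=\big(A(\lambda)E^+\big)\circ\big(P(\chi\,\cdot\,)\big)$ together with the routine but fiddly bookkeeping needed to transfer holomorphy between the Sobolev operator-norm topologies and the $\LL_b$-topologies on maps among $C_0^\infty$, $C^\infty_K$ and $C^\infty$.
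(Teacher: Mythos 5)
Your proof is correct and follows essentially the same route as the paper: exhibit the explicit inverse $\mathsf{r}_\lambda^{-1}=I+E^+A(\lambda)$ by an algebraic computation, then intertwine the solution spaces using (G1)/(G2). The one place where you go beyond the paper is the holomorphy assertion, and you are right that this is the delicate point. The paper simply says that holomorphy of $\mathsf{r}_\lambda$ in $\LL_b(C^\infty)$ is ``clear from part~(C),'' but part~(C) (together with hypothesis~(c)) only delivers holomorphy of the composite $A(\lambda)E^\pm$, of $\big(I+A(\lambda)E^\pm\big)^{-1}$, and of $\tilde E^\pm_\lambda$; it says nothing about $A(\lambda)$ by itself, which appears unaccompanied on the right of~\eqref{eq:Moller}. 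Your factorisation $A(\lambda)=\big(A(\lambda)E^+\big)\circ\big(P(\chi\,\cdot\,)\big)$, using (G1) for the cutoff $\chi$ and hypothesis~(d) to replace $\phi$ by $\chi\phi$, is exactly the observation needed to close that gap and is a genuine improvement in rigour over the published argument. (Equivalently, one can substitute this factorisation into~\eqref{eq:Moller} to get $\mathsf{r}_\lambda\phi=(1-\chi)\phi+\tilde E^+_\lambda P(\chi\phi)$, from which holomorphy in $\LL_b(C^\infty)$ follows immediately from part~(C) — a slightly shorter packaging of the same idea.) The two algebraic identities you use to verify $\mathsf{r}_\lambda\mathsf{s}_\lambda=\mathsf{s}_\lambda\mathsf{r}_\lambda=I$, and the (G2)-based check on $\Sf(0)$, $\Sf(\lambda)$, match the paper's computation.
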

 Because $\mathsf{r}_\lambda \phi$ and $\phi$ differ only in $J^+(K)$, the causal future of the interaction region, one calls~$\mathsf{r}_\lambda$ the \emph{retarded classical M{\o}ller operator} intertwining the dynamics of $P$ and $P+A(\lambda)$. An advanced M{\o}ller operator can be constructed by replacing $E^+$ by $E^-$ in~\eqref{eq:Moller}.
 \begin{proof}
 	By part~$(C)$ of Theorem~\ref{thm:GHpert}, it is clear that $\mathsf{r}_\lambda\in \LL(C^\infty)$ and holomorphic in $\lambda$ on $\CC\setminus S$ with respect to the $\LL_b(C^\infty)$ topology.
 	A direct calculation, using invertibility of $I+A(\lambda)E^+$ on~$C_0^\infty$, shows that $\mathsf{r}_\lambda$ is invertible on~$C^\infty$ with continuous inverse given by
 	\begin{equation*}
 		\mathsf{r}_\lambda^{-1}\phi = \phi + E^+A(\lambda)\phi, \qquad \phi\in C^\infty.
 	\end{equation*}
 	Finally, if $\phi\in \Sf(0)$, then
 	\begin{equation*}
 		(P+A(\lambda))\mathsf{r}_\lambda\phi = A(\lambda)\phi -
 		\big(I+A(\lambda)E^+\big)\big(I+A(\lambda)E^+\big)^{-1}A(\lambda)\phi =0
 	\end{equation*}
 	and, if $\phi\in \Sf(\lambda)$, then
 	\begin{equation*}
 		P\mathsf{r}_\lambda^{-1}\phi= -A(\lambda)\phi + A(\lambda)\phi= 0
 	\end{equation*}
 	so $\mathsf{r}_\lambda\Sf(0)\subset \Sf(\lambda)$ and $\mathsf{r}_\lambda^{-1}\Sf(\lambda)\subset \Sf(0)$, from which we deduce that $\mathsf{r}_\lambda$ restricts to a bijection from~$\Sf(0)$ to $\Sf(\lambda)$.
 \end{proof}

 Theorem~\ref{thm:GHpert} is proved in the next section, after a number of further remarks and examples. First, we note that Theorem~\ref{thm:GHpert} is far from the most general statement that could be made. For instance, at the cost of more notation but no new ideas, it generalises immediately to operators acting on sections in finite-dimensional vector bundles over $M$. This can also be shown by applying Theorem~\ref{thm:GHpert} in its current form, using a method of LU factorisation described in Section~\ref{sec:LU}.

 Second, generalisations to multi-variable modifications $P+ \sum_{j=1}^r A_j(\lambda_j)$ are possible. In this case, the discrete exceptional sets are replaced by sets in $\CC^r$ that, locally, are vanishing sets of holomorphic functions in $r$ variables. See~\cite{Gramsch:1970} for the appropriate multi-variable Fredholm theorem and~\cite{Kaballo:2012} for an exposition.

 Third, the significance of holomorphicity in the topology of bounded convergence is that compositions of such functions are also holomorphic in the topology of bounded convergence, with derivatives given by the usual Leibniz rule. This is explained in more detail in Appendix~\ref{appx:top_bded_cvgnce}.

 Fourth, Theorem~\ref{thm:GHpert} shows that the obstruction to the existence of $K$-nonlocal Green functions for $P+\lambda A$ is provided by nontrivial smooth solutions to $(P+\lambda A)\phi=0$ with past- or future-compact support (which, if present, span finite-dimensional spaces). Any sort of energy estimate would be sufficient to exclude nontrivial solutions of this type, which indicate unphysical behaviour in a closed system. Mathematical examples are easily constructed, however.
 \begin{Example}
 	Let $P$ be Green-hyperbolic on $M$ with Green operators $E^\pm$. Fix any nontrivial $f,h\in C^\infty_K$ and
 	define $A\colon C^\infty\to C_0^\infty$ by
 	\begin{equation*}
 		A \phi =- \left(\int h \phi\, \mu \right)f,
 	\end{equation*}
 	noting that the range of $A$ is contained in $C^\infty_K$.
 	If $\phi\in C^\infty_{\pc/\fc}$ obeys $(P+\lambda A)\phi=0$, then $P\phi = -(\lambda \int h\phi\,\mu)f$ and
 	we deduce that $\phi$ is a constant multiple of $E^\pm f$; hence it must also be that $(P+\lambda A)E^\pm f=0$ or equivalently $f = \lambda \nu^\pm f$ where
 	$\nu^\pm = \int h E^\pm f\mu$. Thus a necessary condition for the existence of nontrivial $\phi\in\ker (P+\lambda A)|_{C^\infty_{\pc/\fc}}$ is that $\lambda \nu^\pm=1$, and it is easily seen that this condition is sufficient. Summarising,
 	\begin{equation*}
 		\ker(P+\lambda A)|_{C^\infty_{\pc/\fc}}= \begin{cases} 0, & \lambda\nu^{\pm}\neq 1, \\
 			\CC E^{\pm}f, & 	\lambda\nu^{\pm}= 1.
 		\end{cases}
 	\end{equation*}
 	Next, note that $\adj{A}$ takes the same form as $A$ but with $f$ and $h$ exchanged. Therefore,
 	$\ker \adj{(P+\lambda A)}|_{C^\infty_{\pc/\fc}}$ is nontrivial if and only if
 	$\lambda \adj{\nu}^\pm = 1$, where $\adj{\nu}^\pm= \int f E_t^\pm h\mu$ and $E_t^\pm$ are the Green operators of $\adj{P}$. But $E_t^\pm$ is the formal dual of $E^\mp$, so in fact $\adj{\nu}^\pm = \nu^\mp$ and
 	\begin{equation*}
 		\ker\adj{(P+\lambda A)}|_{C^\infty_{\pc/\fc}}= \begin{cases} 0, & \nu^{\mp}\lambda\neq 1, \\
 			\CC E^{\pm}h, & 	 \nu^{\mp}\lambda= 1
 		\end{cases}
 	\end{equation*}
 	and consequently
 	\begin{equation*}
 		\dim\ker (P+ \lambda A)|_{C^\infty_{\pc/\fc}} = \dim\ker \big(\adj{P} +\lambda \adj{A}\big)\big|_{C^\infty_{\fc/\pc}} .
 	\end{equation*}
 	Thus there are at most two values of $\lambda$ for which $P+\lambda A$ can fail to be $K$-nonlocal Green-hyperbolic.
 	In the case where $P$ obeys the hypothesis $(a)$ of Theorem~\ref{thm:GHpert} (e.g., if $P$ is the Klein--Gordon operator) then $P+\lambda A$ is $K$-nonlocally Green-hyperbolic for all $\lambda\in\CC\setminus \big\{(\nu^\pm)^{-1}\big\}$.
 \end{Example}	

 The example above illustrates a general result based on Fredholm index theory, which was prompted by an insightful
 question posed to the author by B\"ar, and is proved in Section~\ref{sec:Fredholm}.
 \begin{Theorem}\label{thm:Fredholm}
 	Suppose that the hypotheses of Theorem~$\ref{thm:GHpert}$ are met for both $P$ and $A(\lambda)$, and~$\adj{P}$ and $\adj{A}(\lambda)$, with $\beta\ge 0$ and $s_*\le -\beta<\gamma$.
 	Then, for all $\lambda\in\CC$,
 	\begin{equation*}
 		N^\pm(\lambda) := \dim\ker (P+A(\lambda))|_{C^\infty_{\pc/\fc}} = \dim\ker \big(\adj{P}+\adj{A}(\lambda)\big)\big|_{C^\infty_{\fc/\pc}} .
 	\end{equation*}
 	In particular, $N^+(\lambda)=N^-(\lambda)$ in the case where $P$ and $A(\lambda)$ are formally self-dual.
 \end{Theorem}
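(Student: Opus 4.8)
The plan is to identify $N^+(\lambda)$ with the kernel dimension, and $\dim\ker(\adj{P}+\adj{A}(\lambda))|_{C^\infty_\fc}$ with the cokernel dimension, of one and the same index-zero Fredholm operator on $L^2_K=H^0_K$, and then read off the claimed equality from the Fredholm alternative. The extra quantitative hypotheses $\beta\ge 0$ and $s_*\le-\beta<\gamma$ enter only to ensure that the Sobolev index $s=0$ is admissible in $(a)$ and $(b)$ for all of $P,A(\lambda),\adj{P},\adj{A}(\lambda)$; write $E_t^\pm$ for the Green operators of $\adj{P}$, so that $E_t^-$ is the formal dual of $E^+$ by \cite{Baer:2015}. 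With $s=0$ admissible, $A(\lambda)E^\pm$ and $\adj{A}(\lambda)E_t^\mp$ restrict to operators $L^2_K\to L^2_K$ that factor through the inclusion $H^\delta_K\hookrightarrow L^2_K$, hence (as $\delta>0$) are compact, so $I+A(\lambda)E^\pm$ and $I+\adj{A}(\lambda)E_t^\mp$ are Fredholm of index $0$ on $L^2_K$.

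\emph{Step 1.} A bootstrap on regularity — each composite application of $E^+$ followed by $A(\lambda)$ raises Sobolev order by $\delta$ — shows that any $v\in\ker(I+A(\lambda)E^+)|_{L^2_K}$ already lies in $C^\infty_K$; then $v\mapsto\phi:=-E^+v$ and $\phi\mapsto v:=A(\lambda)\phi$ are mutually inverse linear bijections between this kernel and $\ker(P+A(\lambda))|_{C^\infty_\pc}$, using that $E^+$ agrees with the extension $\overline{E}^+$ of Theorem~\ref{thm:GHbasics}$(ii)$ on compactly supported sections and that $J^+(K)$ is past-compact. Thus $N^+(\lambda)=\dim\ker(I+A(\lambda)E^+)|_{L^2_K}$; the same argument applied to $\adj{P}$ gives $\dim\ker(\adj{P}+\adj{A}(\lambda))|_{C^\infty_\fc}=\dim\ker(I+\adj{A}(\lambda)E_t^-)|_{L^2_K}$.

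\emph{Step 2.} By the Fredholm alternative, $N^+(\lambda)=\dim\operatorname{coker}(I+A(\lambda)E^+)|_{L^2_K}$. Using the non-degenerate bilinear pairing $(u,v)\mapsto\int_M uv\,\mu$ on $L^2_K$ to identify its dual with $L^2_K$, together with the definition of $\adj{A}(\lambda)$ and the fact that $E_t^-$ is the formal dual of $E^+$, a direct computation identifies this cokernel with $\big\{\omega\in L^2_K:(\omega+E_t^-\adj{A}(\lambda)\omega)|_K=0\big\}$. For such $\omega$, put $\psi:=-E_t^-\adj{A}(\lambda)\omega$; then $\psi$ is supported in $J^-(K)$, hence future-compact, and agrees with $\omega$ on $K$. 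Because $\adj{A}(\lambda)$ annihilates smooth functions vanishing on $K$ — and, $K$ being topologically regular, this property propagates by continuity to the Sobolev-regular objects in play, so that $\adj{A}(\lambda)$ depends only on the restriction of its argument to $K$ — we get $\adj{A}(\lambda)\omega=\adj{A}(\lambda)\psi$, whence $\psi=-E_t^-\adj{A}(\lambda)\psi$ and so $(\adj{P}+\adj{A}(\lambda))\psi=0$. A further regularity bootstrap (now unobstructed by boundary terms) forces $\psi\in C^\infty$, so $\psi\in\ker(\adj{P}+\adj{A}(\lambda))|_{C^\infty_\fc}$, while $\psi\mapsto\mathbf{1}_K\psi$ inverts $\omega\mapsto\psi$. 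Combining the two steps, $N^+(\lambda)=\dim\operatorname{coker}(I+A(\lambda)E^+)|_{L^2_K}=\dim\ker(\adj{P}+\adj{A}(\lambda))|_{C^\infty_\fc}$, which is the $\pc/\fc$ case. The $\fc/\pc$ case follows by applying this with $\adj{P},\adj{A}(\lambda)$ in place of $P,A(\lambda)$ and using that the formal dual is involutive, and in the formally self-dual case one has $\adj{P}=P$ and $\adj{A}(\lambda)=A(\lambda)$, so $N^+(\lambda)=N^-(\lambda)$.

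The step I expect to be the main obstacle is the transition, in Step 2, from the $L^2$-level Fredholm bookkeeping to a statement about \emph{smooth} solutions: hypothesis $(d)$ is postulated only for smooth functions, but it must be applied to merely Sobolev-regular objects such as $E_t^-\adj{A}(\lambda)\omega$ and $\mathbf{1}_K\psi$, so one has to show that the kernel condition of $(d)$ propagates to distributions supported in $M\setminus\inte K$ — this is where topological regularity of $K$ is genuinely needed — and one has to check that every Sobolev pairing and operator composition invoked above lies in the range in which hypotheses $(a)$ and $(b)$ are available (the role of $\beta\ge0$ and $s_*\le-\beta<\gamma$). Once this is in place, the remaining ingredients — the Fredholm alternative, \cite{Baer:2015}'s identification of the formal duals of Green operators, and the two regularity bootstraps — are routine.
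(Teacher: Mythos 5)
Your proposal is correct and follows essentially the same route as the paper's proof: reduce to the Fredholm theory of the compact perturbation $I+A(\lambda)E^\pm$ on $L^2_K$, invoke vanishing of the index, and transfer the cokernel to the kernel of the dual problem using the fact (Lemma~\ref{lem:tech}$(a)$) that $\adj{A}(\lambda)$ depends only on the restriction of its argument to $K$. The paper packages this slightly differently --- via the Hilbert-space adjoint $(Y_t^\pm)^*$ and the conjugation $\Gamma$, and via an injection $A(\lambda)\Gamma$ together with a symmetry argument rather than your explicit bijection $\omega\mapsto-E_t^-\adj{A}(\lambda)\omega$ with inverse $\psi\mapsto\psi|_K$ --- but the substance is the same.
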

 At first sight it is quite surprising that the spaces of spontaneously appearing and disappearing solutions have equal dimension for the self-dual case. At its core is the fundamental fact about
 Lorentzian causality that $x\in J^\pm(y)$ if and only if $y\in J^\mp(x)$, and its consequence that
 the advanced and retarded Green operators of a Green-hyperbolic operator $P$ are formal duals of the retarded and advanced Green operators of the formal dual $\adj{P}$. This by itself is not enough for the result above, which also makes essential use of the that $A(\lambda)E^\pm$ is compact, because it improves regularity.

 The last remark prompts one to consider situations where $A(\lambda)E^\pm$ does not improve regularity.
 \begin{Example}\label{ex:nihilo}
 	Consider the Green-hyperbolic operator $P=\partial_u\partial_v$ with respect to $(u,v)$ coordinates on $\RR^2$, regarding vectors with nonnegative components in these coordinates as future-pointing and causal. We may write $P=\partial\otimes\partial$ in an obvious tensor product notation. Let~$f$,~$g$ and~$h$ be smooth real-valued functions on $\RR$, so that $f$, $g'$ and $h$ have support contained in $[-2,2]$, $\supp g\subset [-2,\infty)$, $f\equiv 1$ on a neighbourhood of $[-1,1]$ and $\ip{h}{g}=1$ in the usual~$L^2(\RR)$ inner product.
 	Define $T=-\ket{g'}\bra{h}$ and note that
 	\begin{equation*}
 		g' + T g = 0.
 	\end{equation*}
 	Setting $K=[-2,2]\times[-2,2]$, the operator
 	\begin{equation*}
 		A = (f\partial)\otimes T
 	\end{equation*}
 	maps $C^\infty\big(\RR^2\big)$ continuously to $\CoinX{K}$, vanishes on
 	$\phi\in C^\infty(M)$ with $\phi|_K\equiv 0$, and extends to a continuous map $H^s_\loc\to H^{s-1}_K$ for any $s\in\RR$. Thus $AE^\pm\colon H^s_K\to H^s_K$. It is now easily seen that the equation
 	\begin{equation*}
 		P\varphi + A \varphi = 0
 	\end{equation*}
 	is solved by any distribution $\varphi=\upsilon\otimes g$, where $\upsilon\in\DD'(\RR)$ is supported in $[-1,1]$. Such $\varphi$ have past compact support, contained in $[-1,1]\times [-2,\infty)$.
 	
 	This example shows that the regularity-improving nature of $AE^\pm$ is
 	responsible for both the smoothness of past-/future-compact solutions to $(P+A)\phi=0$ and the finite-dimensionality of the corresponding solution spaces.
 \end{Example}

 In the light of this example it is clear that further conditions would be needed to deal with modifications of first order derivative operators. For example, if $D$ is a Dirac operator then
 there is a companion operator $\tilde{D}$ so that $P = D\tilde{D}$
 and $\tilde{P}=\tilde{D}D$ are second order Green-hyperbolic with Green operators $E^\pm$ and $\tilde{E}^\pm$ respectively that improve regularity by one order. Then $G^\pm = \tilde{D}E^\pm$ and $\tilde{G}^\pm = D\tilde{E}^\pm$ are Green operators for $D$ and $\tilde{D}$ respectively. Similarly,
 $G_t^\pm = \adj{\tilde{D}}E_t^\pm$ and $\tilde{G}_t^\pm = \adj{D}\tilde{E}_t^\pm$ are Green operators for $\adj{D}$ and $\adj{\tilde{D}}$, where $E_t^\pm$ and $\tilde{E}_t^\pm$ denote the Green operators for $\adj{P}$ and $\adj{\tilde{P}}$. Thus $D$ and $\tilde{D}$ are Green-hyperbolic, but we cannot assume
 that their Green operators improve regularity.

 Now consider a $K$-nonlocal modification $D+ A$ of $D$, suppressing $\lambda$-dependence in the following. If $A$ and $\adj{A}$ have range in $C^\infty_K$ and \emph{improve} regularity, then Theorem~\ref{thm:GHpert} applies and shows that $D+ A$ is $K$-nonlocally Green-hyperbolic provided that there are no nontrivial past-compact or future-compact solutions to $(D+A)\phi=0$ or $\big(\adj{D}+\adj{A}\big)\phi=0$. More generally, a natural strategy to find Green operators for the modified operator is to seek a $K$-nonlocal operator $\tilde{A}$ so that
 \begin{equation*}
 	P_A = (D+A)\big(\tilde{D}-\tilde{A}\big) = D\tilde{D} +A\tilde{D}-D \tilde{A} - A\tilde{A}
 \end{equation*}
 and
 \begin{equation*}
 	\tilde{P}_A = \big(\tilde{D}-\tilde{A}\big)(D+A) = \tilde{D}D +\tilde{D}A -\tilde{A}D - \tilde{A}A
 \end{equation*}
 are $K$-nonlocally Green-hyperbolic. If $A$ is regularity-preserving and one may find a
 regularity-preserving $\tilde{A}$ such that $A\tilde{D}-D\tilde{A}$ and $\tilde{D}A -\tilde{A}D$ lose \emph{strictly less} than one order of regularity (for all $\lambda\in\CC$), then Theorem~\ref{thm:GHpert} could be applied to $P_A$ and $\tilde{P}_A$ and their Green operators (when they exist) may be used to construct Green operators for $D+A$ and $\tilde{D}+\tilde{A}$ as before. Lacking such a choice, one potentially falls victim to the behaviour in Example~\ref{ex:nihilo}, unless Theorem~\ref{thm:GHpert} can be extended to situations in which $AE^\pm$ does not improve regularity.
 As these questions are best pursued in the context of specific models, we leave the discussion here
 and turn to the proof of Theorem~\ref{thm:GHpert}.

 \section{Application: LU factorisation method for systems}\label{sec:LU}

 LU factorisation is a standard technique in linear algebra for solving system of linear equations $Mx=y$ for $x,y\in \CC^N$. In situations where an invertible matrix $M=LU$ with $L$ lower-triangular and $U$ upper-triangular, the solution is given by solving the triangular systems $Lz=y$, $Ux=z$. The method can be generalised in many ways, see, e.g.,~\cite{HornJohnson_vol1}. Here, we discuss its use to solve systems of Green-hyperbolic equations, or $K$-nonlocal generalisations thereof.
 To see why this cannot be done straightforwardly, consider a system $\Ps \Phi = F$ where $\Ps\colon C^\infty\big(M;\CC^N\big)\to C^\infty\big(M;\CC^N\big)$ is a $N\times N$ matrix of
 linear self-maps of $C^\infty(M)$, where $M$ is a~globally hyperbolic spacetime as usual and $N\ge 2$. We adopt a block form
 \begin{equation*}
 	\Ps = \begin{pmatrix} P & R \\ S & Q\end{pmatrix},
 \end{equation*}
 where $Q$ is an $(N-1)\times (N-1)$ block, fixing the dimensions of the other blocks accordingly, so $P=\Ps_{11}$ is a self-map of $C^\infty(M)$. If $P$ is a differential operator with retarded and advanced Green operators $E^\pm_P$ and $R C^\infty\big(M;\CC^{N-1}\big)\subset
 \CoinX{M;\CC}$, one may factorise $\Ps$ on $\CoinX{M;\CC^N}$ as
 \begin{equation}\label{eq:LU1}
 	\Ps = \begin{pmatrix} 1 & 0 \\ SE_P^\pm & 1\end{pmatrix}\begin{pmatrix} P & R \\ 0 & Q - SE_P^\pm R \end{pmatrix}.
 \end{equation}
 This is the first step towards an LU factorisation: the first factor is indeed lower-triangular, but the second is not generally upper-triangular. The problem is that even if all the individual matrix elements of $\Ps$ are
 differential operators, the factorised form involves the typically nonlocal operators $SE^\pm_P$ and $SE^\pm_PR$. Therefore to proceed with this strategy one should rather phrase the problem from the start in terms of suitable nonlocal operators. (Modulo smoothing operators, one could employ pseudodifferential operators if the leading diagonal operators were elliptic, but here we require something more.)

 We describe how an exact LU factorisation can be achieved using Theorem~\ref{thm:GHpert}. Let $K$ be a~fixed compact, topologically regular subset of $M$, and let $s_*\in\RR$. Fix $\beta,\gamma\in\RR$ with $\beta+\gamma>0$. Let $\mathscr{A}$ be the space of maps $A\colon \CC\to \LL(C^\infty(M))$ so that, for all $\lambda\in\CC$,
 \begin{itemize}\itemsep=0pt
 	\item $A(\lambda)C^\infty(M)\subset C^\infty_K(M)$,
 	\item $A(\lambda)f\equiv 0$ if $f\in C^\infty(M)$ vanishes identically on $K$,
 	\item $A(\lambda)$ extends to a linear map $\DD'(M)\to\DD'_K(M)$ with continuous restrictions mapping $H^s_\loc\to H^{s+\gamma}_K$ for all $s\ge s_*+\beta$,
 \end{itemize}
 and so that $\lambda\to A(\lambda)$ is holomorphic with respect to the topology of $\LL_b\big(H^s_\loc,H_K^{s+\gamma}\big)$. Consider a system $\Ps$ depending on a parameter $\lambda$ so that
 every off-diagonal component $\Ps_{ij}$, $i\neq j$,
 of $\Ps$ is a map in $\mathscr{A}$, and the diagonal components take the form
 \begin{equation*}
 	\Ps_{ii} = P_i + A_i,
 \end{equation*}
 where each $P_i$ is a $\lambda$-independent Green-hyperbolic operator and each $A_i$ is a map in $\mathscr{A}$.
 Assume finally that the Green operators of the $P_i$ extend to maps $H^s_0\to H^{s+\beta}_\loc$ for all $s\ge s_*$.

 By Theorem~\ref{thm:GHpert}, $P(\lambda)=P_1+A_1(\lambda)$ has retarded and advanced $K$-nonlocal Green operators
 for all $\lambda$ in an open $0$-neighbourhood with discrete complement in $\CC$.
 We may therefore factorise~$\Ps$ as in~\eqref{eq:LU1}. A key point is that the $(N-1)\times (N-1)$-dimensional systems
 \begin{equation*}
 	\Ps^\pm(\lambda)= Q(\lambda)-S(\lambda)E_{P(\lambda)}^\pm R(\lambda)
 \end{equation*}
 obey the same assumptions as the original system, noting that the matrix components of $SE_P^\pm R$ determine continuous maps $H^s_0\to H_K^{s+\beta+\delta}\hookrightarrow H_K^{s+\beta}$. Theorem~\ref{thm:GHpert} implies that the leading diagonal component of $\Ps^+(\lambda)$ (resp.\ $\Ps^-(\lambda)$) has retarded (resp.\ advanced) $K$-nonlocal Green operators for all $\lambda$ outside a possibly enlarged exceptional set, so we may factor each of $\Ps^\pm(\lambda)$. Repeating the process, this leads to two LU factorisations of $\Ps(\lambda)$, which differ only in whether advanced or retarded Green operators are used in their construction. At each stage in the process we may gain more exceptional points, but the overall exceptional set is still discrete and excludes zero. From now on, we suppress the parameter $\lambda$ in the notation.

 For non-exceptional $\lambda$, we may now use the LU factorisation to obtain
 retarded and advanced Green operators for $\Ps$. We proceed inductively in $N$.
 When $N=1$ we are precisely in the situation of Theorem~\ref{thm:GHpert}.
 Now suppose that Green operators can be constructed for $(N-1)$-dimensional systems of the type considered, where $N\ge 2$. To establish the inductive
 step we return to the factorisation~\eqref{eq:LU1}, in which we may now
 assume that $Q-SE_P^\pm R$ has retarded/advanced Green operators and that $P$ has both retarded and advanced Green operators. We claim that
 \begin{equation}\label{eq:candidates}
 	E^\pm_{\Ps} = \begin{pmatrix}
 		E_P^\pm & -E_P^\pm RE_{Q-SE_P^\pm R}^\pm \\
 		0 & E_{Q-SE_P^\pm R}^\pm
 	\end{pmatrix}
 	\begin{pmatrix} 1 & 0 \\ -SE_P^\pm & 1\end{pmatrix}
 \end{equation}
 are retarded/advanced Green operators for $\Ps$. To check this, we first observe that the lower-triangular factor in~\eqref{eq:LU1}, which we denote $L$, is invertible on $\CoinX{M;\CC^N}$ with inverse
 \begin{equation*}
 	L^{-1} = \begin{pmatrix} 1 & 0 \\ -SE_P^\pm & 1\end{pmatrix}.
 \end{equation*}
 Then, we compute on the one hand, that
 \begin{equation*}
 	\Ps 	E^\pm_{\Ps} F = \begin{pmatrix} P & R \\ S & Q\end{pmatrix}
 	\begin{pmatrix}
 		E_P^\pm & -E_P^\pm RE_{Q-SE_P^\pm R}^\pm \\
 		0 & E_{Q-SE_P^\pm R}^\pm
 	\end{pmatrix}
 	L^{-1} F=
 	\begin{pmatrix} 1 & 0 \\ SE_P^\pm & 1\end{pmatrix}
 	L^{-1} F = F
 \end{equation*}
 and on the other, that
 \begin{equation*}
 	E^\pm_{\Ps}	\Ps F = \begin{pmatrix}
 		E_P^\pm & -E_P^\pm RE_{Q-SE_P^\pm R}^\pm \\
 		0 & E_{Q-SE_P^\pm R}^\pm
 	\end{pmatrix}
 	L^{-1}L \begin{pmatrix} P & R \\ 0 & Q - SE_P^\pm R \end{pmatrix} F=
 	F
 \end{equation*}
 for any $F\in \CoinX{M;\CC^N}$ in each case.
 Finally, the support property (G3${}'$) is clear, provided one takes a consistent choice of $+$ or $-$ in~\eqref{eq:candidates}. Thus $\Ps$ has retarded and advanced $K$-nonlocal Green operators, which is the inductive step, and shows that all finite-dimensional systems of the type considered possess advanced and retarded $K$-nonlocal Green operators; furthermore, these will vary holomorphically in $\lambda$, in the topology of $\LL_b\big(\CoinX{M;\CC^N},C^\infty\big(M;\CC^N\big)\big)$ outside the exceptional set. In particular, this justifies the treatment of such systems in a recent paper on measurement in QFT~\cite{FewsterJubbRuep:2022}, where interactions between a `system' QFT and one or more `probe QFT's are analysed.

 \section{Proof of Theorem~\ref{thm:GHpert}}\label{sec:proof}

 The proof of Theorem~\ref{thm:GHpert} starts by establishing an inversion result that will later be applied to $I+A(\lambda)E^\pm$, where $E^\pm$ are the Green operators of $P$. This is proved first in Sobolev spaces $H^s_K$, using the analytic Fredholm theorem, and then extended to $C^\infty_K$. The main part of the proof then uses this information to define Green operators for $P+ A(\lambda)$.

 \begin{Theorem}\label{thm:Sobolevinversion}
 	Let $K\subset M$ be a fixed topologically regular compact set. Suppose
 	$Y(\lambda)$, $\lambda\in\CC$, is a~family of linear maps
 	$Y(\lambda)\colon\DD_K'\to\DD'_K$, each restricting to continuous maps $Y_s(\lambda)\colon H^s_K\to H^{s+\delta}_K$ for all $s\ge s_*\in\RR$ and some fixed $\delta>0$. Assume also that $Y(0)=0$.
 	Let $\hat{Y}_s(\lambda)\in\LL(H^s_K)$ denote the compact maps obtained by composing $Y_s(\lambda)$ with the embedding $H^{s+\delta}_K\hookrightarrow H^s_K$. Suppose that $\lambda \to \hat{Y}_s(\lambda)$ is holomorphic on $\CC$ in $\LL_b(H^s_K)$ for all $s\ge s_*$. Then
 	\begin{enumerate}\itemsep=0pt
 		\item[$(a)$] The set
 		\begin{equation}\label{eq:Sdef}
 			S=\big\{\lambda\in\CC\colon \ker (I+ Y(\lambda))|_{C^\infty_K}\neq 0\big\}
 		\end{equation}
 		is a discrete subset of $\CC$, and $\CC\setminus S$ is an open $0$-neighbourhood. Furthermore, $\ker (I+ Y(\lambda))|_{C^\infty_K}$ is finite-dimensional, and equal to
 		$\ker (I+ \hat{Y}_s(\lambda))$ for each $s\ge s_*$.
 		
 		\item[$(b)$] For all $\lambda\in \CC\setminus S$ and all $s\ge s_*$, the map $I+ \hat{Y}_{s}(\lambda)$ is continuously invertible for all $s\ge s_*$, and $\lambda\mapsto \big(I+ \hat{Y}_{s}(\lambda)\big)^{-1}$ is holomorphic in $\CC\setminus S$. Moreover,
 		$\big(I+ \hat{Y}_{s}(\lambda)\big)^{-1}$ is the restriction of $\big(I+ \hat{Y}_{s_*}(\lambda)\big)^{-1}$ to $H^s_K$ for each $s\ge s_*$, $\lambda\in\CC\setminus S$.
 		
 		\item[$(c)$] $(i)$ Suppose that $f\colon \CC\setminus S\to H^{s_*}_K$ is holomorphic and
 		there is a compact subset ${K_f\subset K}$ so that $\supp \big(\hat{Y}_{s_*}(\lambda)\big)^r f(\lambda)\subset K_f$ for all $r\in\NN_0$ and $\lambda\in \CC\setminus S$. Then $\supp \big(I+\hat{Y}_{s_*}(\lambda)\big)^{-1}f(\lambda)\subset K_f$ for all $\lambda\in \CC\setminus S$.
 		
 		$(ii)$ If $Y(\lambda)C^\infty_K\subset C^\infty_{K'}$ for some compact $K'\subset K$, then
 		$\supp \big(I+ \hat{Y}_{s_*}(\lambda)\big)^{-1}f\subset K'$ for all $\lambda\in \CC\setminus S$ and all $f\in C^\infty_K$, and
 		$\ker \big(I+ \hat{Y}_{s_*}(\lambda)\big)$ is a finite-dimensional subspace of
 		$C^\infty_{K'}$ for all $\lambda\in\CC$.
 		
 		NB. In $(c)$, `support' is to be understood as distributional support.
 	\end{enumerate}
 \end{Theorem}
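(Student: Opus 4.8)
The plan rests on two tools: the analytic Fredholm theorem applied on the Hilbert spaces $H^s_K$, and a regularity bootstrap fuelled by the smoothing exponent $\delta>0$. I would treat $(a)$ and $(b)$ together. Fix $s\ge s_*$; then $\lambda\mapsto\hat Y_s(\lambda)$ is holomorphic into $\LL(H^s_K)$ with compact values and $\hat Y_s(0)=0$, so $I+\hat Y_s(0)=I$ is invertible and the analytic Fredholm theorem yields a set $S_s\subset\CC$ with no accumulation points such that $I+\hat Y_s(\lambda)$ is invertible for $\lambda\notin S_s$, with $\lambda\mapsto\big(I+\hat Y_s(\lambda)\big)^{-1}$ holomorphic on $\CC\setminus S_s$, and with each $\ker\big(I+\hat Y_s(\lambda)\big)$ finite-dimensional by the Fredholm alternative. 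The bootstrap then identifies all these kernels: if $\phi\in H^s_K$ and $\phi=-\hat Y_s(\lambda)\phi$, then $\phi=-Y(\lambda)\phi$, which lies in $H^{s+\delta}_K$ since $Y(\lambda)$ maps $H^s_K$ into $H^{s+\delta}_K$; iterating gives $\phi\in\bigcap_{t\ge s_*}H^t_K=C^\infty_K$. Because $\hat Y_s(\lambda)\phi$ and $Y(\lambda)\phi$ agree as distributions, $\ker\big(I+\hat Y_s(\lambda)\big)=\ker\big(I+Y(\lambda)\big)|_{C^\infty_K}$ for every $s\ge s_*$; in particular $S_s$ is independent of $s$ and equals the set $S$ of \eqref{eq:Sdef}, which is therefore discrete, with $\CC\setminus S$ an open $0$-neighbourhood since $0\notin S$. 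For the compatibility claim in $(b)$, given $\psi\in H^s_K$ put $\phi=\big(I+\hat Y_{s_*}(\lambda)\big)^{-1}\psi\in H^{s_*}_K$; then $\phi=\psi-Y(\lambda)\phi$, and running the same bootstrap with the fixed source term $\psi\in H^s_K$ forces $\phi\in H^s_K$, whence $\phi=\big(I+\hat Y_s(\lambda)\big)^{-1}\psi$ by injectivity.

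For $(c)(i)$ I would first establish the support bound near $\lambda=0$ and then propagate it over $\CC\setminus S$. Since $\hat Y_{s_*}$ is holomorphic with $\hat Y_{s_*}(0)=0$, for small $|\lambda|$ the operator norm of $\hat Y_{s_*}(\lambda)$ is less than $1$ and $\big(I+\hat Y_{s_*}(\lambda)\big)^{-1}f(\lambda)=\sum_{r\ge0}(-1)^r\big(\hat Y_{s_*}(\lambda)\big)^rf(\lambda)$, the series converging in $H^{s_*}_K$; every summand is supported in $K_f$ and $V=\{u\in H^{s_*}_K:\supp u\subset K_f\}$ is a closed subspace, so the sum lies in $V$. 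Now $\lambda\mapsto\big(I+\hat Y_{s_*}(\lambda)\big)^{-1}f(\lambda)$ is holomorphic from $\CC\setminus S$ into $H^{s_*}_K$ (it composes the holomorphic operator-valued map of $(b)$ with the holomorphic $f$, evaluation being continuous), and $\CC\setminus S$ is connected because $S$ is discrete in $\CC$. Hence for any $\ell$ in the annihilator of $V$, the scalar function $\lambda\mapsto\ell\big(\big(I+\hat Y_{s_*}(\lambda)\big)^{-1}f(\lambda)\big)$ is holomorphic on $\CC\setminus S$ and vanishes near $0$, so it vanishes identically; since $V$ is closed, Hahn--Banach gives $\big(I+\hat Y_{s_*}(\lambda)\big)^{-1}f(\lambda)\in V$ for all $\lambda\in\CC\setminus S$.

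Part $(c)(ii)$ then follows. Any $\phi\in\ker\big(I+\hat Y_{s_*}(\lambda)\big)$ lies in $C^\infty_K$ by the bootstrap of $(a)$, hence $\phi=-Y(\lambda)\phi\in C^\infty_{K'}$ by the range hypothesis, and finite-dimensionality is inherited from $(a)$. For the support statement, taking $f(\lambda)\equiv f$ constant, induction on the range hypothesis gives $\big(\hat Y_{s_*}(\lambda)\big)^rf\in C^\infty_{K'}$ for all $r\ge1$, so the hypotheses of $(c)(i)$ hold with $K_f=\supp f\cup K'$, and $(c)(i)$ yields $\supp\big(I+\hat Y_{s_*}(\lambda)\big)^{-1}f\subset\supp f\cup K'$, which is contained in $K'$ precisely when $\supp f\subset K'$.

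The step I expect to be the main obstacle is the analytic continuation in $(c)(i)$: the support constraint must be recast as membership in the \emph{closed} subspace $V$ so that an identity-theorem argument over the connected set $\CC\setminus S$, together with Hahn--Banach separation, can be invoked, rather than trying to push support information through an $H^{s_*}$-limit directly; one also has to check that $V$ is genuinely closed. The regularity bootstrap driving $(a)$ and $(b)$ is the conceptual heart of the result but is technically routine, provided one is careful that $\hat Y_s(\lambda)$ and $Y(\lambda)$ act identically at the level of distributions.
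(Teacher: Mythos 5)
Your proof is correct and follows essentially the same route as the paper: analytic Fredholm on each $H^s_K$, the $\delta$-bootstrap to identify all the kernels with $\ker(I+Y(\lambda))|_{C^\infty_K}$, and analytic continuation from a $0$-neighbourhood to propagate the support constraint. The only real stylistic difference is in $(c)(i)$: you phrase the continuation abstractly via the closed subspace $V=\{u\in H^{s_*}_K:\supp u\subset K_f\}$ and Hahn--Banach, whereas the paper pairs directly against test functions $\chi$ vanishing on a neighbourhood of $K_f$ and applies the identity theorem to the scalar functions $\lambda\mapsto\big((I+\hat Y_{s_*}(\lambda))^{-1}f(\lambda)\big)(\chi)$; these are equivalent.

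One point worth flagging: in $(c)(ii)$ you correctly observe that, for constant $f\in C^\infty_K$, the hypothesis of $(c)(i)$ is verified only with $K_f=\supp f\cup K'$ (since the $r=0$ term is $f$ itself), yielding $\supp(I+\hat Y_{s_*}(\lambda))^{-1}f\subset\supp f\cup K'$. The theorem as stated claims $\subset K'$ for \emph{all} $f\in C^\infty_K$, but that cannot hold as written: at $\lambda=0$ one has $(I+\hat Y_{s_*}(0))^{-1}f=f$, which need not be supported in $K'$. The paper's proof of $(c)(ii)$ asserts that the hypothesis of $(c)(i)$ holds with $K_f=K'$, overlooking the $r=0$ term. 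So your version is the correct one; the published statement and proof of the support part of $(c)(ii)$ should read $\supp f\cup K'$ in place of $K'$ (or be restricted to $f\in C^\infty_{K'}$). This does not affect the rest of the paper, which only uses the corresponding corollary applied to $\lambda\mapsto Y^\pm(\lambda)f$, where every iterate (including $r=0$) lands in the required set.
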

 \begin{proof}
 	$(a)$ Compactness of $\hat{Y}_s(\lambda)$ follows from the Sobolev embedding theorems.
 	The function $\lambda\mapsto I+ \hat{Y}_{s_*}(\lambda)$ is an analytic function on $\CC$ with values in the Fredholm operators on $H^{s_*}_K$, and which is invertible for $\lambda=0$ because $\hat{Y}_{s_*}(0)=0$. By the analytic Fredholm theorem~\cite[Theorem VI.14]{ReedSimon:vol1}, $I+ \hat{Y}_{s_*}(\lambda)$ is invertible for all $\lambda\in\CC$ with the exception of the (possibly empty) set $S_{s_*}$ of $\lambda\in \CC$ for which $\ker \big(I+ \hat{Y}_{s_*}(\lambda)\big)$ is nontrivial (and necessarily finite-dimensional by~\cite[Theorem VI.15]{ReedSimon:vol1}). Furthermore, $S_{s_*}$ is a
 	discrete subset of $\CC$, whose complement is an open $0$-neighbourhood, and the inverse $\big(I+ \hat{Y}_{s_*}(\lambda)\big)^{-1}$ is meromorphic on $\CC$ and holomorphic on $\CC\setminus S_{s_*}$.
 	
 	If $f\in\ker \big(I+\hat{Y}_{s_*}(\lambda)\big)$ then $f = - \hat{Y}_{s_*}(\lambda)f\in H^{s_*+\delta}_K$ obeys $f\in\ker \big(I+ \hat{Y}_{s_*+\delta}(\lambda)\big)$; iterating, $f\in \bigcap_{s\ge s_*} H^s_K=C^\infty_K$ and $f\in \ker (I+ Y(\lambda))|_{C^\infty_K}$.
 	As the converse inclusion is trivial, we deduce that
 	$\ker \big(I+\hat{Y}_{s_*}(\lambda)\big)= \ker (I+ Y(\lambda))|_{C^\infty_K}$
 	for all $\lambda\in \CC$ and hence that $S_{s_*}=S$ defined in~\eqref{eq:Sdef}. In particular, for each fixed $\lambda\in\CC$, all the kernels $\ker \big(I+\hat{Y}_{s}(\lambda)\big)$ are equal for $s\ge s_*$.
 	
 	$(b)$ The first statement is immediate by the Fredholm theorem and part~(a); the second follows because $I+\hat{Y}_{s}(\lambda)$ coincides with $I+ \hat{Y}_{s_*}(\lambda)$ on $H^s_K$ and both operators are invertible for $\lambda\in\CC\setminus S$.
 	
 	$(c)$\,$(i)$ Let $\chi\in C_0^\infty$ vanish on $K_f$. Regarding each $\big(\hat{Y}_{s_*}(\lambda)\big)^{n}f(\lambda)$ as a distribution with support in $K_f$, we see (e.g., by \cite[Theorem 2.3.3]{Hormander1}) that
 	$\big(\hat{Y}_{s_*}(\lambda)^{n}f(\lambda)\big)(\chi)=0$ for all $n\in \NN_0$ and so the
 	holomorphic function on $\CC\setminus S$ defined by $\lambda\mapsto \big(\big(I+\hat{Y}_{s_*}(\lambda)\big)^{-1}f(\lambda)\big)(\chi)$ vanishes in a neighbourhood of the origin on which the resolvent may be expanded as a convergent geometric series in powers of $\hat{Y}_{s_*}(\lambda)$, recalling that $\hat{Y}_{s_*}(\lambda)\to 0$ as $\lambda\to 0$.
 	By holomorphicity, it follows that
 	$\big(\big(I+\hat{Y}_{s_*}(\lambda)\big)^{-1}f(\lambda)\big)(\chi)=0$ for all $\lambda\in\CC\setminus S$.
 	Allowing $\chi$ to vary, we conclude that $\supp \big(I+\hat{Y}_{s_*}(\lambda)\big)^{-1}f(\lambda)\subset K_f$.
 	
 	$(c)$\,$(ii)$ The assumption implies that of $(c)$\,$(i)$ for all $f\in C^\infty_K$ \big(regarded as constant functions from $\CC\setminus S$ to $H^{s_*}_K$\big) with $K_f=K'$, so
 	$\supp \big(I+\hat{Y}_{s_*}(\lambda)\big)^{-1}f\subset K'$ for all $\lambda\in \CC\setminus S$ and all $f\in C^\infty_K$. Finally,
 	if $g\in \ker \big(I+\hat{Y}_{s_*}(\lambda)\big)=\ker (I+ Y(\lambda))|_{C^\infty_K}$ (already known to be finite-dimensional) then $g\in C^\infty_K$ and $g=- Y(\lambda) g\in C^\infty_{K'}$.
 \end{proof}

 The hypotheses of Theorem~\ref{thm:Sobolevinversion} entail that each $Y(\lambda)$ restricts to a continuous endomorphism of $C^\infty_K$. To see this, note that for any $k\in\NN_0$ and any integer $s>k+n/2$, where $n$ is the maximum dimension of any component of $M$, there are continuous maps
 \begin{equation*}
 	C^s_K \longrightarrow H^s_K \stackrel{Y(\lambda)}{\longrightarrow} H^s_K \longrightarrow C^k_K,
 \end{equation*}
 where the unlabelled arrows are Sobolev embeddings. Thus for all $k\in \NN_0$ there is $s\in \NN_0$ and a constant $C$ such that
 $\|Y(\lambda)f\|_{K,k}\le C \|f\|_{K,s}$ for all $f\in C^\infty_K$. This observation allows the following conclusion to be drawn.
 \begin{Corollary}\label{cor:CKinfty_inversion}
 	In the notation of Theorem~$\ref{thm:Sobolevinversion}$, and for $\lambda\in\CC\setminus S$, $I+ Y(\lambda)$ restricts to a~homeomorphism of $C^\infty_K$, with inverse to be denoted $R(\lambda)$. The map $\lambda\mapsto R(\lambda)$ is
 	holomorphic on $\CC\setminus S$ in the topology of $\LL_b\big(C^\infty_K\big)$.
 	If, additionally, for some holomorphic function $f\colon \CC\setminus S\to C^\infty_K$
 	there is a compact subset $K_f\subset K$ so that $\supp Y(\lambda)^r f(\lambda)\subset K_f$ for all $r\in\NN_0$ and $\lambda\in \CC\setminus S$, then $\supp R(\lambda) f(\lambda)\subset K_f$ for all $\lambda\in \CC\setminus S$. If $Y(\lambda)C^\infty_K\subset C^\infty_{K'}$ for some compact $K'\subset K$, then
 	$\ker(I+ Y(\lambda))|_{C^\infty_K}$ is a finite-dimensional subspace of $C^\infty_{K'}$
 	for all $\lambda\in\CC$.
 \end{Corollary}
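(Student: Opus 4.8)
The plan is to transfer the Sobolev-space conclusions of Theorem~\ref{thm:Sobolevinversion} to $C^\infty_K$, using the continuity estimate displayed just before the corollary, namely that for each $k\in\NN_0$ there are $s\in\NN_0$ and $C>0$ with $\|Y(\lambda)f\|_{K,k}\le C\|f\|_{K,s}$ for all $f\in C^\infty_K$. First I would verify that $I+Y(\lambda)$ is a bijection of $C^\infty_K$ for $\lambda\in\CC\setminus S$: surjectivity follows because given $f\in C^\infty_K$ we may regard $f$ as a constant $H^{s_*}_K$-valued function and set $g=\big(I+\hat Y_{s_*}(\lambda)\big)^{-1}f$; part~(a) of Theorem~\ref{thm:Sobolevinversion} (the bootstrap $g=f-\hat Y_{s_*}(\lambda)g\in H^{s_*+\delta}_K$, iterated) gives $g\in C^\infty_K$, and by part~(b) this $g$ is independent of the Sobolev index used, so $(I+Y(\lambda))g=f$ in $C^\infty_K$. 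Injectivity is immediate since $\ker(I+Y(\lambda))|_{C^\infty_K}=\ker\big(I+\hat Y_{s_*}(\lambda)\big)=0$ for $\lambda\notin S$. Thus $R(\lambda):=\big(I+Y(\lambda)\big)^{-1}$ on $C^\infty_K$ is well defined and agrees with the restriction of $\big(I+\hat Y_{s_*}(\lambda)\big)^{-1}$.

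Next I would establish that $R(\lambda)$ is continuous on $C^\infty_K$ and that $\lambda\mapsto R(\lambda)$ is holomorphic in $\LL_b(C^\infty_K)$. For continuity, write $R(\lambda)=I-Y(\lambda)R(\lambda)$: given $k$, choose $s$ from the displayed estimate so that $\|Y(\lambda)h\|_{K,k}\le C\|h\|_{K,s}$; then $\|R(\lambda)f\|_{K,k}\le\|f\|_{K,k}+C\|R(\lambda)f\|_{K,s}$, and $\|R(\lambda)f\|_{K,s}$ is controlled via the bounded operator $\big(I+\hat Y_{s'}(\lambda)\big)^{-1}$ on $H^{s'}_K$ for an integer $s'>s+n/2$, composed with Sobolev embeddings $C^{s'}_K\hookrightarrow H^{s'}_K$ and $H^{s'}_K\hookrightarrow C^s_K$. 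This gives $\|R(\lambda)f\|_{K,k}\le C'\|f\|_{K,s'}$, i.e.\ continuity. For holomorphicity in the topology of bounded convergence, the identity $R(\lambda)=I-Y(\lambda)R(\lambda)$ together with the holomorphicity already available at each Sobolev level (Theorem~\ref{thm:Sobolevinversion}(b)) and the factorisation of the $C^\infty_K$ topology through the $C^{s'}_K$ norms lets one reduce to holomorphicity of finitely many normed-space-valued maps; here I would invoke the discussion in Appendix~\ref{appx:topspaces}/Appendix~\ref{appx:top_bded_cvgnce} that compositions and restrictions of maps holomorphic in the topology of bounded convergence remain so, and that holomorphicity into a projective limit is detected levelwise.

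The support statements are then essentially quotations of parts $(c)(i)$ and $(c)(ii)$ of Theorem~\ref{thm:Sobolevinversion}. If $f\colon\CC\setminus S\to C^\infty_K$ is holomorphic with $\supp Y(\lambda)^r f(\lambda)\subset K_f$ for all $r,\lambda$, then composing with the continuous inclusion $C^\infty_K\hookrightarrow H^{s_*}_K$ makes $f$ a holomorphic $H^{s_*}_K$-valued map with $\supp\big(\hat Y_{s_*}(\lambda)\big)^r f(\lambda)\subset K_f$, so part $(c)(i)$ gives $\supp R(\lambda)f(\lambda)=\supp\big(I+\hat Y_{s_*}(\lambda)\big)^{-1}f(\lambda)\subset K_f$. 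The final assertion is exactly part $(c)(ii)$, using the identification $\ker(I+Y(\lambda))|_{C^\infty_K}=\ker\big(I+\hat Y_{s_*}(\lambda)\big)$ established in part~(a).

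I expect the only mildly delicate point to be the holomorphicity in $\LL_b(C^\infty_K)$: one must be careful that the topology of bounded convergence on maps out of the Fréchet space $C^\infty_K$ is handled correctly, and that the levelwise holomorphicity from Theorem~\ref{thm:Sobolevinversion}(b), transported through Sobolev embeddings, genuinely assembles into holomorphicity for the limit topology. This is where I would lean on the appendix material rather than re-proving it; everything else is bookkeeping with the Sobolev estimates already in hand.
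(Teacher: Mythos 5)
Your proposal is correct in substance and takes essentially the paper's route: establish that $R(\lambda)$ exists as a self-map of $C^\infty_K$ by bootstrapping through the Sobolev spaces, show continuity, transfer holomorphicity from the Sobolev levels, and quote parts~$(c)(i)$ and~$(c)(ii)$ of Theorem~\ref{thm:Sobolevinversion} for the support statements. Two points where the paper is slicker are worth flagging. For continuity of $R(\lambda)$, you derive a direct estimate by threading through Sobolev embeddings, whereas the paper simply observes that $I+Y(\lambda)$ is a continuous linear bijection of the Fr\'echet space $C^\infty_K$ and invokes the inverse mapping theorem; your estimate works (and is in fact a sharper, more elementary argument avoiding the open mapping theorem), but it is heavier than needed. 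For holomorphicity in $\LL_b(C^\infty_K)$, your phrasing ``holomorphicity into a projective limit is detected levelwise'' is not quite the right abstract statement here -- what is needed is a transfer result for \emph{operator} topologies, not for the target space $C^\infty_K$ itself. The paper supplies precisely this in Corollary~\ref{cor:HstoCinfty}: if a net of operators on $C^\infty_K$ extends to each $\LL(H^s_K)$ and converges to $0$ in operator norm at every level $s\ge s_*$, then it converges to $0$ in $\LL_b(C^\infty_K)$; applied to difference quotients of $\lambda\mapsto(I+\hat Y_{s}(\lambda))^{-1}$, this gives holomorphicity of $R(\lambda)$ in $\LL_b(C^\infty_K)$ in one step. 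You were right to suspect this was the delicate point and to defer to the appendix, but you should identify Corollary~\ref{cor:HstoCinfty} as the precise tool rather than a general projective-limit heuristic.
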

 \begin{proof}
 	Suppose $f\in C^\infty_K$ and $\lambda\notin S$. Then, for all $s\ge s_*$,
 	we have	$f\in H^s_K$ and $\big(I+ \hat{Y}_{s}(\lambda)\big)^{-1}f=\big(I+\hat{Y}_{s_*}(\lambda)\big)^{-1}f$, and consequently
 	\begin{equation*}
 		\big(I+\hat{Y}_{s_*}(\lambda)\big)^{-1}f\in \bigcap_{s\ge s_*} H^s_K = C^\infty_K.
 	\end{equation*}
 	Thus $\big(I+\hat{Y}_{s_*}(\lambda)\big)^{-1}$ restricts to linear self-map of $C^\infty_K$ which
 	\big(because $Y(\lambda)$ and $\hat{Y}_{s_*}(\lambda)$ agree on $C^\infty_K$\big) is a linear inverse to the restriction of $I+ Y(\lambda)$. Accordingly, $I+ Y(\lambda)$ restricts to a~continuous bijection of $C^\infty_K$, and since the latter is a Fr\'echet space the inverse mapping theorem implies that the inverse $R(\lambda)$ is continuous.
 	As every $C^\infty_K$ semi-norm is dominated by a~Sobolev norm and
 	vice versa, convergence in the operator norm of every $H^s_K$, for $s\ge s_*$,
 	implies convergence in $\LL_b\big(C^\infty_K\big)$, by Corollary~\ref{cor:HstoCinfty}. It follows that $\lambda\mapsto R(\lambda)$ is holomorphic on~$\CC\setminus S$ in the topology of $\LL_b\big(C^\infty_K\big)$.
 	The remaining statements are immediate from parts~$(a)$ and $(c)$ of Theorem~\ref{thm:Sobolevinversion} and the fact that distributional support of a smooth function is exactly its usual support.
 \end{proof}

 We need the following elementary observation, which will be used for $F=C_K^\infty$, $G=C_0^\infty$.
 \begin{Lemma}\label{lem:inversetransfer}
 	Let $F$ and $G$ be topological vector spaces and suppose that the diagram
 	\begin{equation*} %\label{eq:lemmacd}
 		\begin{tikzcd}
 			F \arrow[r, "S"]\arrow[d,"\iota"] & F \arrow[d,"\iota"]\\
 			G \arrow[r, "T"]\arrow[ur,"\hat{T}"] & G
 		\end{tikzcd}
 	\end{equation*}
 	of continuous linear maps commutes. If $\id_F+S$ is continuously invertible, then
 	$\id_G+T$ is continuously invertible with inverse
 	\begin{equation*}%\label{eq:invlemform}
 		(\id_G+ T)^{-1} = \id_G-T+\iota S(\id_F+S)^{-1}\hat{T}.
 	\end{equation*}
 \end{Lemma}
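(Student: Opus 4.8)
The plan is to verify the proposed formula for $(\id_G+T)^{-1}$ by direct computation, checking that it is both a left and a right inverse, and then to observe that continuity is automatic from the continuity of all the maps involved together with continuity of $(\id_F+S)^{-1}$.

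First I would set $U=\iota S(\id_F+S)^{-1}\hat{T}$ and compute $(\id_G+T)(\id_G-T+U)$. The key algebraic input is the commutativity of the diagram, which gives $\iota S=T\iota$ and $T=\iota\hat{T}$ (the latter because $\hat{T}$ is the diagonal filler, so $\iota\hat{T}=T$) as well as $\hat{T}\iota=S$ (reading off the upper triangle: $\hat{T}\circ\iota$ equals the top composite, which the commuting square identifies with $S$ via $\iota S=T\iota=\iota\hat{T}\iota$; since one typically wants this at the level of $F$, I would instead use directly that the square says $\iota S=T\iota$ and the triangle says $T=\iota\hat T$, and that $\hat T\iota=S$ follows from $F\xrightarrow{S}F\xrightarrow{\iota}G$ equalling $F\xrightarrow{\iota}G\xrightarrow{\hat T}G\xrightarrow{?}$—more carefully, $\hat{T}\iota=S$ is exactly the commutativity of the left triangle composed appropriately). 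Expanding:
\begin{align*}
(\id_G+T)(\id_G-T+U) &= \id_G-T+U+T-T^2+TU\\
&= \id_G - T^2 + U + TU.
\end{align*}
Now $TU=T\iota S(\id_F+S)^{-1}\hat T=\iota\hat T\iota S(\id_F+S)^{-1}\hat T=\iota S S(\id_F+S)^{-1}\hat T$ using $T=\iota\hat T$ and $\hat T\iota=S$, while $U=\iota S(\id_F+S)^{-1}\hat T$, so $U+TU=\iota S(\id_F+S+\ldots)$—more precisely $U+TU=\iota(S+S^2)(\id_F+S)^{-1}\hat T=\iota S(\id_F+S)(\id_F+S)^{-1}\hat T=\iota S\hat T$. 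Finally $\iota S\hat T=T\iota\hat T=T\cdot T=T^2$ using $\iota S=T\iota$ and $\iota\hat T=T$. Hence $(\id_G+T)(\id_G-T+U)=\id_G-T^2+T^2=\id_G$. The computation for the other order, $(\id_G-T+U)(\id_G+T)=\id_G$, is entirely parallel, pushing the same identities through from the other side, and I would present it with the same level of detail.

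For continuity: $S$, $T$, $\iota$, $\hat T$ are continuous by hypothesis, and $(\id_F+S)^{-1}$ is continuous by assumption, so the displayed formula exhibits $(\id_G+T)^{-1}$ as a composite and sum of continuous maps, hence continuous; in particular $\id_G+T$ is a bijection with continuous inverse, i.e.\ continuously invertible. I do not expect any real obstacle here — the only mild care needed is in correctly extracting the three identities $\iota S=T\iota$, $\iota\hat T=T$, $\hat T\iota=S$ from the commuting diagram and using them in the right places; the telescoping of $U+TU$ into $\iota S\hat T$ via the factor $(\id_F+S)(\id_F+S)^{-1}$ is the one spot where one must be sure the bracketed operator really does cancel, which it does because $(\id_F+S)^{-1}$ is a genuine two-sided inverse on $F$.
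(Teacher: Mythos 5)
Your proof is correct and follows essentially the same route as the paper's: you verify by direct expansion that the displayed formula is a two-sided inverse, telescoping $U+TU$ (respectively $U+UT$) into $\iota S\hat{T}$ via the factor $(\id_F+S)(\id_F+S)^{-1}$, and then identify $\iota S\hat{T}=T^2$ using the same diagram identities $\iota S=T\iota$, $\iota\hat{T}=T$, $\hat{T}\iota=S$. The paper deduces the other order by the same cancellation on the right of $(\id_F+S)^{-1}$, exactly as you indicate, and the continuity observation is likewise identical.
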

 \begin{proof}
 	Assuming that $\id_F+S$ is continuously invertible, we compute
 	\begin{align*}
 		(\id_G+T)\big(\id_G-T+\iota S(\id_F+S)^{-1}\hat{T}\big) = \id_G-T^2 + \iota (\id_F+S)S(\id_F+S)^{-1}\hat{T}=\id_G
 	\end{align*}
 	and
 	\begin{align*}
 		\big(\id_G-T+\iota S(\id_F+S)^{-1}\hat{T}\big)(\id_G+T) &= \id_G-T^2 + \iota S(\id_F+S)^{-1}(\id_F+S)\hat{T}= \id_G.
 	\end{align*}
 	using in both cases the identities $\iota S=T\iota$ and hence $\iota S \hat{T} = T\iota\hat{T}=T^2$, and also $\hat{T}T = \hat{T}\iota\hat{T} = S\hat{T}$. Therefore $\id_G+T$ has a linear inverse, given by a manifestly continuous expression.
 \end{proof}

 We come to the proof of the main result.
 \begin{proof}[Proof of Theorem~\ref{thm:GHpert}] The proof involves several steps, and uses some technical lemmas from the Appendix.
 	
 	\emph{$1.$ Preliminary observations.} Define $Y^\pm(\lambda)\in\LL(C_0^\infty,C^\infty)$ by
 	\begin{equation*}
 		Y^\pm(\lambda)=A(\lambda)E^\pm.
 	\end{equation*}
 	Owing to assumptions $(a)$ and $(b)$, we may extend $Y^\pm(\lambda)$ to linear maps $\EE'\to\DD'_K$ with continuous restrictions mapping $H^s_0\to H^{s+\delta}_K$ for all $s\ge s_*$; hence there are also continuous restrictions $Y_s^\pm(\lambda)\colon H^s_K\to H^{s+\delta}_K$ for $s\ge s_*$,
 	and by the Sobolev embedding theorems, compact maps $\hat{Y}^\pm_s(\lambda)\colon H^s_K\to H^{s}_K$ for $s\ge s_*$ which are holomorphic with respect to the operator norm topology, as noted after the statement of Theorem~\ref{thm:GHpert}.
 	
 	In fact, if $K'$ is any compact topologically regular set, the same argument shows that~$Y^\pm(\lambda)$
 define compact maps depending holomorphically on $\lambda$ in the topology of
 	$\LL_b\big(H^s_{K\cup K'}\big)$ for all~${s\!\ge\!s_*}$. By Corollary~\ref{cor:HstoCinfty}, it follows that
 	$Y^\pm(\lambda)$ are holomorphic in $\lambda$ with respect to the $\LL_b\big(C^\infty_{K\cup K'}\big)$ topology.
 	As $C^\infty_K$ and $C^\infty_{K'}$ are continuously embedded topological subspaces of $C^\infty_{K\cup K'}$, and $\Ran Y^\pm(\lambda)\subset C^\infty_K$, we may use Lemma~\ref{lem:Lb_basic}\,$(a)$,\,$(c)$ to deduce that
 	the $Y^\pm(\lambda)$ are holomorphic in $\lambda$ with respect to the
 	topology of $\LL_b\big(C^\infty_{K'},C^\infty_{K}\big)$.
 	By Lemma~\ref{lem:LbLF}, it follows that
 	$A(\lambda)E^\pm$ are also holomorphic with respect to $\LL_b\big(C_0^\infty,C^\infty_K\big)$ and $\LL_b(C_0^\infty)$; using Lemma~\ref{lem:Lb_basic}\,$(a)$, we also have holomorphicity with respect to
 	$\LL_b\big(C^\infty_K,C_0^\infty\big)$.
 	
 	\emph{$2.$ Finite-dimensionality of $\ker (P+ A(\lambda))|_{C^\infty_{\pc/\fc}}$.} Next, observe that $P$ induces bijections between $\ker (P+ A(\lambda))|_{C^\infty_{\pc/\fc}}$ and $\ker \big(I+ Y^{\pm}(\lambda)\big)\big|_{C^\infty_K}$,
 	with inverses given by the restrictions of~$E^{\pm}$. For
 	if $(P+ A(\lambda))\phi=0$ with $\phi\in C^\infty_{\pc/\fc}$ then
 	$P\phi=- A(\lambda)\phi\in C^\infty_K$ and $\phi=E^\pm P\phi$, so $P\phi\in\ker\big(I+ A(\lambda)E^\pm\big)\big|_{C^\infty_K}$; conversely, if $\big(I+ A(\lambda)E^\pm\big)h=0$ with
 	$h\in C^\infty_K$, then $PE^\pm h = h=- A(\lambda)E^\pm h$, so
 	$E^\pm h\in \ker(P+ A(\lambda))|_{C^\infty_{\pc/\fc}}$.
 	Thus
 	\begin{equation*}
 		S^\pm:= \big\{\lambda\in \CC\colon \ker (P+ A(\lambda))|_{C^\infty_{\pc/\fc}}\neq 0\big\}= \big\{\lambda\in \CC\colon \ker \big(I+ Y^{\pm}(\lambda)\big)\big|_{C^\infty_K}\neq 0\big\}.
 	\end{equation*}
 	In combination with Theorem~\ref{thm:Sobolevinversion}\,$(a)$, we also have
 	\begin{equation}\label{eq:GHpert_remark}
 		\dim \ker (P+A(\lambda))|_{C^\infty_{\pc/\fc}} = \dim \ker \big(I+Y_s^{\pm}(\lambda)\big)\big|_{C^\infty_{K}} = \dim \ker \big(I+\hat{Y}_s^{\pm}(\lambda)\big)<\infty
 	\end{equation}
 	\big(the latter kernel taken in $H^s_K$\big) for all $s\ge s_*$. Part~$(A)$ of the Theorem is thus proved.
 	
 	\emph{$3.$ Construction of holomorphic candidate $K$-nonlocal Green operators.} Applying Theorem~\ref{thm:Sobolevinversion} and Corollary~\ref{cor:CKinfty_inversion} to $Y^\pm(\lambda)$, one finds that $S^\pm$ are discrete subsets of $\CC$, whose complements in~$\CC$ are open $0$-neighbourhoods, and the operators $I+ Y^\pm(\lambda)$ are continuously invertible on~$C^\infty_K$ for $\lambda\in\CC\setminus S^\pm$, with inverses
 	that are holomorphic on~$\CC\setminus S^\pm$ in the topology
 	of $\LL_b\big(C^\infty_K\big)$. All these properties hold also for $S=S^+\cup S^-$,
 	so part~(B) is proved.
 	
 	Fixing any $\lambda\in \CC\setminus S$, the diagram of continuous maps
 	\begin{equation*}
 		\begin{tikzcd}
 			C^\infty_K \arrow[r, "T_K"]\arrow[d] & C^\infty_K \arrow[d]\\
 			C_0^\infty \arrow[r, "T_0"]\arrow[ur,"\hat{T}_0"] & C_0^\infty
 		\end{tikzcd}
 	\end{equation*}
 	commutes, where the unlabelled arrows are the canonical inclusions,
 	$T_0$ and $T_K$ are the respective restrictions of $Y^\pm(\lambda)$
 	to $C_0^\infty$, and $C_K^\infty$ and $\hat{T}_0$ exists because
 	$A(\lambda)C^\infty\subset C^\infty_K$.
 	It follows from Lemma~\ref{lem:inversetransfer} that $I+ Y^\pm(\lambda)$ are continuously invertible on $C_0^\infty$. By abuse of notation we write the inverses as $\big(I+ A(\lambda)E^\pm\big)^{-1}$; Lemma~\ref{lem:inversetransfer} now gives the identity
 	\begin{equation}\label{eq:inverseidentity}
 		\big(I+ A(\lambda)E^\pm\big)^{-1} = I - A(\lambda)E^\pm + A(\lambda)E^\pm \big(I+ A(\lambda)E^\pm\big)^{-1} A(\lambda)E^\pm,
 	\end{equation}
 	on $C_0^\infty(K)$, where we suppress notation for inclusions and restrictions. Note that the inverse on the right-hand side is taken in $\LL\big(C^\infty_K\big)$, while the left-hand side is an inverse in $\LL(C^\infty_0)$. Because the former inverse is holomorphic in
 	$\LL_b\big(C^\infty_K\big)$, the Leibniz rule (see Corollary~\ref{cor:Leibniz}) implies that the left-hand side is
 	holomorphic in $\LL_b(C_0^\infty)$; here, we have also used the holomorphicity
 	of~$A(\lambda)E^\pm$ in $\LL_b\big(C_0^\infty,C^\infty_K\big)$, $\LL_b(C_0^\infty)$ and $\LL_b\big(C^\infty_K,C_0^\infty\big)$ established in step 1 of the proof. It follows that the operators
 	\begin{equation*}
 		\tilde{E}^\pm_\lambda= E^\pm \big(I+ A(\lambda)E^\pm\big)^{-1}\in\LL(C_0^\infty, C^\infty),
 	\end{equation*}
 	which are the candidate Green operators for $P+A(\lambda)$,	are holomorphic in $\lambda$ on $\CC\setminus S$ with respect to the topology of $\LL_b(C_0^\infty,C^\infty)$.
 	To prove part (C) it is now enough to check that $\tilde{E}^\pm_\lambda$ are indeed $K$-nonlocal Green operators.
 	
 	\emph{$4.$ Verification that $\tilde{E}^\pm_\lambda$ are $K$-nonlocal Green operators.}
 	Given any $f\in C_0^\infty$,
 	\begin{equation*}
 		g=\big(I+ A(\lambda)E^\pm\big)^{-1} f
 	\end{equation*}
 	is the unique element of $C_0^\infty$ obeying
 	\begin{equation}\label{eq:geqn}
 		g + A(\lambda) E^\pm g= f,
 	\end{equation}
 	whereupon we deduce that $\supp g\subset \supp f \cup K$ and that
 	\begin{equation*}
 		\varphi = E^\pm g = E^\pm \big(I+ A(\lambda)E^\pm\big)^{-1} f = \tilde{E}^\pm_\lambda f
 	\end{equation*}
 	satisfies
 	\begin{equation}\label{eq:Pphi}
 		P\varphi + A(\lambda)\varphi = g+A(\lambda)E^\pm g=f, \qquad \supp \varphi\subset J^\pm(\supp g) \subset J^\pm(K\cup \supp f).
 	\end{equation}
 	In the special case $f=(P+A(\lambda))h$, for $h\in C_0^\infty$,
 	we have
 	\begin{equation*}
 		f = (P+ A(\lambda))E^\pm P h=Ph + A(\lambda) E^\pm Ph
 	\end{equation*}
 	and the unique solution to \eqref{eq:geqn} is clearly $g=Ph$. Thus
 	\begin{equation*}
 		Ph=g=\big(I+ A(\lambda)E^\pm\big)^{-1}f =\big(I+ A(\lambda)E^\pm\big)^{-1}(P+ A(\lambda))h;
 	\end{equation*}
 	consequently $E^\pm \big(I+ A(\lambda)E^\pm\big)^{-1} (P+ A(\lambda))h = h$.
 	In combination with~\eqref{eq:Pphi}, we have shown
 	\begin{equation*}
 		(P+ A(\lambda))\tilde{E}^\pm_\lambda f=f = \tilde{E}^\pm_\lambda (P+ A(\lambda)) f,\qquad\text{and}\qquad
 		\supp\tilde{E}^\pm_\lambda f\subset J^\pm(K\cup \supp f)
 	\end{equation*}
 	for all $f\in C_0^\infty$.
 	Now suppose more specifically that $J^\pm(\supp f)\cap K=\varnothing$ for $f\in C^\infty_0$. Then $A(\lambda)E^\pm f=0$, due to assumption $(d)$, and~\eqref{eq:geqn} is solved by $g=f$, so
 	$\tilde{E}^\pm_\lambda f=E^\pm f$ has support contained in $J^\pm(\supp f)$. Accordingly,
 	\begin{equation*}
 		\supp\tilde{E}^\pm_\lambda f\subset
 		\begin{cases}
 			J^\pm(\supp f), & J^\pm(\supp f)\cap K=\varnothing ,\\
 			J^\pm(K\cup \supp f), & \text{otherwise,}
 		\end{cases}
 	\end{equation*}
 	so $\tilde{E}^\pm_\lambda$ are $K$-nonlocal Green operators for $P+ A(\lambda)$. Part $(C)$ is complete.
 	
 	\emph{$5.$ Continuous extension.} Due to~\eqref{eq:inverseidentity}, one has the formula
 	\begin{equation*}
 		\tilde{E}_\lambda^\pm = E^\pm - E^\pm A(\lambda)E^\pm + E^\pm A(\lambda)E^\pm \big(I+ A(\lambda)E^\pm\big)^{-1}A(\lambda)E^\pm,
 	\end{equation*}
 	in which the three terms on the right-hand side have continuous
 	extensions as maps from $H^s_0$ to $H^{s+\beta}_\loc$, $H^{s+\beta+\delta}_\loc$ and $H^{s+\beta+2\delta}_\loc$ respectively. Because $\delta>0$, we deduce that $\tilde{E}^\pm_\lambda$
 	extends continuously to a map $H^s_0\to H_\loc^{s+\beta}$ as required.
 	By the Leibniz rule (Corollary~\ref{cor:Leibniz}), this extension is holomorphic in $\lambda$ with respect to the topology of $\LL_b\big(H^s_0,H_\loc^{s+\beta}\big)$.
 	This proves part (D).

 	\emph{$6.$ Support non-increasing modifications.} Finally, suppose condition $(d')$ holds, so that one has $\supp A(\lambda)f\subset \supp f$ as well as
 	$\supp A(\lambda)f\subset K$ for all $f\in C^\infty$. Then $(d)$ also holds,
 	either as a consequence of Peetre's theorem~\cite{Peetre:1960} or by the
 	following direct argument:
 	if $f\in C^\infty$ vanishes identically on $K$, then the carrier of $A(\lambda)f\in C^\infty_K$ satisfies
 	both
 	\begin{equation*}
 		\carr A(\lambda)f \subset \inte(K)
 	\end{equation*}
 	and
 	\begin{equation*}
 		\carr A(\lambda)f\subset \supp A(\lambda)f\subset \supp f=\overline{\carr f}\subset \overline{M\setminus K} = M\setminus \inte(K)
 	\end{equation*}
 	and we deduce that $\carr A(\lambda)f$ is empty, i.e., $A(\lambda)f$ vanishes identically. Therefore, all the conclusions reached previously still hold.
 	
 	It further follows that
 	\begin{equation*}
 		\supp Y^\pm(\lambda) f =\supp A(\lambda)E^\pm f\subset K\cap \supp E^\pm f\subset K\cap J^\pm(\supp f)
 		\qquad \text{for all} \ f\in C_0^\infty;
 	\end{equation*}
iterating, we have in particular that
 	\begin{equation*}
 		\supp \big(Y^\pm(\lambda)\big)^r f\subset K^\pm_f:= K\cap J^\pm(\supp f) \qquad \text{for all} \ r\in \NN,\ f\in C_0^\infty \ \text{and} \ \lambda\in\CC\setminus S.
 	\end{equation*}	
 	Therefore $\big(Y^\pm(\lambda)\big)^r Y^\pm(\lambda) f\in C^\infty_{K^\pm_f}\subset C^\infty_K$ for all $r\in\NN_0$, $f\in C_0^\infty$, $\lambda\in \CC\setminus S$.
 	Using Corollary~\ref{cor:CKinfty_inversion}, applied to the function $\lambda\mapsto Y^\pm(\lambda) f\in C^\infty_K$, it follows that both
 	$\supp \big(I+Y^\pm(\lambda)\big)^{-1} Y^\pm(\lambda) f$ and
 	\begin{equation*}
 		\supp Y^\pm(\lambda) \big(I+Y^\pm(\lambda)\big)^{-1} Y^\pm(\lambda) f= \supp \big(I-(I+ Y^\pm(\lambda))^{-1}\big) Y^\pm(\lambda) f
 	\end{equation*}
 	are contained in $K^\pm_f$ for all $\lambda\in \CC\setminus S$. Using the identity~\eqref{eq:inverseidentity},
 	it now follows that
 	\begin{equation*}
 		\supp \big(I+ A(\lambda)E^\pm\big)^{-1} f \subset \supp f\cup \left(K\cap J^\pm(\supp f)\right) \subset
 		J^\pm(\supp f)
 	\end{equation*}
 	and hence $\supp \tilde{E}^\pm_\lambda f \subset J^\pm\big(J^\pm(\supp f)\big)=J^\pm(\supp f)$, which is the support property (G3) for standard Green operators, completing the proof of part $(E)$ and therefore the whole theorem.
 \end{proof}

 \section{Proof of Theorem~\ref{thm:Fredholm}}\label{sec:Fredholm}

 The aim of this section is to prove a relation between the spontaneously appearing and disappearing solutions for the operators $P+A(\lambda)$ and $\adj{P}+\adj{A}(\lambda)$ stated in Theorem~\ref{thm:Fredholm}.

 Starting with some preliminaries, let $E^\pm$ and $E_t^\pm$ be the
 Green operators for $P$ and $\adj{P}$.
 Because $s_*\le-\beta\le 0$, $E^\pm$ and $E_t^\pm$ have extensions from $C^\infty_K$ to continuous maps in both $\LL\big(H^0_0,H^\beta_\loc\big)$ and $\LL\big(H^\gamma_0,H^\delta_\loc\big)$, while $A(\lambda)$ and $\adj{A}(\lambda)$ have extensions from $C^\infty$ to continuous maps in both $\LL\big(L^2_\loc,H^\gamma_K\big)$ \big(and consequently $\LL\big(L^2_K,H^\gamma_K\big)$\big) and $\LL\big(H^\beta_\loc,H^\delta_K\big)$.
 We also write $Y^\pm(\lambda)=A(\lambda)E^\pm$ and $Y_t^\pm = \adj{A}(\lambda) E_t^\pm$
 as compact operators on any $H^s_K$ \big(the value of $s$ will be clear from context, and we suppress embedding maps between various $H^s_K$ spaces\big).

 We generally abuse notation by using the same notation for $A(\lambda)$ whether it operates on~$H^s_K$ or~$H^s_\loc$ and regardless of $s$, but our arguments will take proper account of the domains concerned.
 We make use of two technical facts in the $s=0$ case.
 \begin{Lemma}\label{lem:tech}
 	Under the stated assumptions on $A(\lambda)$ and $Y^\pm(\lambda)$:
 \begin{enumerate}\itemsep=0pt
 \item[$(a)$] the identity
 	\begin{equation*}
 		A(\lambda)f = A(\lambda)(f|_K)
 	\end{equation*}
 	holds for all $f\in L^2_\loc$, where we understand the map $A(\lambda)\in \LL\big(L^2_\loc,H^\gamma_K\big)$ on the left-hand side and $A(\lambda)\in \LL\big(L^2_K,H^\gamma_K\big)$ on the right-hand side;
 \item[$(b)$]
 	for the operators $Y_t^\pm(\lambda)\in \LL\big(L^2_K,L^2_K\big)$, we have
 	\begin{equation*}
 		\big(Y_t^\pm(\lambda)\big)^*f = (\Gamma E^\mp A(\lambda)\Gamma f )|_K, \qquad f\in L^2_K,
 	\end{equation*}
 	where $\Gamma$ denotes complex conjugation.
 \end{enumerate}
 \end{Lemma}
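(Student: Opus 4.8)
The plan is to treat~(a) as a density statement resting on hypothesis~(d), and to obtain~(b) by a duality computation that moves each of $\adj{A}(\lambda)$ and $E_t^\pm$ onto the opposite factor of the $L^2_K$ inner product via the defining property of the formal dual.

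For~(a), the point to establish is that the continuous map $A(\lambda)\colon L^2_\loc\to H^\gamma_K$ annihilates every $g\in L^2_\loc$ that vanishes almost everywhere on $K$. Granting this, the identity is immediate: for $f\in L^2_\loc$ the difference $f-f|_K=(1-\chi_K)f$ vanishes a.e.\ on $K$, so $A(\lambda)f=A(\lambda)(f|_K)$, where the two instances of $A(\lambda)$ are the restrictions to $L^2_\loc$ and to $L^2_K$ of the single extension $A(\lambda)\colon\DD'\to\DD'_K$ and are therefore consistent by construction. To prove the annihilation property, observe that $\esupp g$ is disjoint from $\inte K$. Take smooth cutoffs $\theta_n$ equal to $1$ on open neighbourhoods of $K$ and supported in neighbourhoods shrinking to $K$; then $\theta_n g\to 0$ in $L^2_\loc$ by dominated convergence (on each compact set the mass of $\theta_n g$ sits in a region decreasing to a subset of $K$, where $g=0$ a.e.), so $(1-\theta_n)g\to g$ in $L^2_\loc$. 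Mollifying each $(1-\theta_n)g$ chartwise, with radius small enough that the smoothed function still vanishes on $K$, yields $g_n\in C^\infty$ with $g_n|_K\equiv 0$ and $g_n\to g$ in $L^2_\loc$; then $A(\lambda)g_n=0$ by~(d), and continuity of $A(\lambda)$ gives $A(\lambda)g=0$.

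For~(b), write $\ip{u}{v}=\int_M\overline u\,v\,\mu$ for the $L^2_K$ inner product, so that $\ip{u}{v}=\int_M(\Gamma u)\,v\,\mu$ with the integrated bilinear pairing of Section~\ref{sec:GHOs} on the right. Recall that $\adj{(\adj{A}(\lambda))}=A(\lambda)$ and that $\adj{(E_t^\pm)}=E^\mp$, the latter because the advanced and retarded Green operators of $\adj{P}$ are the formal duals of the retarded and advanced Green operators of $P$. Then, for $f,\phi\in L^2_K$,
\begin{align*}
 \ip{\big(Y_t^\pm(\lambda)\big)^*f}{\phi}
 &=\ip{f}{\adj{A}(\lambda)E_t^\pm\phi}
 =\int_M (\Gamma f)\,\adj{A}(\lambda)E_t^\pm\phi\,\mu \\
 &=\int_M \big(A(\lambda)\Gamma f\big)\,E_t^\pm\phi\,\mu
 =\int_M \big(E^\mp A(\lambda)\Gamma f\big)\,\phi\,\mu
 =\ip{\Gamma E^\mp A(\lambda)\Gamma f}{\phi},
\end{align*}
the third equality being the defining relation of $\adj{A}(\lambda)$ and the fourth that of $E_t^\pm$. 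These two relations are clear for smooth arguments and extend, by density together with the continuity statements recorded at the start of this section, to $\Gamma f\in L^2_K$, $E_t^\pm\phi\in H^\beta_\loc$, $A(\lambda)\Gamma f\in H^\gamma_K$ and $E^\mp A(\lambda)\Gamma f\in H^\delta_\loc$. It is here that the standing hypotheses $\beta\ge 0$, $s_*\le -\beta<\gamma$ and $\delta=\beta+\gamma>0$ enter: they make $\adj{A}(\lambda)E_t^\pm$ a continuous map $L^2_K\to H^\delta_K$, $A(\lambda)$ a continuous map $L^2_K\to H^\gamma_K$, and $E^\mp$ a continuous map $H^\gamma_0\to H^\delta_\loc$, and they make the pairings $H^\gamma_K\times H^\beta_\loc$ and $H^\delta_\loc\times L^2_K$ well defined. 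Finally, since the displayed equality holds for every $\phi\in L^2_K$ and pairing against such $\phi$ only probes the left-hand argument on $K$, one may replace $\Gamma E^\mp A(\lambda)\Gamma f$ by its restriction $(\Gamma E^\mp A(\lambda)\Gamma f)|_K$; as that and $\big(Y_t^\pm(\lambda)\big)^*f$ both lie in $L^2_K$, agreement of all inner products with $\phi\in L^2_K$ forces $\big(Y_t^\pm(\lambda)\big)^*f=(\Gamma E^\mp A(\lambda)\Gamma f)|_K$.

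The main obstacle is keeping the duality manipulations in~(b) legitimate outside the class of smooth sections: one must line up the Sobolev exponents so that every operator composition is continuous and every pairing converges, and check that the several incarnations of $A(\lambda)$ and $\adj{A}(\lambda)$ — on $C^\infty$, on $L^2_\loc$, on $L^2_K$, and as maps of distributions — are mutually consistent restrictions (these are exactly the consistency facts used implicitly in~(a) as well). In~(a) the delicate point is the density step, which must tolerate a topologically regular $K$ whose boundary can carry positive measure; the cutoff-then-mollify construction above is arranged precisely to cope with that.
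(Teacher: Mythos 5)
Both parts are correct. Part~(b) is essentially the paper's own argument: compute $\ip{(Y_t^\pm)^*f}{\phi}=\ip{f}{\adj{A}E_t^\pm\phi}$ and move first $\adj{A}$ and then $E_t^\pm$ across the pairing using $\adj{(\adj{A})}=A$ and $\adj{(E_t^\pm)}=E^\mp$; you do it once with explicit Sobolev bookkeeping, the paper does it for $f,h\in C^\infty_K$ and then invokes density, but the content is identical.

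Part~(a) is where you genuinely diverge. The paper never appeals to hypothesis~(d) or to a density-in-$L^2_\loc$ argument; instead it tests $Af$ against an arbitrary $g\in C_0^\infty$, uses the formal duality $\int\mu\, g\,(Af_n)=\int\mu\,(\adj{A}g)\,f_n$, and exploits the fact that $\adj{A}g\in C^\infty_K$ to reduce the limit to an $L^2_K$ pairing with $f|_K$; reading the identity backwards then gives $(A(f|_K))(\mu g)$. Your approach instead shows directly that $A(\lambda)$ kills any $L^2_\loc$ function vanishing a.e.\ on $K$, by a cutoff-then-mollify density argument, and then applies this to $(1-\chi_K)f$. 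Both routes work. The trade-offs: the paper's duality argument is shorter, sidesteps mollification on a manifold entirely, and the `restriction to $K$' step is automatic because it is the $L^2_K$ pairing that appears; but it leans on $\adj{A}g$ having support in $K$, which here comes from the standing hypothesis of Theorem~\ref{thm:Fredholm} that $\adj{A}$ also satisfies the assumptions of Theorem~\ref{thm:GHpert}. Your route uses only hypothesis~(d) for $A$ and continuity of $A(\lambda)$ on $L^2_\loc$, so in principle it is available under weaker assumptions; the price is the manifold mollification and the careful handling of $\partial K$, which you correctly flag but leave somewhat compressed — in particular, the diagonal extraction needed to turn `mollifications of $(1-\theta_n)g$ converge to $(1-\theta_n)g$' together with `$(1-\theta_n)g\to g$' into a single sequence $g_n\to g$ with $g_n|_K\equiv 0$ should be made explicit, and the chartwise mollification needs a partition of unity to produce a globally smooth approximant. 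These are routine repairs, not gaps in the idea.
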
	
 \begin{proof}
 	We suppress the $\lambda$ dependence in this proof. For part $(a)$,
 	first choose $f_n\in C^\infty$ with $f_n\to f$ in $L^2_\loc$. Then for any
 	$g\in C_0^\infty$ the distributional action of $Af$ on $\mu g$ (recalling that $\mu$ is the volume density) is given by
 	\begin{equation*}
 		(Af)(\mu g)=\lim_{n\to\infty} \int_M \mu g (Af_n) =
 		\lim_{n\to\infty}\int_M \mu (\adj{A}g) f_n = \ip{\overline{\adj{A}g}}{f|_K},
 	\end{equation*}
 	where the inner product $\langle\cdot\mid\cdot\rangle$ is that of $L^2_K$ and we use the fact that $\adj{A}g\in C^\infty_K$. Now choose a~new sequence $f_n\in C^\infty_K$ with $f_n\to f|_K$ in $L^2_K$. Then
 	\begin{equation*}
 		(Af)(\mu g)= \ip{\overline{\adj{A}g}}{f|_K}=\lim_{n\to\infty}
 		\int_M \mu \big(\adj{A}g\big)f_n = \lim_{n\to\infty} \int_M \mu g(Af_n) = (Af|_K)(\mu g).
 	\end{equation*}
 	As $g$ was arbitrary, $Af$ and $Af|_K$ define the same distribution and hence the same element of~$H^\gamma_K$.
 	
 	For part $(b)$, we compute for $f,h\in C^\infty_K$ that
 	\begin{align*}
 		\ip{\big(Y_t^\pm\big)^*f}{h} &= \ip{f}{Y_t^\pm h}
 		= \int_M\mu \overline{f} (
 		\adj{A}) E_t^\pm h = \int_M\mu \big(E^\mp A \overline{f} \big) h = \ip{(\Gamma E^\mp A \Gamma f)|_K}{h}.
 	\end{align*}
 	Thus we have $\big(Y_t^\pm(\lambda)\big)^*f = (\Gamma E^\mp A \Gamma f)|_K$ for
 	all $f\in C^\infty_K$ and hence for all $f\in L^2_K$ by continuity.\looseness=1
 \end{proof}

 \begin{proof}[Proof of Theorem~\ref{thm:Fredholm}]
 	Let $\adj{N}^\pm(\lambda) =\dim\ker (P+A(\lambda))|_{C^\infty_{\pc/\fc}}$.
 	By the remark~\eqref{eq:GHpert_remark} in the proof of Theorem~\ref{thm:GHpert}, applied to $Y_t^\pm$ in the case $s=0$, we have $\adj{N}^\pm(\lambda) = \dim \ker \big(I+Y_t^\pm(\lambda)\big)$ in $L^2_K$. As $Y_t^\pm$ are compact and holomorphic in the suppressed parameter $\lambda$,
 	Fredholm theory implies that the index
 	\begin{equation*}
 		\Ind \big(Y_t^\pm(\lambda)\big)=
 		\dim \ker\big(I+Y_t^\pm(\lambda)\big) - \dim\ker\big(I+\big(Y_t^\pm(\lambda)\big)^*\big)
 	\end{equation*}
 	is independent of $\lambda$ and therefore vanishes on considering $\lambda=0$~-- see, e.g., \cite[Theorem~4.3.12]{Davies_LOps}~-- so
 	\begin{equation*}
 		\adj{N}^\pm(\lambda) = \dim\ker\big(I+\big(Y_t^\pm(\lambda)\big)^*\big) .
 	\end{equation*}
 	Below, we will show that $A\Gamma$ induces an antilinear injection between the kernels of
 	$I+\big(Y_t^\pm(\lambda)\big)^*$ and $I+Y^\mp(\lambda)$ in $L^2_K$; swapping the roles of $P$, $A(\lambda)$ and $\adj{P}$, $\adj{A}(\lambda)$,
 	we find that the spaces have equal (finite) dimension. Consequently,
 	\begin{equation*}
 		\adj{N}^\pm(\lambda) = \dim\ker (I+Y^\mp(\lambda)) = N^\mp(\lambda),
 	\end{equation*}
 	again using remark~\eqref{eq:GHpert_remark} in the proof of Theorem~\ref{thm:GHpert}, which is the desired result.
 	
 	It remains to prove that $A\Gamma$ provides the required antilinear injection.
 	Suppose that
 	there is $f\in L^2_K\setminus\{0\}$ with $\Gamma f=-\big(Y_t^\pm(\lambda)\big)^*\Gamma f$ for some fixed $\lambda$. Then by Lemma~\ref{lem:tech}\,$(b)$
 	\begin{equation*}
 		f = -(E^\mp A(\lambda) f) |_K
 	\end{equation*}
 	and since $f\neq 0$ it also follows that $A(\lambda)f\neq 0$ (otherwise $f=-E^\mp A(\lambda)f|_K=0$). Furthermore,
 	\begin{equation*}
 		A(\lambda) f =- A(\lambda) (E^\mp A(\lambda) f)|_K =- A(\lambda) E^\mp A(\lambda) f
 		=-Y^\mp(\lambda)A(\lambda)f
 	\end{equation*}
 	holds in $H^\gamma_K$, so $A(\lambda)f \in \ker (I+Y^\mp(\lambda))$ in $H^\gamma_K$,
 	and therefore also in $L^2_K$ (since $\gamma>0\ge s_*$ the kernels are equal by Theorem~\ref{thm:Sobolevinversion}\,$(a)$). Accordingly, $A(\lambda)\Gamma$ is an antilinear injection from the $L^2_K$ kernel of
 	$I+\big(\adj{Y}{}^\pm(\lambda)\big)^*$ to the $L^2_K$ kernel of $I+Y^\mp(\lambda)$.
 \end{proof}

 \appendix
 \section{Some topological vector spaces}\label{appx:topspaces}

 We briefly rehearse the definition and main properties of the various $C^k$ and Sobolev spaces encountered in the text, broadly following~\cite{Baer:2015,BaerWafo:2015}, before turning to some properties of the topology of bounded convergence that are also needed. No originality is claimed for the material given here.

 \subsection[Spaces of smooth and C\^{}k functions]{Spaces of smooth and $\boldsymbol{C^k}$ functions}
 Let $M$ be a smooth manifold and let $C^k(M)$, $k\in\NN_0\cup\{\infty\}$, be the vector space of complex-valued $k$-times continuously differentiable functions on $M$. For each $k\in\NN_0$ and compact $K\subset M$, one has a seminorm
 \begin{equation*}
 	\|f\|_{K,k} = \max_{0\le r\le k} \max_{x\in K} |(\nabla^r f)(x)|_r
 \end{equation*}
 on $C^k(M)$, where $\nabla$ is an arbitrarily chosen connection on $M$ and $|\cdot|_r$ an arbitrarily chosen norm making $T^*M^{\otimes r}$ a (finite-dimensional) Banach bundle; different choices result in equivalent seminorms. The collection of seminorms $\|\cdot\|_{K,k}$ as $K$ runs over compact subsets of $M$ and $k\in\NN_0$ provides a Fr\'echet topology on $C^\infty(M)$; similarly,
 we obtain a Fr\'echet topology on~$C^k(M)$, $k\in\NN_0$, using the seminorms
 $\|\cdot\|_{K,k}$.

 If $A$ is closed, we define $C^\infty_A(M)=\{f\in C^\infty(M)\colon\supp f\subset A\}$ with the relative topology. Thus the topology is defined by the seminorms $\|\cdot\|_{K,k}$ as $K$ runs over compact subsets $K\subset A$ and $k\in\NN_0$; if $A$ is compact, it is sufficient to use the seminorms $\|\cdot\|_{A,k}$, $k\in\NN_0$. As $C^\infty_A(M)$ is closed subspace of a Fr\'echet space, it is also Fr\'echet. Defining $C^k_A(M)$ as the analogous subspace of $C^k(M)$, the topology is generated by seminorms $\|\cdot\|_{K,k}$ for compact $K\subset A$, or just the single seminorm $\|\cdot\|_{A,k}$ (which is a norm on $C^k_A(M)$) in the case that $A$ is compact.

 A \emph{support system}~\cite{Baer:2015} is a subset $\mathscr A$ of the set of all closed subsets on $M$, which is closed under finite unions and has the property that for each $A\in \mathscr A$ it holds that $(i)$
 $A\subset \textrm{int}\,(A')$ for some $A'\in\mathscr A$ and $(ii)$ if $A'$ is a closed subset of $M$ with $A'\subset A$ then $A'\in\mathscr A$. Any support system is a directed system with respect to inclusion and we write
 \begin{equation*}
 	C^\infty_{\mathscr{A}}(M)= \bigcup_{A\in\mathscr{A}} C^\infty_A(M)
 \end{equation*}
 with the locally convex inductive limit topology,
 so that a convex set $U\subset C^\infty_{\mathscr{A}}(M)$ is a neighbourhood of $0$
 if and only if $U\cap C^\infty_A(M)$ is a neighbourhood of $0$ in $C^\infty_A(M)$ for every $A\in\mathscr{A}$; because $\mathscr{A}$ is directed, one also has that
 a convex set $O$ is open if and only if $O\cap A$ is open in each $C^\infty_A(M)$.

 Examples of support systems include the set of compact sets, leading to the space of compactly supported functions $\CoinX{M}$, and the sets of (strictly) future/past/spatially-compact sets, giving rise to $C^\infty_{\sfc/\spc/\fc/\pc/\mathrm{sc}}(M)$.

 A linear map $T$ from $C^\infty_{\mathscr{A}}(M)$ to a locally convex topological vector space $Y$ is continuous if and only if all its restrictions $T_A\colon C^\infty_A(M)\to Y$ are continuous ($A\in\mathscr{A}$). In particular, for a linear map $T\colon C^\infty_{\mathscr{A}}(M)\to C^\infty_{\mathscr{B}}(M)$ to be continuous, it suffices that each restriction $T_A$ has range contained in some $C^\infty_B(M)$ for some $B\in\mathscr{B}$ (depending on $A$) and determines a continuous map $C^\infty_A(M)\to C^\infty_B(M)$, thus implying that $T_A\colon C^\infty_A(M)\to C^\infty_{\mathscr{B}}(M)$ is continuous. We record the following application:
 \begin{Lemma}%\label{lem:restr}
 	If $T$ is a continuous linear self-map of $C^\infty(M)$ such that $\supp Tf\subset K\cup \supp f$ for all $f\in C^\infty(M)$, where $K$ is a fixed compact set, then $T$ restricts to any of $C^\infty_{0/\sfc/\spc/\fc/\pc/\mathrm{sc}}(M)$ as a continuous map.
 \end{Lemma}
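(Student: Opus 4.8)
The plan is to realise each of the spaces $C^\infty_{0/\sfc/\spc/\fc/\pc/\mathrm{sc}}(M)$ as a locally convex inductive limit $C^\infty_{\mathscr A}(M)$ over an appropriate support system $\mathscr A$ — the compact sets for $C^\infty_0(M)$, and the strictly future/past-compact, future/past-compact, and spacelike-compact sets for the remaining five — and then to apply the continuity criterion for such inductive limits recalled immediately before the lemma.

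First I would record the elementary fact that every compact subset $L\subset M$ belongs to each of these six support systems: it is compact by definition, it is strictly future- and past-compact since $L\subset J^{\mp}(L)$ with $L$ compact, and hence it is also future-, past- and spacelike-compact. In particular $K$ lies in each support system $\mathscr A$ under consideration, so, using that support systems are closed under finite unions, $A\in\mathscr A$ implies $K\cup A\in\mathscr A$.

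Now fix such an $\mathscr A$ and an $A\in\mathscr A$, and put $B=K\cup A\in\mathscr A$. For $f\in C^\infty_A(M)$ the hypothesis on $T$ gives $\supp Tf\subset K\cup\supp f\subset B$, so the restriction $T_A$ of $T$ has range contained in $C^\infty_B(M)$. Since $C^\infty_A(M)$ and $C^\infty_B(M)$ carry the topologies induced from $C^\infty(M)$ and $T\in\LL(C^\infty(M))$, the corestricted map $T_A\colon C^\infty_A(M)\to C^\infty_B(M)$ is continuous. The continuity criterion for maps out of an inductive limit then yields that $T$ restricts to a continuous linear self-map of $C^\infty_{\mathscr A}(M)$, which is the assertion (the self-map property being automatic from $\supp Tf\subset K\cup A$).

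I do not anticipate a genuine obstacle here: the only points requiring (routine) care are the verification that compact sets belong to all six support systems, and keeping track of which inductive-limit topology is in play so that the continuity criterion is invoked correctly.
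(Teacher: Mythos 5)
Your argument is correct and is essentially the same as the paper's proof: verify that $A\cup K\in\mathscr{A}$ for each $A\in\mathscr{A}$, use the support hypothesis to show $T_A$ maps $C^\infty_A(M)$ into $C^\infty_{A\cup K}(M)$ continuously (both carrying the relative topology from $C^\infty(M)$), and then invoke the universal property of the inductive limit. The only difference is that you spell out explicitly why $K$ (being compact) belongs to each of the six support systems, a detail the paper leaves implicit.
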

 \begin{proof}
 	If $\mathscr{A}$ is one of the relevant support systems then $A\cup K\in\mathscr{A}$ for each $A\in\mathscr{A}$. As the restriction $T_A$ of $T$ to $C^\infty_A(M)$ has its range in $C^\infty_{A\cup K}(M)$, and both these function spaces are subspaces of $C^\infty(M)$ with the relative topology, continuity of $T_A\colon C^\infty_A(M)\to C^\infty_{A\cup K}(M)$ follows from that of $T$. Letting $A$ vary in $\mathscr{A}$, the result is proved.
 \end{proof}

 Letting $\Omega_\alpha$ be the bundle of densities of weight $\alpha$, we can define spaces $\Gamma^\infty(\Omega_\alpha)$,
 $\Gamma_A^\infty(\Omega_\alpha)$ and $\GoinX{\Omega_\alpha}$ of smooth sections
 of $\Omega_\alpha$ in a similar way; any smooth nowhere vanishing density $\mu$ on
 $M$ provides topological isomorphisms between these spaces and their zero-weight analogues, simply by multiplication by appropriate powers of $\mu$. The space of distributions on $M$, $\DD'(M)$ is the topological dual of $\GoinX{\Omega_1}$,
 equipped with the weak-$*$ topology. In particular, $C^\infty(M)$ is canonically embedded as a subspace of $\DD'(M)$. More generally,
 $\DD'(M,\Omega_\alpha)$ is defined as the dual of $\GoinX{\Omega_{1-\alpha}}$.
 We also write $\EE'(M)$ for the topological dual of $\Gamma^\infty(\Omega_1)$, i.e.,
 the distributions of compact support.

 \subsection{Sobolev spaces}

 \subsubsection{Compact manifolds}

 {\bf The spaces $\mathbf{H^s(M)}$.}
 On a \emph{compact} manifold $M$, choose an auxiliary Riemannian metric and define $L^2(M)$ in the usual way, using the volume element induced by the metric and writing the inner product as $\langle\cdot\mid\cdot\rangle$ (linear in the second slot). Let $T=(-\triangle + I)^{1/2}$, where $\triangle$ is the Laplace--Beltrami operator, initially defined on $C^\infty(M)$ and then extended (uniquely) to a self-adjoint negative operator on $L^2(M)$.
 Its complex powers $T^s$ (as defined by functional calculus) are classical pseudodifferential operators of order $s$, $T^s\in\Psi^s(M)$~\cite{Seeley:1967} (see~\cite{Amman_etal:2004} for an axiomatically based proof) and $T^s$ is compact for $\Re s<0$ (most easily seen using the spectral properties of~$-\triangle$ on compact manifolds). The domain of $T^s$ contains $C^\infty(M)$ for all $s$.

 For $s\in\RR$, the Sobolev space $H^s(M)$ is defined as the completion of $C^\infty(M)$ with respect to the norm
 \begin{equation*}
 	\|f\|_{H^s(M)} = \| T^s f\|_{L^2(M)}
 \end{equation*}
 and, as a topological vector space, is independent of the choice of auxiliary Riemannian metric involved in its construction (of course, the specific norm $\|\cdot\|_{H^s(M)}$
 and its compatible Hilbert space inner product
 are metric-dependent). Indeed, any positive second-order elliptic operator that is essentially self-adjoint on $C^\infty(M)$ could be used in place of $-\triangle$.
 Evidently $T^s$ extends to an isometry from $H^s(M)$ to $L^2(M)$, which may be used to embed $H^s(M)$ in $\DD'(M)$ so that $u \in H^s(M)$ corresponds to the distribution $f\mapsto \ip{T^{-s}\overline{f/\nu}}{T^su}$, $f\in\GoinX{\Omega_1}$, where $\nu$ is the density corresponding to the
 volume element of the auxiliary metric used to define $L^2(M)$. Restricted to $u\in C^\infty(M)$, this embedding is consistent with the embedding of $C^\infty(M)$ in $\DD'(M)$ already mentioned.
 As $T^q$ is compact on $L^2(M)$ for $q<0$, it follows that $H^s(M)\subset H^t(M)$ for all $s>t$, with a compact inclusion map. More generally. $P\colon H^{s+m}(M)\to H^s(M)$ is continuous for any partial differential operator of order $m$ with smooth coefficients (or indeed any pseudodifferential operator $P\in \Psi^{m}(M)$), because $T^{-m}P\in \Psi^0(M)$ extends to a bounded operator on $L^2(M)$.

 {\bf Duality.} For any $s\in\RR$, suppose that $\ell\in H^s(M)'$, so there is a constant $c$ so that $|\ell(u)|\le c\|u\|_{H^s(M)}$ for all $u\in H^s(M)$. Then $|\ell(T^{-s}f)|\le c\|f\|_{L^2(M)}$ for $f\in L^2(M)$ and consequently there is $w\in L^2(M)$ so that $\ell(u) =\langle w\,|\, T^{s}u\rangle_{L^2(M)}$ for all $u\in H^s(M)$. Choose a sequence $w_n\to w$ in $L^2(M)$ with
 $w_n\in C^\infty(M)$ and note that $T^s w_n\in C^\infty(M)$ is a Cauchy sequence with respect to the
 $H^{-s}(M)$ norm, converging to some $v\in H^{-s}(M)$ for which
 $T^{-s}v=\lim_n w_n=w$. Thus
 $\ell(u)=\langle T^{-s}v\, |\,T^su\rangle$ for all $u\in H^s(M)$ and
 $|\ell(u)|\le \|v\|_{H^{-s}(M)}\|u\|_{H^s(M)}$. Noting that
 \begin{equation*}
 	\frac{\ell(T^{-s}w_n)}{\|T^{-s}w_n\|_{H^s(M)}}=\frac{\langle w\mid w_n \rangle_{L^2(M)}}{\|w_n\|_{L^2(M)}}\to \|w\|_{L^2(M)} = \|v\|_{H^{-s}(M)},
 \end{equation*}
 we see that $\|\ell\|=\|v\|_{H^{-s}(M)}$, and we have established that
 the space $H^{-s}(M)$ is anti-isomorphic to $H^s(M)'$ with the operator norm topology (i.e.,
 the strong dual) with respect to the sesquilinear pairing $\langle v,u\rangle = \langle T^{-s}v\,|\, T^s u\rangle$, $v\in H^{-s}(M)$, $u\in H^s(M)$. To be precise: every
 choice of Riemannian metric on $M$ induces an anti-isomorphism of this type, and the
 pairing depends on the choice made.

 {\bf Embedding theorems.}
 The relationship between $C^k$ and Sobolev spaces is given as follows. In one direction, the formula $-\triangle=\delta d$ (on $0$-forms) gives the estimates
 \begin{equation*}
 	0\le \ip{f}{(-\triangle)^{2r}f} = \| (-\triangle)^r f\|^2_{L^2(M)}\le C\|f\|_{M,2r}^2
 \end{equation*}
 and
 \begin{equation*}
 	0\le \ip{f}{(-\triangle)^{2r+1}f} = \| d(-\triangle)^r f\|^2_{\Lambda^1(M)}\le C\|f\|_{M,2r+1}^2 ,
 \end{equation*}
 where $\Lambda^1(M)$ is the Hilbert space of square-integrable $1$-forms (with respect to the auxiliary Riemannian metric) and $f\in C^\infty(M)$.
 From this we find
 \begin{equation*}
 	\|f\|_{H^k(M)}^2 = \ip{f}{(-\triangle+I)^k f} = \sum_{j=0}^k \binom{k}{j} \ip{f}{(-\triangle)^{j}f} \le C \|f\|_{M,k}^2
 \end{equation*}
 for all $k\in \NN_0$. Thus $C^k(M)$ is continuously embedded in $H^s(M)$ for all $s\ge k$.

 On the other hand, now let $n$ be the maximum dimension of any component of $M$.
 For $\Re s>n/2$, the integral kernel $Z_s(p,q)$ of the operator $(-\triangle+I)^{-s}=T^{-2s}$ on $L^2(M)$ is continuous, $Z_s\in C(M\times M)$ and can also be written
 in terms of the spectral decomposition of
 $(-\triangle+I)$ as
 \begin{equation*}
 	Z_s(p,q) = \sum_{r} \frac{e_r(p)\overline{e_r(q)}}{\lambda_r^{s}},
 \end{equation*}
 where $e_r\in L^2(M)$ are a basis of smooth orthonormal eigenfunctions for $-\triangle+I$ with corresponding eigenvalues $\lambda_r$ -- see~\cite{Seeley:1967}. In particular,
 \begin{equation*}
 	v_p = \sum_{r} \frac{\overline{e_r(p)}e_r}{\lambda_r^{s/2}}
 \end{equation*}
 belongs to $L^2(M)$ with $\|v_p\|^2 = Z_s(p,p)$ by the Pythagoras theorem.
 For $\Re s>n/2$, it follows that $\|e_r\|_\infty \le C_s\big|\lambda_r^{s/2}\big|$, where $C_s=\sup_{p\in M}|Z_s(p,p)|^{1/2}$. Moreover, if $f\in C^\infty(M)$,
 \begin{equation*}
 	|\langle e_r \mid f \rangle|=\lambda_r^{-k}\big|\ip{e_r}{T^k f}\big|\le \lambda^{-k}_r\|f\|_{H^k(M)},\qquad k\in\NN,
 \end{equation*}
 so $\ip{e_r}{f}$ decays faster than any inverse power of $\lambda_r$ and
 $\ip{v_p}{T^s f} = \sum_r e_r(p)\ip{e_r}{f}= f(p)$ for all (not merely almost all) $p\in M$. Consequently,
 \begin{equation*}
 	\|f\|_\infty =\sup_{p\in M} |\langle v_p \mid f \rangle|\le C_s \|f\|_{H^s(M)}, \qquad f\in C^\infty(M).
 \end{equation*}
 By density of $C^\infty(M)$ in $H^s(M)$, it follows
 that there is a continuous embedding $H^s(M)\to C(M)$ if $s>n/2$; considering $\|Pf\|_\infty$ in a~similar way for differential operators $P$, one sees that $H^s(M)$ is continuously embedded in $C^k(M)$ for all $s>k+n/2$. It follows that $\cap_{s\in\RR} H^s(M)=C^\infty(M)$.

 If $u\in \DD'(M)= \GoinX{\Omega_1}'$ is a distribution then, owing to compactness of $M$, there is a~partial differential operator $P$ so that $|u(\nu f)|\le \|Pf\|_\infty \le C_s \|Pf\|_{H^s(M)}$ for any $s>n/2$ and all $f\in C^\infty(M)$. Thus, with $t=s+\operatorname{ord} (P)$, $u(\nu f)= \big\langle \overline{f},w\big\rangle= \ip{T^t \overline{f}}{T^{-t}w}_{L^2(M)}$ for some $w\in H^{-t}(M)$, which is evidently compatible with the embedding of $H^{-t}(M)$ in $\DD'(M)$
 described earlier. In this sense,
 \begin{equation*}
 	\DD'(M) = \bigcup_{s\in\RR} H^{s}(M).
 \end{equation*}

 \subsubsection{General manifolds}

 For a general (not-necessarily compact) smooth manifold $M$ with at most finitely many components,
 we proceed as follows. If $K\subset M$ is compact and topologically regular,\footnote{Note that if $K$ is any compact subset, then the closure of the interior of $K$ is compact and topologically regular.} we may
 diffeomorphically identify $K$ with a compact subset $\hat{K}$ of a compact manifold $N$ (for example, by `doubling' a compact set that contains $K$ in its interior and has a smooth boundary~\cite{BaerWafo:2015}; for~$K$ contained in a coordinate chart one could take $N$ to be a torus). Then the Sobolev space~$H^s_K(M)$ is defined as the completion of
 $C^\infty_K(M)$ with respect to the pull back of (some choice of) the~$H^s(N)$ norm. The space $H^s_K(M)$ can be identified with a subspace of $\DD'_K(M)$, the distributions with support contained in $K$. As a topological vector space, it is independent of the choices used in its construction. However, by making such a choice one may
 endow $H^s_K(M)$ with a norm, denoted $\|\cdot\|_{H^s_K(M)}$, and indeed a compatible Hilbert space structure.
 Estimates of the form
 $\|f\|_{H^k_K(M)}\le C\|f\|_{K,k}$, $f\in C^\infty_K(M)$, and hence continuity of
 the embedding $C^k_K\to H^k_K$, carry over from the compact case for each fixed $k\in\NN_0$, as does compactness of the embedding $H^{s}_K(M)\to H^{t}_K(M)$ for $s>t$
 and the continuous embedding of $H^s_K(M)$ in $C^k_K(M)$ for $s>k+n/2$,
 where again $n$ is the maximum dimension of any component of $M$.
 Consequently one has $\cap_{s\in\RR} H^s_K(M)=C^\infty_K(M)$.

 Next, we define (abbreviating `compact and topologically regular' by c.t.r.)
 \begin{equation*}
 	H_0^s(M) = \bigcup_{\text{c.t.r. $K\subset M$}} H^s_K(M),
 \end{equation*}
 with the locally convex inductive limit topology.
 As $M$ admits countable compact exhaustions (and consequently, countable c.t.r.\ exhaustions), $H_0^s(M)$ may be realised as a countable strict inductive limit -- i.e., it is a $LF$ space. A fact to be used later is that, by~\cite[Proposition~14.6]{Treves:TVS} (see also Proposition~4 in~\cite{DieudonneSchwartz:1949}) a subset $B$ of $H_0^s(M)$ is bounded if and only if $B$ is a bounded subset of~$H^s_K(M)$ for some compact $K$. It is easily shown that
 \begin{equation*}
 	\EE'(M) = \bigcup_{s\in\RR} H_0^{s}(M).
 \end{equation*}

 Finally, the local Sobolev space $H^s_\loc(M)$ is defined as
 \begin{equation*}
 	H^s_\loc(M) = \big\{u\in\DD'(M)\colon \text{$\chi u\in H^s_0(M)$ for all $\chi\in C_0^\infty({M})$}\big\},
 \end{equation*}
 and is equipped with the Fr\'echet topology induced by the seminorms $\|\chi \cdot\|_{H^s_{K}(M)}$ as $K$ runs over compact topologically regular subsets of $M$ and $\chi$ runs over $C^\infty_K(M)$. The inclusion
 $H^s_0(M)\hookrightarrow H^s_\loc(M)$ is continuous for all $s$ and indeed we have
 \begin{equation*}
 	H^s_0(M) = H^s_\loc(M) \cap \EE'(M).
 \end{equation*}
 Thus, if $M$ is compact, $H^s_\loc(M)$ and $H^s_0(M)$ coincide as topological vector spaces. The construction above produces the same spaces as the chart-based approach taken in~\cite{Hormander3}.

 The Sobolev embedding theorems already mentioned entail the existence of continuous embeddings of $C^k(M)$ in $H^k_\loc(M)$ for all $k\in\NN_0$ and of $H^s_\loc(M)$ in $C^k(M)$ for all $s>k+n/2$, where~$n$ is the maximum dimension of any component of $M$. Consequently, $\cap_{s\in\RR} H^s_\loc(M)=C^\infty(M)$.

 \subsection{The topology of bounded convergence}\label{appx:top_bded_cvgnce}

 A general reference for the following is Tr\`eves~\cite[Chapter~32]{Treves:TVS}.

 If $E$ and $F$ are Hausdorff locally convex topological spaces then the topology of bounded convergence on $\LL(E,F)$, the space of all continuous linear maps $E\to F$, is defined by the neighbourhood base of zero, consisting of sets
 \begin{equation*}
 	\Uf(B;V) = \{T\in \LL(E,F)\colon T(B)\subset V\}
 \end{equation*}
 as $B$ runs over the bounded subsets of $E$ and $V$ runs over any neighbourhood base of zero in the topology of $F$. The notation $\LL_b(E,F)$ denotes $\LL(E,F)$ equipped with the topology of bounded convergence.
 Thus a net $T_\alpha$ converges to $0$ in $\LL_b(E,F)$
 if and only if, for every bounded $B\subset E$ and neighbourhood base set $V$,
 $T_\alpha$ eventually maps $B$ into $V$. This topology makes $\LL(E,F)$ Hausdorff and locally convex (inherited from $F$~\cite[p.~336]{Treves:TVS}). We record some basic facts.
 \begin{Lemma}\label{lem:Lb_basic}
 	Let $E$, $F$, $G$ be Hausdorff locally convex topological spaces. $(a)$ If $T\in \LL(E,F)$ and the net $S_\alpha\to S$ in $\LL_b(F,G)$ then $S_\alpha T\to ST$ in $\LL_b(E,G)$; $(b)$ if the net $T_\alpha\to T$ in $\LL_b(E,F)$ and $S\in \LL(F,G)$ then
 	$ST_\alpha\to ST$ in $\LL_b(E,G)$; $(c)$ if $T_\alpha\to T$ in $\LL_b(E,F)$ and $\Ran T_\alpha\subset \hat{F}$, where $\hat{F}$ is a topological subspace of $F$, then $T_\alpha\to T$ in $\LL_b(E,\hat{F})$; $(d)$ if $F$ is barrelled and $T_n\to T$ and $S_n\to S$ are convergent \emph{sequences} in $\LL_b(E,F)$ and $\LL_b(F,G)$, then $S_n T_n\to ST$ in~$\LL_b(E,G)$.
 \end{Lemma}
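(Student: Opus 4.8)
The plan is to argue directly from the defining neighbourhood base of $\LL_b(E,F)$: a basic neighbourhood of zero is $\Uf(B;V)=\{T\in\LL(E,F):T(B)\subset V\}$, where $B$ ranges over the bounded subsets of $E$ and $V$ over a zero-neighbourhood base of $F$. Parts $(a)$, $(b)$, $(c)$ are then short ``neighbourhood chases'' using, respectively, that continuous linear maps send bounded sets to bounded sets, that continuous linear maps pull zero-neighbourhoods back to zero-neighbourhoods, and that a subspace carries the trace topology. Part $(d)$ is the only statement with real content, because composition $\LL_b(F,G)\times\LL_b(E,F)\to\LL_b(E,G)$ is not jointly continuous in general; it will require the Banach--Steinhaus theorem, which is where barrelledness of $F$ enters.

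For $(a)$, fix $B\subset E$ bounded and $W$ a zero-neighbourhood in $G$; then $T(B)$ is bounded in $F$, so convergence $S_\alpha\to S$ in $\LL_b(F,G)$ gives $(S_\alpha-S)(T(B))\subset W$ for $\alpha$ large, i.e.\ $(S_\alpha T-ST)(B)\subset W$. For $(b)$, fix $B$ bounded in $E$ and $W$ a zero-neighbourhood in $G$; continuity of $S$ yields a zero-neighbourhood $V$ in $F$ with $S(V)\subset W$, and for $\alpha$ large $(T_\alpha-T)(B)\subset V$, hence $(ST_\alpha-ST)(B)=S\big((T_\alpha-T)(B)\big)\subset W$. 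For $(c)$, a basic zero-neighbourhood of $\hat F$ is $V\cap\hat F$ with $V$ a zero-neighbourhood of $F$; since all $T_\alpha$ (and, in the situations where this lemma is applied, the limit $T$ as well) take values in $\hat F$, the inclusion $(T_\alpha-T)(B)\subset V$ furnished by convergence in $\LL_b(E,F)$ automatically refines to $(T_\alpha-T)(B)\subset V\cap\hat F$, which is convergence in $\LL_b(E,\hat F)$; one also notes that each $T_\alpha-T$ really is continuous as a map $E\to\hat F$, by the subspace topology.

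For $(d)$ I would split $S_nT_n-ST=S_n(T_n-T)+(S_n-S)T$. The second summand converges to $0$ in $\LL_b(E,G)$ by part $(a)$, with the fixed operator $T$ on the right and $S_n\to S$. For the first summand, fix $B\subset E$ bounded and $W$ a zero-neighbourhood in $G$. The sequence $\{S_n\}$, being convergent in $\LL_b(F,G)$, is bounded there, hence in particular pointwise bounded; since $F$ is barrelled, the Banach--Steinhaus theorem makes $\{S_n\}$ \emph{equicontinuous}, so there is a single zero-neighbourhood $V$ in $F$ with $S_n(V)\subset W$ for all $n$. Because $T_n\to T$ in $\LL_b(E,F)$, there is $N$ with $(T_n-T)(B)\subset V$ for all $n\ge N$, and then $S_n\big((T_n-T)(B)\big)\subset S_n(V)\subset W$ for $n\ge N$; thus $S_n(T_n-T)\to0$ in $\LL_b(E,G)$, and adding the two summands gives $S_nT_n\to ST$.

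The main obstacle is exactly this joint step in $(d)$: one cannot appeal to continuity of composition, and the proof genuinely uses (i) the restriction to \emph{sequences}, which makes $\{S_n\}$ a bounded family, and (ii) barrelledness of $F$, which converts that boundedness into the equicontinuity providing the uniform-in-$n$ neighbourhood $V$. The precise form of Banach--Steinhaus needed (a pointwise bounded subset of $\LL_b(F,G)$ with $F$ barrelled is equicontinuous) can simply be quoted from Tr\`eves; everything else is routine manipulation of the defining neighbourhood base.
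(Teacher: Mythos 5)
Your proof is correct and follows essentially the same route as the paper's: parts $(a)$--$(c)$ are the same neighbourhood chases (the paper reduces to $T=0$ or $S=0$ first, which is cosmetic), and part $(d)$ uses the identical decomposition plus Banach--Steinhaus to get equicontinuity of $\{S_n\}$ from barrelledness of $F$. Your aside in $(c)$ about needing $\Ran T\subset\hat F$ for the conclusion to be well-posed is a genuine subtlety that the paper sidesteps by passing to $T=0$; it holds in the applications because there $\hat F$ is closed in $F$.
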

 Note that Hilbert, Banach, Fr\'echet and LF spaces are all barrelled~\cite[Chapter~33]{Treves:TVS}.
 \begin{proof}
 	$(a)$ It is enough to prove this in the case $S=0$; taking any bounded $B$ in $E$ with $0\in B$, and any $0$-neighbourhood $V$ in $G$, note that $T(B)$ is bounded in $F$ and deduce that $S_\alpha T(B)$ is eventually contained in $V$. Thus $S_\alpha T\to 0$ in $\LL_b(E,G)$. $(b)$ Without loss, suppose $T=0$ and with $B$ and $V$ as before, we note that $S^{-1}(V)$ is a $0$-neighbourhood in $F$ and again deduce that $ST_\alpha(B)$ is eventually contained in $V$, so $ST_\alpha\to 0$.
 	
 	$(c)$ Again, it is enough to treat $T=0$. Take any bounded $B\subset E$ and $0$-neighbourhood~$\hat{V}$ in~$\hat{F}$. Then $\hat{V}=\hat{F}\cap V$ for an $0$-neighbourhood~$V$ in $F$ because $\hat{F}$ carries the subspace topology. We know that, eventually, $T_\alpha(B)\subset V$, and as $\Ran T_\alpha\subset \hat{F}$, we have, eventually, that $T_\alpha(B)\subset V\cap\hat{F}= \hat{V}$. Thus $T_\alpha\to 0$ in $\LL_b(E,\hat{F})$.
 	
 	$(d)$ Suppose first that $T=0$ and let $0\in B$
 	be any bounded subset of $E$ and $V$ any $0$-neighbourhood in $G$. Then
 	the $S_n$, together with $S$, form a bounded subset in $\LL_b(F,G)$ (which, we recall, is Hausdorff). As $F$ is barrelled, they form an equicontinuous set of maps by the Banach--Steinhaus theorem~\cite[Theorem 33.1]{Treves:TVS}. Thus there exists
 	a $0$-neighbourhood $W$ in $F$ such that $S_n (W)\subset V$ for all $n$.
 	As $T_n(B)\subset W$ for all sufficiently large $n$, we have $S_nT_n(B)\subset W$ for such $n$. Thus $S_nT_n\to 0$ in this case. If $T\neq 0$, we now know
 	that $S_n(T_n-T)\to 0$, and as $S_n T\to ST$ we find $S_nT_n\to ST$ as required.
 \end{proof}

 Typical applications of Lemma~\ref{lem:Lb_basic}\,$(a)$,\,$(b)$ are to show, for instance, that a holomorphic function from a domain in $\CC$ to $\LL_b\big(H^s_0(M),H^s_K(M)\big)$ is also holomorphic as a function to $\LL_b\big(H^s_K\big)$ and
 $\LL_b(H^s_0(M))$ due to continuous embedding of $H_K^s(M)$ in $H^s_0(M)$. Part~$(d)$ has the following use.\looseness=-1
 \begin{Corollary}\label{cor:Leibniz}
 	Let $E$, $F$, $G$ be Hausdorff locally convex topological spaces, with $F$ being barrelled.
 	If $\lambda\mapsto T(\lambda)$ and $\lambda\mapsto S(\lambda)$ are functions from a domain in $\CC$ to~$\LL(E,F)$ and~$\LL(F,G)$ respectively, then $(a)$ if $T$ and $S$ are both continuous in the topology of bounded convergence $($at $\mu$$)$, so is $(ST)(\lambda)=S(\lambda)T(\lambda)$ $($at $\mu$$)$; $(b)$ if $T$ and $S$ are differentiable at $\mu$ in the topology of bounded convergence then so is $ST$, with a derivative given by the Leibniz rule
 	$(ST)'(\mu)=S'(\mu)T(\mu) + S(\mu)T'(\mu)$.
 \end{Corollary}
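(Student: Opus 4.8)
The plan is to reduce both assertions to Lemma~\ref{lem:Lb_basic}, exploiting the fact that the parameter domain is a subset of $\CC$, hence metrizable, so that a function of $\lambda$ valued in a topological space has a prescribed limit at $\mu$ if and only if it has that limit along every sequence $\lambda_n\to\mu$. This is the device that lets us invoke Lemma~\ref{lem:Lb_basic}\,$(d)$, which is stated only for sequences.

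For part $(a)$, I would fix an arbitrary sequence $\lambda_n\to\mu$ in the domain. Continuity of $T$ and $S$ at $\mu$ in the topology of bounded convergence gives $T(\lambda_n)\to T(\mu)$ in $\LL_b(E,F)$ and $S(\lambda_n)\to S(\mu)$ in $\LL_b(F,G)$; since $F$ is barrelled, Lemma~\ref{lem:Lb_basic}\,$(d)$ then yields $S(\lambda_n)T(\lambda_n)\to S(\mu)T(\mu)$ in $\LL_b(E,G)$, and as the sequence was arbitrary this establishes continuity of $ST$ at $\mu$.

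For part $(b)$, I would first record that differentiability at $\mu$ forces continuity there: writing $T(\lambda)-T(\mu)=(\lambda-\mu)\bigl((\lambda-\mu)^{-1}(T(\lambda)-T(\mu))\bigr)$, the difference quotients converge and so lie eventually in a bounded subset of the topological vector space $\LL_b(E,F)$, and scaling a bounded set by $\lambda-\mu\to 0$ gives $T(\lambda)\to T(\mu)$ (likewise $S(\lambda)\to S(\mu)$). I would then decompose
\begin{equation*}
\frac{S(\lambda)T(\lambda)-S(\mu)T(\mu)}{\lambda-\mu}
= \frac{S(\lambda)-S(\mu)}{\lambda-\mu}\,T(\lambda) + S(\mu)\,\frac{T(\lambda)-T(\mu)}{\lambda-\mu},
\end{equation*}
and evaluate the limit along an arbitrary sequence $\lambda_n\to\mu$: the first summand is the product of the convergent sequences $(\lambda_n-\mu)^{-1}(S(\lambda_n)-S(\mu))\to S'(\mu)$ in $\LL_b(F,G)$ and $T(\lambda_n)\to T(\mu)$ in $\LL_b(E,F)$, so Lemma~\ref{lem:Lb_basic}\,$(d)$ (again using barrelledness of $F$) sends it to $S'(\mu)T(\mu)$ in $\LL_b(E,G)$, while the second summand tends to $S(\mu)T'(\mu)$ in $\LL_b(E,G)$ by Lemma~\ref{lem:Lb_basic}\,$(b)$. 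Summing and passing back from sequences to the limit $\lambda\to\mu$ gives $(ST)'(\mu)=S'(\mu)T(\mu)+S(\mu)T'(\mu)$.

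I expect the only genuinely delicate point to be this sequential reduction, since Lemma~\ref{lem:Lb_basic}\,$(d)$ relies on a Banach--Steinhaus argument that needs a countable bounded family and therefore is not available for general nets; one must explicitly note that continuity and differentiability in $\lambda$ may be tested sequentially because the $\lambda$-domain is metrizable. Everything else is routine manipulation inside the locally convex spaces $\LL_b(\cdot,\cdot)$ using the elementary composition properties in Lemma~\ref{lem:Lb_basic}\,$(a)$,$(b)$,$(d)$.
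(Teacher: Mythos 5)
Your argument is correct and follows exactly the route the paper takes: the paper's proof consists of one sentence noting that first-countability of $\CC$ reduces continuity and differentiability to sequential statements, after which Lemma~\ref{lem:Lb_basic}\,$(d)$ and the standard Leibniz decomposition finish the job. You have simply written out the decomposition and the (correct) observation that differentiability implies continuity via scaling of a bounded set, which the paper leaves implicit.
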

 \begin{proof}
 	As $\CC$ is first-countable, questions of continuity and differentiability may be reduced to sequential considerations, and the result follows using Lemma~\ref{lem:Lb_basic}\,$(d)$ and the standard proof of the Leibniz formula.
 \end{proof}

 It is also useful to have some sufficient conditions for convergence in $\LL_b(E,F)$ topology where~$E$ and $F$ are Fr\'echet or countable strict inductive limits thereof (LF spaces). If $F$ is Fr\'echet, then the bounded subsets are precisely the subsets $B$ so that $\sup_{f\in B} \rho_j(f)<\infty$ for every seminorm~$\rho_j$ defining the topology of $F$, while sets $V_{j,\epsilon}=\{f\in F\colon\rho_j(f)<\epsilon\}$, for arbitrary~$\epsilon>0$ and defining seminorm $\rho_j$, provide a basis of neighbourhoods of zero. On the other hand, if $E$ is an LF space with defining sequence $E_n$ of Fr\'echet spaces, then
 the bounded subsets~$B$ of~$E$ comprise precisely those subsets that are bounded subsets of some $E_n$~\cite[Proposition~14.6]{Treves:TVS}, while a neighbourhood base is provided by convex sets $V\subset E$ so that
 each~$V\cap E_n$ is an open neighbourhood of zero in $E_n$.
 We recall that continuous linear maps between topological vector spaces preserve boundedness~\cite[Proposition~14.2]{Treves:TVS}. The following results are used in the text.
 \begin{Lemma} \label{lem:LbFrechet}
 	Suppose $T_\alpha\in \LL_b(F)$ is a net of operators on a Fr\'echet space $F$ with defining seminorms $\rho_j$. If, for each $j$, there exists $k(j)$,
 	such that for all $\epsilon>0$, it is eventually true that $\rho_j(T_\alpha f)<\epsilon\rho_{k(j)}(f)$ for all $f\in F$, then $T_\alpha\to 0$ in $\LL_b(F)$.
 \end{Lemma}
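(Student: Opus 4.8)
The plan is to verify the convergence $T_\alpha\to 0$ straight from the definition of the topology of bounded convergence: I must show that for every bounded $B\subset F$ and every basic $0$-neighbourhood $V$ of $F$, one has $T_\alpha(B)\subset V$ eventually.

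First I would use the facts just recalled for Fréchet spaces: the sets $V_{j,\epsilon}=\{f\in F:\rho_j(f)<\epsilon\}$ form a base of $0$-neighbourhoods, and a set $B\subset F$ is bounded precisely when $M_j:=\sup_{f\in B}\rho_j(f)<\infty$ for every $j$. So fix a bounded $B$, an index $j$, and $\epsilon>0$; it is enough to produce $\alpha_0$ with $T_\alpha(B)\subset V_{j,\epsilon}$ for all $\alpha\ge\alpha_0$. Let $k=k(j)$ be the index supplied by the hypothesis and set $\epsilon'=\epsilon/(M_k+1)>0$, where the $+1$ only serves to cover the degenerate case $M_k=0$. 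Applying the hypothesis with this $\epsilon'$ yields $\alpha_0$ such that $\rho_j(T_\alpha f)<\epsilon'\rho_k(f)$ for all $f\in F$ and all $\alpha\ge\alpha_0$; then for such $\alpha$ and any $f\in B$ we get $\rho_j(T_\alpha f)<\epsilon'\rho_k(f)\le\epsilon' M_k<\epsilon$, i.e.\ $T_\alpha f\in V_{j,\epsilon}$. Hence $T_\alpha(B)\subset V_{j,\epsilon}$ for $\alpha\ge\alpha_0$, and since $B$, $j$, $\epsilon$ were arbitrary, $T_\alpha\to 0$ in $\LL_b(F)$.

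I do not expect any real obstacle: the entire content is the single substitution above, and the only point that needs a moment's thought is that $\epsilon'$ must be chosen uniformly over $B$ (it depends on $j$ and on $B$ through $M_k$, but not on the individual $f\in B$), which is possible exactly because $M_k<\infty$ by boundedness of $B$. Should one prefer to treat the $V_{j,\epsilon}$ only as a subbase rather than a base of $0$-neighbourhoods, the reduction remains trivial: a general basic neighbourhood is a finite intersection of such sets, and if $T_\alpha(B)$ is eventually contained in each factor then, by directedness of the net, it is eventually contained in the intersection.
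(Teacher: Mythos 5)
Your proof is correct and is essentially identical to the paper's: both fix a bounded set and a basic neighbourhood $V_{j,\epsilon}$, use boundedness to bound $\sup_{f\in B}\rho_{k(j)}(f)$, and apply the hypothesis with a rescaled $\epsilon$. Your replacement of $C$ by $M_k+1$ is a minor tidy-up that avoids the degenerate division-by-zero case the paper does not explicitly address.
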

 \begin{proof}
 	Given any bounded set $B$ and
 	neighbourhood $V_{j,\epsilon}$, we set $C=\sup_{f\in B}\rho_{k(j)}(f)$ and apply the given property to $\epsilon/C$ to find that, eventually,{\samepage
 	\begin{equation*}
 		\rho_j(T_\alpha f)<\epsilon C^{-1} \rho_{k(j)}(f) \le \epsilon, \qquad \forall f\in B.
 	\end{equation*}
 	We have shown that, eventually, $T_\alpha\in \Uf(B;V_{j,\epsilon})$; as $j$ and $\epsilon$ were arbitrary, $T_\alpha\to 0$ in $\LL_b(F)$.}
 \end{proof}

 \begin{Corollary} \label{cor:HstoCinfty}
 	Consider a net $T_\alpha\in \LL\big(C^\infty_K\big)$
 	such that each $T_\alpha$ extends to an operator in~$\LL(H^s_K)$ $($which we also denote $T_\alpha$$)$ for all $s\ge s_*$. Suppose also that $T_\alpha\to 0$ in every $\LL(H^s_K)$, $s\ge s_*$. Then $T_\alpha\to 0$ in $\LL_b\big(C^\infty_K\big)$.
 \end{Corollary}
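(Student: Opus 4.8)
The plan is to apply Lemma~\ref{lem:LbFrechet} to the Fr\'echet space $F=C^\infty_K$, whose topology (since $K$ is compact) is generated by the single seminorms $\rho_k=\|\cdot\|_{K,k}$, $k\in\NN_0$. Thus it suffices to show that for each $k\in\NN_0$ there is an index $k'=k'(k)$ such that, for every $\epsilon>0$, the inequality $\|T_\alpha f\|_{K,k}<\epsilon\,\|f\|_{K,k'}$ holds eventually, uniformly in $f\in C^\infty_K$.

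First I would fix $k\in\NN_0$ and choose an integer $m\ge s_*$ with $m>k+n/2$, where $n$ is the maximum dimension of a connected component of $M$. The Sobolev embedding estimates recalled in Appendix~\ref{appx:topspaces} (continuity of $C^m_K\hookrightarrow H^m_K$ and of $H^m_K\hookrightarrow C^k_K$ for $m>k+n/2$) provide constants $C_1,C_2>0$, independent of $\alpha$ and $f$, with $\|g\|_{K,k}\le C_1\|g\|_{H^m_K}$ for $g\in H^m_K$ and $\|f\|_{H^m_K}\le C_2\|f\|_{K,m}$ for $f\in C^\infty_K$. Since $C^\infty_K=\bigcap_s H^s_K$ and, by hypothesis, each $T_\alpha$ acts on $C^\infty_K$ as the restriction of its continuous extension to $H^m_K$, for $f\in C^\infty_K$ one then has
\begin{equation*}
\|T_\alpha f\|_{K,k}\le C_1\|T_\alpha f\|_{H^m_K}\le C_1\|T_\alpha\|_{\LL(H^m_K)}\|f\|_{H^m_K}\le C_1C_2\,\|T_\alpha\|_{\LL(H^m_K)}\,\|f\|_{K,m}.
\end{equation*}

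Finally, because $T_\alpha\to 0$ in $\LL(H^m_K)$, i.e.\ $\|T_\alpha\|_{\LL(H^m_K)}\to 0$, for any $\epsilon>0$ one eventually has $C_1C_2\|T_\alpha\|_{\LL(H^m_K)}<\epsilon$, and hence $\|T_\alpha f\|_{K,k}<\epsilon\|f\|_{K,m}$ for all $f\in C^\infty_K$. Setting $k'(k)=m$ shows that the hypothesis of Lemma~\ref{lem:LbFrechet} is satisfied for every $k$, and the lemma yields $T_\alpha\to 0$ in $\LL_b(C^\infty_K)$. I do not anticipate a genuine obstacle: the only point to get right is choosing $m$ large enough to serve in both Sobolev embeddings ($m>k+n/2$) while staying in the range where convergence of $T_\alpha$ is assumed ($m\ge s_*$), together with the (standard) fact that the embedding constants $C_1,C_2$ are independent of $\alpha$.
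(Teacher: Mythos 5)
Your proof is correct and follows essentially the same route as the paper: bound the $C^\infty_K$ seminorm of $T_\alpha f$ by a Sobolev norm, factor through $\|T_\alpha\|_{\LL(H^m_K)}$, then bound the resulting Sobolev norm of $f$ by a $C^\infty_K$ seminorm, and invoke Lemma~\ref{lem:LbFrechet}. Choosing a single integer $m\ge s_*$ with $m>k+n/2$ to serve in both embeddings is a clean way to realise the paper's choices $s(j)$ and $k(s(j))$.
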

 \begin{proof}
Any $C^\infty_K$ seminorm is bounded by an $H^s_K$ norm and vice versa, so, for any
 	$C^\infty_K$-seminorm $\|\cdot\|_{K,j}$ we may estimate
 	\begin{equation*}
 		\|T_\alpha f\|_{K,j} \le c \| T_\alpha f\|_{H^{s(j)}_K} \le c\|T_\alpha\|_{\LL(H^{s(j)}_K)} \|f\|_{H^{s(j)}_K}
 		\le c' \|T_\alpha\|_{\LL(H^{s(j)}_K)} \|f\|_{K,k(s(j))}
 	\end{equation*}
 	for suitable choices of $s(j)$ and $k(s(j))$, uniformly in $f\in C^\infty_K$ and $\alpha$.
 	As $T_\alpha\to 0$ in $\LL\big(H^{s(j)}_K\big)$, for any $\epsilon>0$ it is eventually true that
 	$\|T_\alpha f\|_{K,j} \le \epsilon \|f\|_{K,k(s(j))}$ for all $f\in C^\infty_K$. Hence $T_\alpha\to 0$ in $\LL_b\big(C^\infty_K\big)$ by Lemma~\ref{lem:LbFrechet}.
 \end{proof}

 \begin{Lemma}\label{lem:LbLF}
 	Suppose $T_\alpha$ is a net of operators on the LF space $E=\bigcup_{n\in\NN} E_n$. If for some fixed $n$, one has $\Ran(T_\alpha)\subset E_n$ and $T_\alpha|_{E_m}\to 0$ in $\LL_b(E_m,E_n)$ for all $m$, then $T_\alpha\to 0$ in both $\LL_b(E,E_n)$ and $\LL_b(E)$.
 \end{Lemma}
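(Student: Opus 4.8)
The plan is to deduce both statements directly from the structure of the $LF$ space $E$ together with Lemma~\ref{lem:Lb_basic}. The two facts I would invoke about $E=\bigcup_{n\in\NN}E_n$ are that each $E_m$ is a topological subspace of $E$ (a standard property of countable strict inductive limits of Fr\'echet spaces) and that, by~\cite[Proposition~14.6]{Treves:TVS}, a subset of $E$ is bounded precisely when it is a bounded subset of some single stage $E_m$.

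First I would handle convergence in $\LL_b(E,E_n)$. Since every $T_\alpha$ is continuous on $E$ with range contained in the topological subspace $E_n$, its corestriction $\widetilde{T}_\alpha\colon E\to E_n$ is continuous. Given a bounded set $B\subset E$ and a $0$-neighbourhood $V$ in $E_n$, choose $m$ with $B$ bounded in $E_m$; the hypothesis $T_\alpha|_{E_m}\to 0$ in $\LL_b(E_m,E_n)$ then yields that eventually $\widetilde{T}_\alpha(B)=T_\alpha|_{E_m}(B)\subset V$, i.e.\ $\widetilde{T}_\alpha\in\Uf(B;V)$. As $B$ and $V$ were arbitrary, $\widetilde{T}_\alpha\to 0$ in $\LL_b(E,E_n)$. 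This step is also where the hypothesis is genuinely needed for every $m$, not just for $m=n$.

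To pass to $\LL_b(E)$, let $\iota\colon E_n\hookrightarrow E$ be the continuous inclusion, so that $T_\alpha=\iota\circ\widetilde{T}_\alpha$ as a map $E\to E$. Applying Lemma~\ref{lem:Lb_basic}\,$(b)$ with the fixed operator $S=\iota$ and the convergent net $\widetilde{T}_\alpha\to 0$ in $\LL_b(E,E_n)$ gives $T_\alpha=\iota\widetilde{T}_\alpha\to 0$ in $\LL_b(E,E)=\LL_b(E)$, as required.

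I do not anticipate a real obstacle: the content is entirely in unpacking the definition of the topology of bounded convergence together with the cited description of bounded subsets of $LF$ spaces. If anything needs care it is simply making sure the bounded set $B$ is handled at a single stage $E_m$ before invoking the hypothesis.
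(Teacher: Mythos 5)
Your argument is correct and matches the paper's proof essentially step for step: both establish $\LL_b(E,E_n)$-convergence by reducing a bounded set of $E$ to a single stage $E_m$ via \cite[Proposition~14.6]{Treves:TVS} and then applying the hypothesis, and both obtain $\LL_b(E)$-convergence by post-composing with the inclusion $E_n\hookrightarrow E$ (the paper states this informally, you invoke Lemma~\ref{lem:Lb_basic}\,$(b)$ explicitly, which is the same thing made precise).
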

 \begin{proof}
 	Consider any neighbourhood $V$ of zero in the standard neighbourhood base of $E_n$ and any bounded set $B\subset E$, necessarily obeying $B\subset E_m$ for some $m$. Then $T_\alpha$ eventually maps $B$ into $V$; as $B$ and $V$ were arbitrary, $T_\alpha\to 0$ in $\LL_b(E,E_n)$. Post-composing with the continuous embedding of $E_n$ in $E$, $T_\alpha\to 0$ in $\LL_b(E)$ as well.
 \end{proof}

\subsection*{Acknowledgements}
 	It is a pleasure to thank Rainer Verch for useful discussions at various stages of this work and Maximilian Ruep for a careful reading and comments on a draft of the manuscript. I~particularly thank Christian B\"ar for asking the question that prompted Theorem~\ref{thm:Fredholm} and also Lashi Bandara and other participants of the conference `Global analysis on manifolds' held in B\"ar's honour (Freiburg, September 2022) for useful remarks and conversations. I also thank the referees for valuable suggestions.

\pdfbookmark[1]{References}{ref}
\LastPageEnding


\begin{thebibliography}{99}
\footnotesize\itemsep=0pt

\bibitem{Amman_etal:2004}
Ammann B., Lauter R., Nistor V., Vasy A., Complex powers and non-compact
 manifolds, \href{https://doi.org/10.1081/PDE-120037329}{\textit{Comm. Partial Differential Equations}} \textbf{29} (2004),
 671--705, \href{https://arxiv.org/abs/math.OA/0211305}{arXiv:math.OA/0211305}.

\bibitem{Baer:2015}
B\"ar C., Green-hyperbolic operators on globally hyperbolic spacetimes,
 \href{https://doi.org/10.1007/s00220-014-2097-7}{\textit{Comm. Math. Phys.}} \textbf{333} (2015), 1585--1615,
 \href{https://arxiv.org/abs/1310.0738}{arXiv:1310.0738}.

\bibitem{BaerGinoux:2012}
B\"ar C., Ginoux N., Classical and quantum fields on {L}orentzian manifolds, in
 Global Differential Geometry, \textit{Springer Proc. Math.}, Vol.~17,
 \href{https://doi.org/10.1007/978-3-642-22842-1_12}{Springer}, Heidelberg, 2012, 359--400, \href{https://arxiv.org/abs/1104.1158}{arXiv:1104.1158}.

\bibitem{BaerWafo:2015}
B\"ar C., Wafo R.T., Initial value problems for wave equations on manifolds,
 \href{https://doi.org/10.1007/s11040-015-9176-7}{\textit{Math. Phys. Anal. Geom.}} \textbf{18} (2015), 7, 29~pages,
 \href{https://arxiv.org/abs/1408.4995}{arXiv:1408.4995}.

\bibitem{Bernal:2004gm}
Bernal A.N., S\'{a}nchez M., Smoothness of time functions and the metric
 splitting of globally hyperbolic spacetimes, \href{https://doi.org/10.1007/s00220-005-1346-1}{\textit{Comm. Math. Phys.}}
 \textbf{257} (2005), 43--50, \href{https://arxiv.org/abs/gr-qc/0401112}{arXiv:gr-qc/0401112}.

\bibitem{Bernal:2006xf}
Bernal A.N., S\'{a}nchez M., Globally hyperbolic spacetimes can be defined as
 `causal' instead of `strongly causal', \href{https://doi.org/10.1088/0264-9381/24/3/N01}{\textit{Classical Quantum Gravity}}
 \textbf{24} (2007), 745--749, \href{https://arxiv.org/abs/gr-qc/0611138}{arXiv:gr-qc/0611138}.

\bibitem{DappiaggiFinster:2020}
Dappiaggi C., Finster F., Linearized fields for causal variational principles:
 existence theory and causal structure, \href{https://doi.org/10.4310/MAA.2020.v27.n1.a1}{\textit{Methods Appl. Anal.}}
 \textbf{27} (2020), 1--56, \href{https://arxiv.org/abs/1811.10587}{arXiv:1811.10587}.

\bibitem{Davies_LOps}
Davies E.B., Linear operators and their spectra, \textit{Cambridge Stud. Adv.
 Math.}, Vol. 106, \href{https://doi.org/10.1017/CBO9780511618864}{Cambridge University Press}, Cambridge, 2007.

\bibitem{DieudonneSchwartz:1949}
Dieudonn\'e J., Schwartz L., La dualit\'e dans les espaces {$\mathcal{F}$} et
 {$(\mathcal{LF})$}, \href{https://doi.org/10.5802/aif.8}{\textit{Ann. Inst. Fourier (Grenoble)}} \textbf{1} (1949),
 61--101.

\bibitem{DuiHoer_FIOii:1972}
Duistermaat J.J., H\"ormander L., Fourier integral operators.~{II},
 \href{https://doi.org/10.1007/BF02392165}{\textit{Acta Math.}} \textbf{128} (1972), 183--269.

\bibitem{DuetschFredenhagen:2003}
D\"utsch M., Fredenhagen K., The master {W}ard identity and generalized
 {S}chwinger--{D}yson equation in classical field theory, \href{https://doi.org/10.1007/s00220-003-0968-4}{\textit{Comm. Math.
 Phys.}} \textbf{243} (2003), 275--314, \href{https://arxiv.org/abs/hep-th/0211242}{arXiv:hep-th/0211242}.

\bibitem{FewsterJubbRuep:2022}
Fewster C.J., Jubb I., Ruep M.H., Asymptotic measurement schemes for every
 observable of a quantum field theory, \href{https://doi.org/10.1007/s00023-022-01239-0}{\textit{Ann. Henri Poincar\'e}}
 \textbf{24} (2023), 1137--1184, \href{https://arxiv.org/abs/2203.09529}{arXiv:2203.09529}.

\bibitem{FewVer:20xxb}
Fewster C.J., Verch R., Measurement of quantum fields on a locally
 noncommutative spacetime, {i}n preparation.

\bibitem{FewVer:20xx}
Fewster C.J., Verch R., A nonlocal generalization of {G}reen hyperbolicity,
 {i}n preparation.

\bibitem{Geroch_spinorI:1968}
Geroch R., Spinor structure of space-times in general relativity.~{I},
 \href{https://doi.org/10.1063/1.1664507}{\textit{J.~Math. Phys.}} \textbf{9} (1968), 1739--1744.

\bibitem{Gramsch:1970}
Gramsch B., Meromorphie in der {T}heorie der {F}redholmoperatoren mit
 {A}nwendungen auf elliptische {D}ifferentialoperatoren, \href{https://doi.org/10.1007/BF01350813}{\textit{Math. Ann.}}
 \textbf{188} (1970), 97--112.

\bibitem{HawRej:2020}
Hawkins E., Rejzner K., The star product in interacting quantum field theory,
 \href{https://doi.org/10.1007/s11005-020-01262-4}{\textit{Lett. Math. Phys.}} \textbf{110} (2020), 1257--1313,
 \href{https://arxiv.org/abs/1612.09157}{arXiv:1612.09157}.

\bibitem{Hormander1}
H\"ormander L., The analysis of linear partial differential operators.~{I}.
 {D}istribution theory and Fourier analysis, \textit{Classics Math.}, \href{https://doi.org/10.1007/978-3-642-61497-2}{Springer}, Berlin,
 2003.

\bibitem{Hormander3}
H\"ormander L., The analysis of linear partial differential operators.~{III}.
 Pseudo-differential operators, \textit{Grundlehren Math. Wiss.}, Vol.~274,
 \href{https://doi.org/10.1007/978-3-540-49938-1}{Springer}, Berlin, 2007.

\bibitem{HornJohnson_vol1}
Horn R.A., Johnson C.R., Matrix analysis, \href{https://doi.org/10.1017/CBO9780511810817}{Cambridge University Press},
 Cambridge, 1990.

\bibitem{IslamStrohmaier:2020}
Islam O., Strohmaier A., On microlocalization and the construction of {F}eynman
 propagators for normally hyperbolic operators, \href{https://arxiv.org/abs/2012.09767}{arXiv:2012.09767}.

\bibitem{Kaballo:2012}
Kaballo W., Meromorphic generalized inverses of operator functions,
 \href{https://doi.org/10.1016/j.indag.2012.05.001}{\textit{Indag. Math.~(N.S.)}} \textbf{23} (2012), 970--994.

\bibitem{LechnerVerch:2015}
Lechner G., Verch R., Linear hyperbolic {PDE}s with noncommutative time,
 \href{https://doi.org/10.4171/JNCG/214}{\textit{J.~Noncommut. Geom.}} \textbf{9} (2015), 999--1040,
 \href{https://arxiv.org/abs/1307.1780}{arXiv:1307.1780}.

\bibitem{LechnerWaldmann:2016}
Lechner G., Waldmann S., Strict deformation quantization of locally convex
 algebras and modules, \href{https://doi.org/10.1016/j.geomphys.2015.09.013}{\textit{J.~Geom. Phys.}} \textbf{99} (2016), 111--144,
 \href{https://arxiv.org/abs/1109.5950}{arXiv:1109.5950}.

\bibitem{Peetre:1960}
Peetre J., R\'ectification \`a l'article ``{U}ne caract\'erisation abstraite
 des op\'erateurs diff\'erentiels'', \href{https://doi.org/10.7146/math.scand.a-10598}{\textit{Math. Scand.}} \textbf{8} (1960),
 116--120.

\bibitem{ReedSimon:vol1}
Reed M., Simon B., Methods of modern mathematical physics. {I}.~{F}unctional
 analysis, 2nd ed., Academic Press, New York, 1980.

\bibitem{Seeley:1967}
Seeley R.T., Complex powers of an elliptic operator, in Singular {I}ntegrals
 ({P}roc. {S}ympos. {P}ure {M}ath., {C}hicago, {I}ll., 1966), \href{https://doi.org/10.1090/pspum/010/0237943}{American Mathematical Society}, Providence, R.I., 1967, 288--307.

\bibitem{Treves:TVS}
Tr\`eves F., Topological vector spaces, distributions and kernels, Dover
 Publications, Inc., Mineola, NY, 2006.

\end{thebibliography}
\end{document}